\DeclareMathOperator{\E}{\mathbb{E}}
\DeclareMathOperator{\I}{\mathbb{I}}
\DeclareMathOperator{\R}{\mathbb{R}}
\DeclareMathOperator{\N}{\mathbb{N}}
\DeclareMathOperator{\PM}{\mathcal P}
\DeclareMathOperator{\argmax}{\arg\max}
\DeclareMathOperator{\argmin}{\arg\min}
\newcommand{\norm}[1]{\left\lVert#1\right\rVert}
\newtheorem{theorem}{Theorem}[section]
\newtheorem{lemma}[theorem]{Lemma}
\newtheorem{proposition}[theorem]{Proposition}
\newtheorem{definition}[theorem]{Definition}
\newtheorem{conjecture}[theorem]{Conjecture}
\begin{document}

\preprint{APS/123-QED}

\title{Integer Traffic Assignment Problem: Algorithms and Insights on Random Graphs} % Force line breaks with \\

\author{Rayan Harfouche}
\thanks{Equal contribution}

\author{Giovanni Piccioli}
\thanks{Equal contribution}

\email{giovannipiccioli@gmail.com}
\author{Lenka Zdeborová}
\affiliation{École Polytechnique Fédérale de Lausanne (EPFL)\\
  Statistical Physics of Computation Laboratory\\}

%\date{\today}% It is always \today, today,
%  but any date may be explicitly specified
\begin{abstract}
Path optimization is a fundamental concern across various real-world scenarios, ranging from traffic congestion issues to efficient data routing over the internet. The Traffic Assignment Problem (TAP) is a classic continuous optimization problem in this field. This study considers the Integer Traffic Assignment Problem (ITAP), a discrete variant of TAP. ITAP involves determining optimal routes for commuters in a city represented by a graph, aiming to minimize congestion while adhering to integer flow constraints on paths. This restriction makes ITAP an NP-hard problem. While conventional TAP prioritizes repulsive interactions to minimize congestion, this work also explores the case of attractive interactions, related to minimizing the number of occupied edges.

We present and evaluate multiple algorithms to address ITAP, including the message passing algorithm of \cite{yeung2013physics}, a greedy approach, simulated annealing, and relaxation of ITAP to TAP. Inspired by studies of random ensembles in the large-size limit in statistical physics, comparisons between these algorithms are conducted on large sparse random regular graphs with a random set of origin-destination pairs. 
Our results indicate that while the simplest greedy algorithm performs competitively in the repulsive scenario, in the attractive case the message-passing-based algorithm and simulated annealing demonstrate superiority.  
We then investigate the relationship between TAP and ITAP in the repulsive case. We find that, as the number of paths increases, the solution of TAP converges toward that of ITAP, and we investigate the speed of this convergence. 

Depending on the number of paths, our analysis leads us to identify two scaling regimes: in one the average flow per edge is of order one, and in another the number of paths scales quadratically with the size of the graph, in which case the continuous relaxation solves the integer problem closely.
\end{abstract}
\maketitle
\section{Introduction}

\subsection{Motivation of the study}
Ranging from congestion problems in traffic networks to routing problems for data packages over the internet,
path optimization is ubiquitous in real-world problems. The traffic assignment problem (TAP) is one of the most prominent routing problems, with a long history dating back to the 50' \cite{doi:10.1680/ipeds.1952.11259}. 
In this work, we consider the integer traffic assignment problem (ITAP), a variant of TAP where the flows on edges are constrained to be integers. We are given a graph representing the roads in a city and a list of origin-destination (OD) pairs, each one representing a commuter that must move from one point to the other. For each OD pair we must find a path connecting the origin node to the destination node, while minimizing some congestion measure on the edges of the graph. Intuitively we want to distribute the paths (commuters) so that no edge (road) becomes too congested. While in TAP, for each OD pair, we are allowed to split the flow across several paths each carrying a fractional amount of the flow, in ITAP, each path must carry an integer amount of flow, that is, an integer number of commuters. This restriction makes ITAP an NP-hard problem. From a physics standpoint one can see this system as a system of variable-length interacting polymers, where each polymer is clamped at its endpoints and is either attracted to or repelled from others \cite{daoud1975solutions}. In TAP, the interaction between paths is commonly considered repulsive, the idea being that we want to minimize the congestion. In our work, we also consider the attractive case, where it is advantageous to merge paths as much as possible. This setting is motivated by network-building games in which one must build a graph that connects the OD pairs efficiently while minimizing the number of links used \cite{roughgarden2010algorithmic}. Think, for example, of a city that has to build road infrastructure that reaches every house efficiently while minimizing the length of constructed roads. In such formulation one clearly incentivizes paths to share edges, that is, avoiding building a road between every pair of houses. In another example, one aims to route internet traffic through as few nodes as possible so as to turn off the rest and save energy \cite{hinton2011power}.

Our study aims to provide a systematic comparison of various algorithms for both the repulsive and attractive versions of the problem. We perform this study on random instances of the problem where the underlying graph is sparse and random with a varying average degree, the origin-destination pairs are selected uniformly at random among all the nodes, and we vary the number of the paths considered. Further, we vary the size of the graph while keeping the average degree constant. Considering such random benchmarks is common in the field of statistical physics of disordered systems where physical systems \cite{mezard1987spin} are often modelled using a random realization of the disorder (here, the graph and the origin-destination pairs) and increasing the size of the system. 
Studying these random ensembles of instances then aims to identify properties that are prone to be universal in the sense that classes of real-world instances will exhibit analogous properties. The study of random instances of the ITAP problem was initiated in \cite{yeung2013physics}, and here we extend it by examining more algorithms.   

From an algorithmic point of view, we are particularly interested in the performance gap between simple greedy heuristics and more sophisticated algorithms, such as the message passing algorithm from \cite{yeung2013physics} or simulated annealing, that aim to take into account the interactions between the paths in a more detailed manner. Another very common manner to solve an integer optimization problem is to relax it to a version with continuous variables and investigate the performance gap between the relaxed and integer problems.   

\subsection{Problem definition}
In this section, we set the notation and define ITAP.
Let $G=(V, E)$ be an undirected graph. This is the graph over which paths will be routed.
Let $N=|V|$ be the number of nodes, which we index with numbers in $[N]\coloneqq\{1,2,\dots, N\}$. The origin-destination pairs 
$\{(s^\mu,t^\mu)\}_{\mu\in[M]}$ are also given. A path $\mu$ goes from its origin node $s^\mu$ to its destination node $t^{\mu}$. There are a total of $M$ paths, $\mu =1 \dots M$, routed over the network. 
We name $\pi^\mu$ the path joining $s^\mu$ and $t^\mu$.
Each path $\pi^\mu$  is represented as a sequence of adjacent nodes in $G$: $ \pi^\mu=(\pi^\mu (1)=s^\mu, \pi^\mu (2),\dots, \pi^\mu(L(\pi))=t^\mu)$, where $L( \pi^\mu)$ is the number of nodes in the path. Further, we use the expression $e\in \pi$ to indicate that $\pi$ traverses edge $e$.
The flow $I_e$ through an edge $e\in E$ is the number of paths that pass through $e$. Mathematically we can write this as $$I_e\coloneqq\sum_{\mu\in[M]} \mathbb{I}[e\in \pi^\mu]\, .$$
We aim to minimize the following energy function
\begin{equation}
\label{eq:energy}
    H( \pmb I)=\sum_{e\in E} \phi(I_e),
\end{equation}
We refer to $\phi:\R\mapsto[0,\infty)$ as the nonlinearity. $\phi(I_e)$ represents the total cost of having flow $I_e$ on edge $e$. We always assume $phi$ to be an increasing function. The cost is equally distributed among paths that pass through the edge: each one incurs in a cost $\phi(I_e)/I_e$. 

When $\phi$ is convex, minimizing $H$ will have a repulsive effect on the paths (this is the usual traffic scenario in TAP). If instead $\phi$ is concave, minimizing $H$ has an attractive effect on the paths, leading many paths to share the same edge. To see this, suppose $\phi(0)=0$, and
consider the cost of two paths (each carrying one unit of flow) sharing an edge vs. passing through different edges. In the first case the contribution to the energy is $\phi(2)$, while in the second case it is $2\phi(1)$. If $\phi$ is convex then $\phi(2)\geq 2\phi(1)$ therefore it is convenient to not share the edge. Vice versa if $\phi$ is concave the configuration with the paths overlapping is energetically convenient.

Several algorithms that we use to minimize the energy \eqref{eq:energy} rely on defining a Gibbs measure $\propto e^{-\beta H}$ over all possible paths and then studying its properties to find minima of $H$. More precisely, the measure is defined as 
\begin{equation}
    \label{eq:gibbs_measure}
    P_\beta\left(\{\pi^\mu\}_{\mu\in[M]}|\{s^\mu,t^\mu\}_{\mu\in[M]}\right)=\frac{e^{-\beta H(\pmb{I})}}{Z\left(\beta,\{s^\mu,t^\mu\}_{\mu\in[M]}\right)}\left[\prod_{\mu\in[M]}\I[\pi^\mu(1)=s^\mu]\I[\pi^\mu(L(\pi^\mu))=t^\mu]\I[\pi^\mu \text{ is a SAW}]\right].
\end{equation}
$\beta$ is an inverse temperature parameter: for $\beta=0$ the distribution is uniform across all self-avoiding walks (SAWs) with constrained endpoints, while for $\beta\to\infty$ it is supported on the global minima of $H$.
The three constraints ensure respectively that the path has the correct origin and destination and that it is self-avoiding, i.e., it does not visit the same node more than once. Finally $Z\left(\beta,\{s^\mu,t^\mu\}_{\mu\in[M]}\right)$ is a normalization constant.

The joint distribution of all the paths is very complex, and hard to work with in practice. Therefore many algorithms rely instead on considering one path (say $\pi^\nu$) at a time and approximating in different ways the conditional distribution of the path when all the others are fixed. We now derive the form of the conditional probability $P_\beta\left(\pi^\nu|\{\pi^\mu\}_{\mu\in[M]\backslash\nu},\{s^\mu,t^\mu\}_{\mu\in[M]}\right)$.

We first isolate the contribution of path $\nu$ in $H$. Define $I_e^{\backslash\nu}$ to be the flow on edge $e$ in absence of path $\nu$. Mathematically we have $I_e^{\backslash\nu}=\sum_{\mu=1, \mu\neq \nu}^M \I[e\in\pi^\mu]$. We can write $H$ in the following form.
\begin{align}
\label{eq:energy_decomp}
    H(\pmb I)&=\sum_{e\in E}\phi(I_e)=\sum_{e\in E} \phi(I^{\backslash\nu}+\mathbb{I}[e\in\pi^\nu])=\sum_{e\in E}\phi(I_e^{\backslash\nu})+\sum_{e\in E}\mathbb{I}[e\in\pi^\nu]\left(\phi(I_e^{\backslash\nu}+1)-\phi(I_e^{\backslash\nu})\right)=\\&=H(\pmb I^{\backslash\nu})+\Delta H(\pi^\nu; \pmb I^{\backslash\nu}),
\end{align}
with $\Delta H(\pi^\nu; \pmb I^{\backslash\nu})=\sum_{e\in E}\mathbb{I}[e\in\pi^\nu]\left(\phi(I_e^{\backslash\nu}+1)-\phi(I_e^{\backslash\nu})\right)$.
Notice that the only dependence of $\pi^\nu$ is contained in the term $\Delta H(\pi^\nu; \pmb I^{\backslash\nu})$. This implies that the conditional distribution has the form
\begin{align}
\label{eq:conditional_gibbs}
     P_\beta\left(\pi^\nu|\{\pi^\mu\}_{\mu\in[M]\backslash\nu},\{s^\mu,t^\mu\}_{\mu\in[M]}\right)=\frac{e^{-\beta \Delta H(\pi^\nu; \pmb I^{\backslash\nu})}}{Z^\nu\left(\beta,\pmb I^{\backslash\nu}, s^\nu,t^\nu\right)}\I[\pi^\nu(1)=s^\nu]\I[\pi^\nu(L(\pi^\nu))=t^\nu]\I[\pi^\nu \text{ is a SAW}],
\end{align}
where $Z^\nu\left(\beta,\pmb I^{\backslash\nu}, s^\nu,t^\nu\right)$ is again a normalization constant.
In words, $\Delta H(\pi^\nu; \pmb I^{\backslash\nu})$ is the energy landscape seen by path $\pi^\nu$ when all other paths are fixed. 

\subsection{Relaxation of the ITAP}
\label{sec:relaxed_problem_TAP}
In this section, we first explain what relaxation procedure transforms ITAP into TAP and then illustrate some properties of TAP that we will exploit in the numerical analysis. We start by defining the demand matrix $D\in\R^{N\times N}$, with elements $D_{xy}=\sum_{\mu\in[M]}\I[s^\mu=x]\I[t^\mu=y] $. In words, $D_{xy}$ counts the number of origin-destination (OD) pairs that are equal to $(x,y)$. For simplicity, we say that each path corresponds to one unit of flow. $D_{xy}$ then tells us how many units of flow must be routed from $x$ to $y$. An instance of TAP or ITAP is uniquely defined by giving the graph $G$ and the demand matrix $D$.
For each pair of nodes $(x,y)$ we define $\Pi_{xy}$ to be the set of paths going from $x$ to $y$. We indicate elements of $\Pi_{xy}$ with $\pi_{xy}^a$, $a\in\{1,\dots,|\Pi_{xy}|\}$. Similarly the flow carried by $\pi_{xy}^a$ is $h_{xy}^a$. We will refer to $\{h_{xy}^a\}$ as the path flows, and to $\{I_e\}_{e\in E}$ as the edge flows. We now give the definition of TAP and ITAP. 
\begin{equation}
\label{eq:TAP_ITAP_formal}
\begin{aligned}
& \underset{}{\text{minimize}}\qquad H(\pmb I) \\
&  \text{subject to}\\
&  I_e=\sum_{(x,y)\in(V\times V)} \sum_{a\in[|\Pi_{xy}|]} h^a_{xy} \I[e\in\pi^a_{xy}] \quad \forall e\in E\\
&  \sum_{a\in[|\Pi_{xy}|]} h_{xy}^a=D_{xy} \quad \forall (x,y)\in(V\times V)\\
&  h_{xy}^a \in\{0,1,\dots\} \quad \forall (x,y)\in(V\times V), \; \forall a\in[|\Pi_{xy}|] \textbf{ ITAP constraint}\\
&  h_{xy}^a \in[0,\infty) \quad \forall (x,y)\in(V\times V), \; \forall a\in[|\Pi_{xy}|] \textbf{ TAP constraint}
\end{aligned}
\end{equation}
The minimization is over all path flows $\{h_{xy}^a\}_{(x,y)\in V\times V, a\in [|\Pi_{xy}|]}$.

ITAP and TAP are identical except for the last constraint. In ITAP the flow on each path is constrained to be integer, while in TAP it can be any positive real number. Define $R_{xy}\coloneqq |\{a\in[|\Pi_{xy}|], \text{ s.t. } h_{xy}^a>0\}|$. This is the number of paths from $x$ to $y$ that carry a strictly positive flow. From the constraints, one sees that $R_{xy}\leq D_{xy}$ in ITAP, while it is potentially unbounded\footnote{that is bounded only by $|\Pi_{xy}|$ which generally scales super-exponentially in $N$} in TAP, as one can split the flow across all the paths.  In the following we will use the term 'at optimality' to mean that the paths and flow configuration corresponds to a global minimum of $H$.

\subsubsection{TAP is easy in the repulsive case}
Let us first consider the case in which $\phi$ is convex.
Below we list a few properties of TAP that make it a useful relaxation. See \cite{patriksson2015traffic,boyles2020transportation} for proofs.
\begin{theorem}
   Suppose $\phi:\R\mapsto\R$ is convex. Then:
\begin{enumerate}
    \item TAP is a convex optimization problem
    \item There exist polynomial (in $N,M$) time algorithms that solve TAP (see for example \cite{perederieieva2015framework})
    \item At optimality, the path flows $\{h_{xy}^a\}_{(x,y)\in V\times V, a\in [|\Pi_{xy}|]}$ are not in general unique
    \item At optimality, the edge flows $\{I_e\}_{e\in E}$ are unique    
\end{enumerate} 
\end{theorem}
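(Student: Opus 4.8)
The four items are nearly independent, so the plan is to dispatch the first three quickly and concentrate on item 4, which is the only one with real content.

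Item 1 is a one-liner: each $I_e$ is a nonnegative linear combination of the variables $h^a_{xy}$, so $\phi(I_e)$ is convex in $\pmb h$ as the composition of the convex $\phi$ with an affine map, and $H(\pmb I)=\sum_e\phi(I_e)$ is a sum of such terms; the remaining constraints in \eqref{eq:TAP_ITAP_formal} are affine equalities together with the nonnegativity box $h^a_{xy}\ge 0$, so the feasible set is a polyhedron, and TAP minimizes a convex function over a convex set. For item 2 the obstruction to a genuinely polynomial statement is that this polyhedron lives in dimension $\sum_{x,y}|\Pi_{xy}|$, super-exponential in $N$. The remedy I would use is the node--arc (multicommodity-flow) reformulation: for each destination $d$ introduce link variables $f^d_e\ge 0$ obeying the linear flow-conservation equations for the demand into $d$, set $I_e=\sum_d f^d_e$, and minimize $\sum_e\phi(I_e)$; this is a convex program with $O(|E|\,N)$ variables and constraints, solvable to arbitrary accuracy in $\mathrm{poly}(N,M)$ time by a generic interior-point or ellipsoid method. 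What remains is to certify that the arc and path formulations have the same optimum, which follows from the flow-decomposition theorem plus the observation that deleting circulations only lowers every $I_e$ and, $\phi$ being increasing, cannot raise $H$; for these classical facts I would cite \cite{patriksson2015traffic,boyles2020transportation,perederieieva2015framework}.

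For item 3 it is enough to produce one instance, and the mechanism I would exploit is that the linear map from path flows to edge flows has a kernel meeting the feasible cone as soon as two distinct OD pairs are routed through a common subnetwork offering two internally disjoint subroutes: shifting $\epsilon$ units of the first commodity from one subroute to the other while shifting $\epsilon$ units of the second commodity the opposite way leaves every $I_e$, and hence $H$, unchanged. Concretely, take the graph on $\{u,v,w,t\}$ with edge set $\{uv,uw,wv,vt\}$ and the two unit-demand pairs $(u,v)$ and $(u,t)$; for $\phi(I)=I^2$ one checks that the optimal edge flow is unique ($I_{uv}=4/3$, $I_{uw}=I_{wv}=2/3$, $I_{vt}=1$) while the optimal path flows form a one-parameter family, because the split of each commodity between the direct edge $uv$ and the detour through $w$ is free, subject only to the two \emph{direct} amounts summing to $4/3$. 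I would present exactly this as the counterexample.

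Item 4 is the crux, and the first thing to record is that it needs $\phi$ to be \emph{strictly} convex: if $\phi$ is merely affine (e.g.\ $\phi(I)=I$) distinct edge-flow vectors can tie and uniqueness fails, and strict convexity of the link cost is in any case the standing assumption of the cited references. Granting it, the plan is the classical Beckmann-type argument: (i) the set $\mathcal I$ of attainable edge-flow vectors is the linear image of the path-flow polytope, hence convex and compact; (ii) if $\pmb I^{(1)}\neq\pmb I^{(2)}$ were both optimal, convexity of $H$ on $\mathcal I$ would make the whole segment between them optimal, forcing $\phi$ to be affine on the interval spanned by $I^{(1)}_e$ and $I^{(2)}_e$ for every coordinate $e$ at which the two differ; (iii) strict convexity of $\phi$ then yields $\pmb I^{(1)}=\pmb I^{(2)}$. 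The step I expect to be the real obstacle is precisely (ii)--(iii): no strict inequality can be extracted from convexity of $\phi$ alone, so the argument must either assume strict convexity outright or carefully localize the strictness to the coordinates that actually move along the optimal segment; by comparison, for item 2 the analogous obstacle — making the arc-versus-path equivalence fully rigorous — is mild. Everything else is routine, and for a reference-style writeup I would simply invoke \cite{patriksson2015traffic,boyles2020transportation}.
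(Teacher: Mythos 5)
The paper does not actually prove this theorem: it defers entirely to \cite{patriksson2015traffic,boyles2020transportation}, so any comparison is between your sketch and the standard textbook arguments those references contain. Your route is essentially that standard one, and it is sound. Item 1 (convexity of $H$ as a sum of compositions of $\phi$ with nonnegative affine maps, over a polyhedron) is correct. For item 2 you rightly flag that the path formulation \eqref{eq:TAP_ITAP_formal} has super-exponentially many variables, so a literal reading of "polynomial time" requires passing to the node--arc formulation; your circulation-removal argument for the equivalence of the two formulations legitimately uses the paper's standing assumption that $\phi$ is increasing, without which it would fail. Your item 3 counterexample checks out: with $s=\alpha+\beta$ the energy is $s^2+2(2-s)^2+1$, minimized at $s=4/3$, giving unique edge flows but a one-parameter family of optimal path flows.

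The one substantive point is your observation on item 4: as literally stated ("Suppose $\phi$ is convex"), the uniqueness of optimal edge flows is \emph{false} — for $\phi(I)=I$ the problem reduces to independent shortest paths, and any instance with degenerate shortest paths admits several distinct optimal edge-flow vectors. The correct hypothesis, which is what the cited references assume (strictly increasing link travel-time functions, i.e.\ strictly convex $\phi$), is strict convexity; under it your segment argument — constancy of a sum of convex functions along the optimal segment forces each $\phi(I_e^\lambda)$ to be affine in $\lambda$, contradicting strictness unless $I_e^{(1)}=I_e^{(2)}$ — is the standard and correct proof. This is not a gap in your proposal but a (minor) imprecision in the paper's statement; it is harmless for the paper's experiments, which use $\phi(x)=x^2$, but worth recording.
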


The first two properties guarantee that TAP can be solved efficiently. The third property establishes that there can be several path flows that all give the minimal $H$. Last, the edge flows are instead unique when $H$ is minimal.

\subsubsection{TAP=ITAP in the attractive case}
\label{sec:TAP=ITAP_attractive}
In the previous section, we saw how ITAP relaxes to a convex optimization problem when $\phi$ is convex (repulsive case). This is no longer the case when $\phi$ is concave, as $H$ is not a convex function anymore and can, in fact, have multiple local minima. As a result, TAP is NP-hard for non-convex $\phi$.

Additionally, we now show that if $\phi$ is concave, then the solutions of TAP and ITAP coincide, meaning that even in the relaxed problem the optimal path flows are always integer.  First we define a local minimum in the problem TAP. We do so at the level of edge flows, since there can be several path flows corresponding to one edge flow.

We say $\pmb I$ is a TAP (ITAP) feasible flow, if it satisfies the three TAP (ITAP) constraints in \eqref{eq:TAP_ITAP_formal} for some choice of the path flows. Informally, a feasible edge flow, is one that is induced by some choice of path flows that respect the demand and positivity (and integrality) constraints.
\begin{definition}[TAP local minimum]
\label{def:tap_local_min}
    A TAP feasible edge flow $\pmb I$ is a local minimum of $H$ if for every feasible edge flow $\pmb{\tilde I}$, there exists an $\epsilon>0$ such that $H(\pmb I)\leq H(\epsilon\pmb{\tilde I}+(1-\epsilon)\pmb I)$.
\end{definition}
Notice that $\epsilon\pmb{\tilde I}+(1-\epsilon)\pmb I$ is feasible if both $\pmb{\tilde I},\pmb{I}$ are; so our definition says that at a minimum, moving in any feasible direction from $\pmb I$ results in an energy increase. We now give a characterization of local minima of $H$ when $\phi$ is concave. 
\begin{proposition}[]
\label{prop:no_flow_splitting_concave}
Consider an instance $G,D$ of TAP, with $\phi$ concave.  \textbf{Then} for every pair of nodes $x,y$  with $D_{xy}>0$, at any local minimum of $H$ all the flow going from $x$ to $y$ must be routed through a single path. Put differently $R_{xy}=1$.
\end{proposition}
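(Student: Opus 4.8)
The plan is a proof by contradiction resting on the fact that a strictly concave function on a line segment has no interior minimum. Suppose $\pmb I$ is a local minimum of $H$ in the sense of Definition~\ref{def:tap_local_min}, and suppose for contradiction that for some $(x,y)$ with $D_{xy}>0$ there are two distinct paths $\pi_{xy}^a,\pi_{xy}^b\in\Pi_{xy}$ both carrying strictly positive flow, $h^a\coloneqq h_{xy}^a>0$ and $h^b\coloneqq h_{xy}^b>0$. I will reach a contradiction with the definition of local minimum; together with the trivial bound $R_{xy}\ge1$ (which holds since $D_{xy}>0$), this forces $R_{xy}=1$.

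The key construction is the one-parameter family of feasible path flows that redistributes the total amount $s\coloneqq h^a+h^b>0$ between the two paths while freezing all other path flows: for $\lambda\in[0,1]$ let $\pmb h(\lambda)$ carry $\lambda s$ on $\pi_{xy}^a$, $(1-\lambda)s$ on $\pi_{xy}^b$, and leave every other path flow unchanged, and write $\pmb I(\lambda)$ for the induced edge flow. Each $\pmb h(\lambda)$ meets the demand and positivity constraints, so each $\pmb I(\lambda)$ is TAP-feasible, and the current configuration is $\pmb I(\lambda_0)$ with $\lambda_0\coloneqq h^a/s\in(0,1)$. If $J_e$ denotes the contribution to $I_e$ of all paths other than $\pi_{xy}^a,\pi_{xy}^b$, then $I_e(\lambda)$ equals $J_e+s$ on $\pi_{xy}^a\cap\pi_{xy}^b$, $J_e+\lambda s$ on $\pi_{xy}^a\setminus\pi_{xy}^b$, $J_e+(1-\lambda)s$ on $\pi_{xy}^b\setminus\pi_{xy}^a$, and $J_e$ otherwise --- always an affine function of $\lambda$. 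Consequently $f(\lambda)\coloneqq H(\pmb I(\lambda))=\sum_e\phi(I_e(\lambda))$ is a sum of concave functions (each $\phi$ precomposed with an affine map), hence concave on $[0,1]$; and since a simple path is determined by its edge set, $\pi_{xy}^a\triangle\pi_{xy}^b\ne\varnothing$, so with $s>0$ at least one of these summands is strictly concave and therefore $f$ is strictly concave on $[0,1]$. This is the only place strictness is used: for affine $\phi$ the proposition can actually fail --- e.g.\ splitting a demand over two shortest paths of equal length --- so I read ``concave'' here as ``strictly concave''.

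Now I invoke Definition~\ref{def:tap_local_min} twice. Since edge flows depend linearly on path flows, $\epsilon\pmb I(\lambda)+(1-\epsilon)\pmb I(\lambda')=\pmb I(\epsilon\lambda+(1-\epsilon)\lambda')$, and all such combinations with $\epsilon\in[0,1]$ are feasible. Applying the definition with $\tilde{\pmb I}=\pmb I(1)$ (all the $x$--$y$ demand on $\pi_{xy}^a$) gives some $\epsilon\in(0,1]$ with $f(\lambda_0+\epsilon(1-\lambda_0))=H(\epsilon\pmb I(1)+(1-\epsilon)\pmb I(\lambda_0))\ge H(\pmb I)=f(\lambda_0)$, i.e.\ a point $\lambda_1\in(\lambda_0,1]$ with $f(\lambda_1)\ge f(\lambda_0)$; applying it with $\tilde{\pmb I}=\pmb I(0)$ gives, similarly, a point $\lambda_2\in[0,\lambda_0)$ with $f(\lambda_2)\ge f(\lambda_0)$. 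But $\lambda_2<\lambda_0<\lambda_1$, so $\lambda_0=\alpha\lambda_1+(1-\alpha)\lambda_2$ for some $\alpha\in(0,1)$, and strict concavity gives $f(\lambda_0)>\alpha f(\lambda_1)+(1-\alpha)f(\lambda_2)\ge f(\lambda_0)$ --- a contradiction. Hence no such pair of paths exists, $R_{xy}\le1$, and so $R_{xy}=1$.

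Most of the work is routine bookkeeping: the affine form of $\lambda\mapsto I_e(\lambda)$, feasibility of the $\pmb I(\lambda)$ and of their convex combinations, and the identity $\epsilon\pmb I(\lambda)+(1-\epsilon)\pmb I(\lambda')=\pmb I(\epsilon\lambda+(1-\epsilon)\lambda')$. The step I would be most careful about is conceptual rather than computational: recognizing that strict concavity of $\phi$ is genuinely needed (so that $f$ is strictly concave --- here one must note that distinct simple paths have distinct edge sets), and matching it to the exact quantifier of Definition~\ref{def:tap_local_min}, which asks only that the energy fail to decrease for \emph{some} small step, by probing the two opposite feasible directions $\pmb I(1)$ and $\pmb I(0)$ and using that $\lambda_0$ lies strictly between the two returned parameters.
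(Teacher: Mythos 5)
Your proof is correct and follows essentially the same route as the paper's: interpolate linearly between two feasible flows that differ only in how the $x$--$y$ demand is distributed over two distinct paths, and use concavity of $H$ along that segment to rule out an interior local minimum. Your write-up is in fact slightly more careful than the paper's on two points --- you freeze all other path flows so the argument covers $R_{xy}\ge 3$ rather than implicitly assuming the demand is split between exactly two paths, and you correctly note that \emph{strict} concavity of $\phi$ is needed (for affine $\phi$ a split over two equal-length paths is a local minimum under Definition~\ref{def:tap_local_min}), a hypothesis the paper's statement and proof leave implicit.
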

\begin{proof}
   Let $\pmb I^{\backslash xy}$ be a feasible flow for the problem with $D_{xy}=0$ and the rest of the demand matrix unchanged. Put differently, $\pmb I^{\backslash xy}$ is a flow that results from routing all the flow except for that going from $x$ to $y$. Let $\pi^0, \pi^1$ be two distinct paths joining $x$ and $y$. Consider the following two flows
   \begin{equation}
       I^0_e=I^{\backslash xy}_e+D_{xy}\mathbb I[e\in \pi^0], \quad I^1_e=I^{\backslash xy}_e+D_{xy}\mathbb I[e\in \pi^1],\qquad e\in E
   \end{equation}
   Both $\pmb I^0$ and $\pmb I^1$ are feasible flows for $(G,D)$. All the flows $\pmb I^\lambda$ in the interpolating sequence $\pmb I^\lambda=\lambda \pmb I^1+(1-\lambda)\pmb I^0, \; \lambda\in[0,1]$ are also feasible. In $\pmb I^\lambda$, $\lambda D_{xy}$ units of flow are sent through $\pi^1$ and the rest is sent through $\pi^0$, thus the flow is split among two paths.
   Using concavity we have that $H(\pmb I^\lambda)$ is concave as a function of $\lambda\in[0,1]$. This implies that it can have local minima only for $\lambda\in \{0,1\}$ corresponding to all of the flow being routed through a single path.
\end{proof}
Proposition \ref{prop:no_flow_splitting_concave} immediately implies that also on global optima all the flow between any two nodes gets routed through a single path. As a consequence, since in our case the matrix $D$ has integer entries, a solution to TAP also satisfies the ITAP constraint in \eqref{eq:TAP_ITAP_formal} and is therefore integer.

\subsection{Related Literature}
When the objective function is convex, ITAP is related to TAP, a problem consisting of routing paths (users) over a network in a way that minimizes the average time taken to reach the destination. The study of TAP dates back to the seminal work of Wardrop \cite{doi:10.1680/ipeds.1952.11259} who formulated a mathematical model of traffic assignment and introduced the concepts of user equilibrium and system optimum. These correspond respectively to the case where each path egoistically chooses the route that minimizes \emph{its own} travel time, and the case where the route taken by each path is mandated so to minimize the total travel time of all paths. Wardrop's principles state the mathematical conditions for user equilibrium to be satisfied and establish that both problems can be formulated as optimization problems but with different objectives. Subsequently, \cite{beckmann1956studies,dafermos1969traffic} studied TAP as a convex optimization problem. The first algorithm employed to solve it is the Frank-Wolfe method \cite{frank1956algorithm}, followed by increasingly efficient algorithms \cite{bar2002origin, dial2006path, wei2019efficient, weintraub1985accelerating, mitradjieva2013stiff}. See \cite{patriksson2015traffic,boyles2020transportation} for a comprehensive review of TAP, and \cite{perederieieva2015framework} for a comparison of all of the common optimization algorithms to solve it. 

The integer version (ITAP) of the problem also received some interest, for example in \cite{erlander1988relationship} Wardrop's principles and the optimization formulation were extended to ITAP. 
ITAP is also related to the multicommodity flow problem (MCF) and the integer multicommodity flow problem (IMCF), in which, analogously to ITAP, one cannot split the flow among several paths \cite{barnhart2000using,even1975complexity}.
The difference with TAP is that in MCF and IMCF the interaction between paths is given exclusively by a capacity constraint, i.e., the number of paths crossing an edge cannot exceed its capacity \cite{salimifard2022multicommodity,shahrokhi1990maximum}. Variants of ITAP, such as the routing and wavelength assignment problem, arise in optics and telecommunications, where one must establish connections between pairs of nodes in a fibre optics network \cite{ozdaglar2004optimal,meli2004simulation,chlamtac1992lightpath}.

TAP traditionally only considers the case in which the paths repel each other. From a traffic perspective this makes sense, since we do not want too many users on the same road. The ITAP problem we consider also allows for attractive interactions between paths. This line of work has been studied in the setting of network building games. For a review of congestion and network building games see \cite{roughgarden2010algorithmic}. The game theory analysis is more concerned with quantifying the price of anarchy in games, that is the inefficiency of user (Nash) equilibria in the game where each user acts egoistically, with respect to the system optimum \cite{roughgarden2005selfish, pigou1920economics, koutsoupias1999worst, anshelevich2008price}. While finding Nash equilibria is a seemingly different problem, Rosenthal proved that one can build a potential function with the property that every minimum of the potential corresponds to a Nash equilibrium \cite{rosenthal1973class}. Therefore finding a Nash equilibrium consists of finding a local minimum of the ITAP objective function, for an appropriate choice on the nonlinearity. The game theory approach can also be useful in designing algorithms for ITAP, for example the greedy algorithm is inspired to the best response dynamics in congestion games \cite{fabrikant2004complexity,fanelli2008speed}. Also the logit dynamics for congestion games, which has been studied in \cite{blume1993statistical, kleer2021sampling} has some similarity with the simulated annealing algorithm we propose.

The statistical physics approach to ITAP has been initiated in \cite{yeung2013physics}. In this work the authors study ITAP on sparse random regular graphs (RRGs). They identify one of the relevant parameters of the problem (i.e. they find the $N,M$ dependence of $\rho$ in \eqref{eq:rho}) and propose the conditional belief propagation (CBP) algorithm (our name). CBP is a belief propagation \cite{pearl1982, peierls1936statistical,thouless1977solution} based algorithm derived using the replica method \cite{mezard1987spin,castellani2005spin}. See \cite{mezard2009information, zdeborova2016statistical, bishop2006pattern} for a comprehensive treatment of message passing and replica techniques. In our analysis, we will borrow the RRG setting and provide a shorter and more transparent derivation of CBP. 
The simulated annealing \cite{kirkpatrick1983optimization} algorithm we use updates the paths by proposing at each time step a randomly generated SAW with fixed endpoints. Several  Markov chain Monte Carlo (MCMC) methods for sampling unconstrained  SAWs have been developed \cite{van2015statistical, madras1988pivot,tabrizi2015rosenbluth}. Knuth \cite{knuth1976mathematics} devised an algorithm to sample SAWs with fixed endpoints on a grid, but it samples from a distribution that is nonuniform and difficult to characterize.
We devise a variant of Knuth's algorithm that allows us to sample SAWs from a range of distributions on a graph. In particular, by tuning a temperature parameter, we can go from sampling small fluctuations around the shortest path, to sampling SAWs uniformly.

To conclude, the integrality of the solutions to the relaxed problem, has been studied in a game theoretical setting in \cite{kleer2021computation}. However the integrality results hold under quite restrictive conditions, which are not satisfied in our case.

\subsection{Our contribution}
Below we list our main contributions.
\begin{itemize}
    \item We propose two novel algorithms to solve ITAP. The first, RITAP, is a simple algorithm based on relaxing ITAP to TAP, solving TAP, and then projecting the solution back on the integer constraints. Its strength lies in the computational efficiency when $M$ is large and in the possibility to study it theoretically. 
    The second algorithm is based on simulated annealing, a standard technique to solve non-convex optimization problems. The advantage of this method originates from the design of the underlying MCMC, which is based on a novel algorithm that samples SAWs with fixed endpoints. Numerical results show that simulated annealing is often the most effective method, especially in the attractive case.
    Finally, the SAW sampler is also of independent interest to the community studying techniques to generate SAWs randomly. 
    \item We present a greatly simplified (with respect to \cite{yeung2013physics}) derivation of CBP that emphasizes the relation with the greedy algorithm.
    \item We compare the performance of the considered algorithms and study the properties of ITAP in a random instance setting. The random instances of ITAP consist of sparse RRGs, uniformly random OD pairs and nonlinearities of the form $\phi(x)=x^\gamma$. The comparison is carried out in the regime where the flow per edge is of order one. We find that in the repulsive case, most algorithms including the simple greedy one, exhibit comparable performance. In the attractive case the CBP and simulated annealing algorithms have an edge over the others. 
    \item We study the relation between TAP and ITAP, as the number of paths grows.
    In the random instance setting, we observe that for convex $\phi$, when $M\gg N^2$, TAP and ITAP reach identical energies, moreover TAP solutions almost satisfy the integrality constraints. This implies that ITAP is easy to solve in this limit. We conjecture that this phenomenology holds for convex nonlinearities, provided that the distribution of OD pairs is sufficiently uniform. A formal statement of this conjecture is presented in \ref{conj:limit_degen}.
    \item In random instances, as we change the number of paths $M$ on the graph, we identify and study two intrinsic regimes of the problem, defined by the parameters $\rho$ and $\eta$ (see respectively \eqref{eq:rho},\eqref{eq:eta}). The first, partly identified in \cite{yeung2013physics}, arises when $M$ is such that the average flow per edge is of order one. In this regime, the effective interaction between paths arises and is usually the strongest, in the sense that the optimized energy found by the algorithms differs the most from the energy of the shortest paths, which are the optimal paths in the absence of interactions.
    The second regime is encountered when $M=\Theta(N^2)$, and thus the flow per edge is very large. When $M\approx N^2$ we observe the onset of an asymptotic behavior that continues for higher $M$. It is precisely in this scaling that the TAP solution (in the convex case) approaches the ITAP one. The presence of the two regimes is verified also in the case of graphs topologies taken from real-world traffic networks.
    
\end{itemize}
\section{Algorithms}
Below we state the algorithms we studied to solve ITAP. We release the code of each algorithm at \url{github.com/SPOC-group/algorithms_integer_traffic_assignment} \footnote{To request the code of the CBP algorithm as implemented in \cite{yeung2013physics}, please write to Bill Chi Ho Yeung.} and the code and data to reproduce our results at \url{github.com/SPOC-group/numerics_integer_traffic_assignment}.
\subsection{Greedy Algorithm}
\label{sec:greedy_algo}
The greedy algorithm is perhaps the simplest algorithm that can be devised to minimize the energy \eqref{eq:energy}. First the paths are arbitrarily initialized. At every time step, a path $\nu\in[M]$ is chosen. Then the path is updated to the value that minimizes $H$, leaving all other paths fixed. Since we want to minimize $H$ with respect to $\pi^\nu$, by the decomposition \eqref{eq:energy_decomp} this is equivalent to minimizing $\Delta H(\pi^\nu; \pmb I^{\backslash\nu})$. In turn this coordinate minimization is equivalent to finding the shortest path from $s^\nu$ to $t^\nu$ in a graph where the weight of edge $e$ is $\left(\phi(I_e^{\backslash\nu}+1)-\phi(I_e^{\backslash\nu})\right)$. This is easily doable for example via Dijkstra algorithm \cite{dijkstra2022note,cormen2022introduction}. The greedy algorithm proceeds by iteratively selecting an index $\nu$, then setting $\pi^\nu$ to be the shortest path in the aforementioned graph, until $H(\pmb I)$ stops decreasing. In appendix \ref{app:pseudocode_greedy} we provide the pseudocode of an efficient implementation of this algorithm.
One can see the greedy algorithm as a method that samples from the conditional distribution \eqref{eq:conditional_gibbs} after setting $\beta=\infty$. In the infinite $\beta$ limit the conditional distribution becomes in fact concentrated on the shortest paths.

\subsection{Conditional Belief Propagation (CBP)}
\label{sec:CBP}
Belief propagation (BP) is a message passing algorithm that allows to approximate the marginals of a given probability measure. It does so by exchanging messages between the nodes of the factor graph associated to the measure.
One can attempt to apply BP to the joint Gibbs measure of all paths \eqref{eq:gibbs_measure}. In this case BP would allow to approximate the probability that a path passes through a given node, from which the paths can then be reconstructed.
Already in \cite{yeung2013physics}, it was recognized that this approach is unfeasible: the fact that potentially all the paths interact on each edge results in an exponential time complexity in $M$. 
To circumvent this problem the CBP algorithm was proposed in \cite{yeung2013physics}. The idea is to apply BP to the conditional distribution \eqref{eq:conditional_gibbs} of a single path, while fixing all the others, and then iterating through paths. In this section we sketch CBP algorithm; Appendix \ref{app:BP_derivation} contains the details of its derivation.
We indicate with $i\to j$ the directed edge from node $i$ to $j$. In the graph $G$, every undirected edge is treated as a pair of opposite directed edges. Path $\pi^\nu$ is  treated as a directed sequence of edges starting at $s^\mu$ and terminating at $t^\nu$. We first introduce the binary variables $\sigma_{i\to j}^\nu\coloneqq \mathbb I[i\to j\in \pi^{\nu}]$, which indicate the directed edges through which $\pi^\nu$ passes. For an undirected edge $ij$ we define $H_{ij}^{\nu}\coloneqq \phi(I_{ij}^{\backslash\nu}+1)-\phi(I_{ij}^{\backslash\nu})$\footnote{ Recall that we do not separate flow on an edge by direction. $I_e$ is the total bidirectional flow on the undirected edge $e$}. This way we can write the conditional \eqref{eq:conditional_gibbs} as
\begin{align}
\label{eq:conditional_prob_with_edgecosts}
    P\left(\{\sigma^{\nu}_{i\to j}\}_{(ij)\in E}|\{H^{\nu}_{ij}\}_{ (ij)\in E},s^{\nu},t^{\nu}\right)\propto\left[\prod_{i\in [N]\backslash s^\nu,t^\nu} \mathbb{I}\left[\sum_{\substack{k,l\in \partial i\\k\neq l}}\sigma^{\nu}_{k\to i}+\sigma^{\nu}_{i\to l}=(0 \text{ OR } 2)\right] 
    \exp{\bigg(-\beta \sum_{k\in \partial i} H^{\nu}_{ik} \sigma^{\nu}_{i\rightarrow k} \bigg)} \right] \times \nonumber \\ \times \mathbb{I}\left[\sum_{k\in\partial s^\nu}\sigma^{\nu}_{s^\nu\to k}=1\right]\mathbb{I}\left[\sum_{k\in\partial s^\nu}\sigma^{\nu}_{k\to s^\nu}=0\right]\mathbb{I}\left[\sum_{k\in\partial t^\nu}\sigma^{\nu}_{k\to     t^\nu}=1\right]\mathbb{I}\left[\sum_{k\in\partial t^\nu}\sigma^{\nu}_{t^\nu\to k}=0\right] 
    \prod_{k \in \partial s^\nu}  \bigg( \exp{(-\beta H^{\nu}_{s^\nu k} \sigma^{\nu}_{s^\nu \rightarrow k})} \bigg),
\end{align}
where $\partial i$ denotes the set of neighbours of node $i$. The constraint in the first line ensures that the path passes at most once in each node. 
The constraints in the second line take care of the origin and destination nodes, which represent special cases. 
From this probability distribution, BP equations are derived following the procedure described in \cite{mezard2009information,zdeborova2016statistical}. We take the limit $\beta \rightarrow \infty$ and obtain the following update equations for two sets of messages 
$a^{\nu}_{i\rightarrow j}$ and $b^{\nu}_{i\rightarrow j}$. 
\begin{align}
\label{eq:BP_a}
a^{\nu}_{i\to j}&=\left\{\begin{array}{lr}
         \min_{k\in\partial i\backslash j} [a^{\nu}_{k\rightarrow i}]  + H^{\nu}_{ij} - \min \bigg[0, \min_{\substack{k,l\in\partial i\backslash j\\k\neq l}}[a^{\nu}_{k\rightarrow i}+b^{\nu}_{l \rightarrow i}] \bigg], & \text{if } i\not \in \{s^{\nu},e^{\nu}\}\\ 
         \infty & \text{if } i=e^{\nu}  \\ \\
         H^{\nu}_{s^\nu j}-\min_{k\in \partial s^\nu\backslash j } [b^{\nu}_{k\rightarrow s^\nu}], & \text{if } i=s^{\nu}  
        \end{array}\right. \\
\label{eq:BP_b}
b^{\nu}_{i\to j}&=\left\{\begin{array}{lr}
         \min_{k\in\partial i\backslash j} [b^{\nu}_{k\rightarrow i}] + H^{\nu}_{ji} - \min \bigg[0, \min_{\substack{k,l\in\partial i\backslash j\\k\neq l}} [a^{\nu}_{k\rightarrow i}+b^{\nu}_{l \rightarrow i}] \bigg], & \text{if } i\not \in \{s^{\nu},e^{\nu}\} \\ 
         \infty, & \text{if } i=s^{\nu} \\ \\
         H^{\nu}_{je^{\nu}}-\min_{k\in \partial e^{\nu} \backslash j } [a^{\nu}_{k\rightarrow e^{\nu}}], & \text{if } i=e^{\nu}
        \end{array}\right.  \\ \nonumber 
\end{align}
From the messages the flow is computed as $I^{\backslash\nu}_{ij} = \sum_{\mu\in [M] \backslash\nu} (\sigma^{\mu}_{i\rightarrow j} + \sigma^{\mu}_{j\rightarrow i})$, with

\begin{align}
  \sigma^{\nu}_{i\rightarrow j}  = & \I[i=s^{\nu}] \Theta\bigg(
  \min_{l\in\partial j\backslash i} (b^{\nu}_{l\rightarrow j})-b^{\nu}_{i \rightarrow j} 
  \bigg) \nonumber \\ &+ (1-\I[i=s^{\nu}])\Theta\bigg( \min{\bigg( 0,\min_{\substack{l\in \partial j \backslash i}}{(a^{\nu}_{l\rightarrow j } + \min_{\substack{k \in \partial j \backslash l} } b^{\nu}_{k\rightarrow j})} \bigg)} -(a^{\nu}_{i \rightarrow j}+ \min_{\substack{k \in \partial j\backslash i}} b^{\nu}_{k\rightarrow j}) \bigg).
\label{eq:marginal_directed}
\end{align}
In \cite{yeung2013physics} the quantity $H^{\nu}_{ij} = \phi(I_{ij}^{\backslash\nu}+1)-\phi(I_{ij}^{\backslash\nu})$ is Taylor approximated with $\phi'(I_{ij}^{\backslash\nu})$.
The CBP algorithm alternates between updating all messages once using \eqref{eq:BP_a}, \eqref{eq:BP_b} and then updating the flows using \eqref{eq:marginal_directed}. This is repeated until convergence is reached, at which point, the paths are determined by plugging the final values of the messages into equation \eqref{eq:marginal_directed}. For our experiments we use the original implementation of CBP from \cite{yeung2013physics}. Its pseudocode, presenting some minor differences with what we derived, is provided in Appendix \ref{app:BP_derivation}.

By picking a different update schedule one can essentially retrieve the greedy algorithm. Suppose we pick a path $\nu$ and update the messages $\{a_{i\to j}^\nu,b_{i\to j}^\nu\}_{(i\to j) \in E}$ several times until they reach convergence. Then BP approximates the conditional $P_\infty\left(\pi^\nu|\{\pi^\mu\}_{\mu\in[M]\backslash\nu},\{s^\mu,t^\mu\}_{\mu\in[M]}\right)$ which is supported on shortest paths from $s^\nu$ to $t^\nu$. Therefore, with this schedule, BP approximates the shortest path computation described in \ref{sec:greedy_algo}. If one then updates the variables $\{\sigma_{i\to j}^\nu\}_{(i\to j)\in E}$ using \eqref{marginal_directed} and then repeats the whole procedure for other values of $\nu$ this reproduces the greedy algorithm.

\subsection{Simulated annealing}
Given an energy function $H$, to be minimized, the simulated annealing technique consists of building a probability measure  $\propto e^{-\beta H}$ and then drawing samples from it while increasing $\beta$. The sampling part is usually accomplished using iterative methods such as MCMC.
For not-too-high values of $\beta$, the noise in the distribution will allow the dynamics to jump possible barriers in $H$ and explore the landscape. When increasing $\beta$, the measure will become more and more concentrated on the global minimum of $H$. In practice $\beta$ is changed with the iteration number $t$, according to a schedule $\beta(t)$.
For ITAP, the simulated annealing aims to draw samples from the ($t$ dependent) probability measure $P_{\beta(t)}\left(\{\pi^\mu\}_{\mu\in[M]}|\{s^\mu,t^\mu\}_{\mu\in[M]}\right)$, where $\beta(t)$ is an increasing function. To accomplish this task we use a MCMC based on Gibbs sampling. Gibbs sampling \cite{geman1984stochastic} is an iterative MCMC algorithm that samples from the joint distribution \eqref{eq:gibbs_measure} by sampling each variable from its conditional distribution \eqref{eq:conditional_gibbs} in a sequence. The details of annealing schedule and the pseudocode for the algorithm are provided in Appendix \ref{app:sim_annealing}. Here we explain how it works using a mock example with three paths.
To lighten the notation define the joint measure $P_\beta(\pi^1,\pi^2,\pi^3)\coloneqq P_\beta\left(\{\pi^\mu\}_{\mu\in[3]}|\{s^\mu,t^\mu\}_{\mu\in[3]}\right)$, and the conditional measure $P(\pi^1|\pi^2,\pi^3)\coloneqq P_\beta\left(\pi^1|\{\pi^\mu\}_{\mu\in[3]\backslash1},\{s^\mu,t^\mu\}_{\mu\in[3]}\right)$. Initially, given the configuration $\pi^1(t),\pi^2(t),\pi^3(t)$, the algorithm samples $\pi^1(t+1)$ from $P_{\beta(t)}(\pi^1|\pi^2(t),\pi^3(t))$, followed by sampling $\pi^2(t+1)$ from $P_{\beta(t)}(\pi^2|\pi^1(t+1),\pi^3(t))$, and finally $\pi^3(t+1)$ from $P_{\beta(t)}(\pi^3|\pi^1(t+1),\pi^2(t+1))$. If $\beta(t)=\beta$ were constant in $t$, then repeating this process, the samples $(\pi^1(t),\pi^2(t),\pi^3(t))$ would converge to the distribution $P_\beta(\pi^1,\pi^2,\pi^3)$ \cite{casella1992explaining,robert1999monte}. In the simulated annealing, since $\beta(t)$ increases towards infinity, there is no stationary distribution and the above stochastic process will lead to a local minimum of $H$.

How do we sample from the conditional distribution \eqref{eq:conditional_gibbs}? The conditional is a probability measure over a single self-avoiding path with constrained endpoints and subject to the Gibbs measure $\propto e^{-\beta\Delta H }$. The self-avoiding constraint makes this measure hard to sample directly. Instead, we resort to a second, Metropolis based, MCMC that allows us to sample self-avoiding paths. The method, thoroughly described in appendix \ref{app:SAW_sampler}, consists of generating SAWs using a proposal distribution which is easy to sample and that approximates the conditional. Then a Metropolis accept/reject step is carried out. The proposal distribution is chosen to capture the leading order behavior of the conditional when $\beta\to\infty$. This means that the sampler is particularly efficient (low rejection rate) at high $\beta$. 

\subsection{Relaxed ITAP solver (RITAP)}
\label{sec:TAP_relaxation}
The last algorithm we study is based on relaxing ITAP to TAP, solving TAP, and finally projecting the solution of TAP to a solution of ITAP. When having to solve TAP we will assume that we have an algorithm \verb|TAPsolver|$(G,D)$, that given a graph $G=(V,E)$ and a demand matrix $D$, returns a collection of sets $\PM=\{\PM_{xy}, (x,y)\in (V\times V)\}$. 
Every set  $\PM_{xy}$ contains the paths from $x$ to $y$ that at a local minimum (in the space of edge flows, see \ref{def:tap_local_min}) of TAP carry some positive flow, as well as the flows carried by each path. Mathematically we have
\begin{equation}
\label{eq:collection_TAP}
    \PM_{xy}=\left\{(\pi^a_{xy},h^a_{xy}),\;(x,y)\in(V\times V),\, a\in\{1,\dots,R_{xy}\} \right\}.
\end{equation}
We indicate with $\PM_{xy}.\pi$ the set of paths $\{\pi_{xy}^a\}_{a=1}^{R_{xy}}$, and with $\PM_{xy}.h$ the set of path flows $\{h_{xy}^a\}_{a=1}^{R_{xy}}$. $R_{xy}$ is the number of paths between $x$ and $y$ carrying a positive flow in the solution.
In the repulsive case, thanks to convexity, there is only one (global) minimum in the space of edge flows. Therefore \verb|TAPsolver| outputs some global optimal path flows corresponding to the unique optimal edge flows.\footnote{Recall that while the global minimum is unique in the space of edge flows, it is not in the space of path flows. The particular path flows outputted depend on the implementation of TAPsolver} In contrast TAP is NP-hard in the attractive case; the particular local minimum reached by \verb|TAPsolver| depends on the implementation. Our implementation is based on the Frank-Wolfe algorithm. This is neither the sole nor the fastest algorithm to solve TAP \cite{perederieieva2015framework}, however we use it due to its simplicity. 

To obtain a valid solution to the integer problem we devise a procedure to project the relaxed solution to an integer one. In the attractive case, proposition \ref{prop:no_flow_splitting_concave} guarantees that 
\verb|TAPsolver| already outputs integer paths, thus the projection step is superfluous.
In the repulsive case we use the following procedure.
Looping over $\mu\in[M]$ we consider $\PM_{s^\mu t^\mu}$. We set $\pi^\mu$ (the $\mu$th integer path) to be the path in $\PM_{s^\mu t^\mu}$ that carries the most flow: mathematically $a^\star= \argmax_a h^a_{s^\mu t^\mu}$ and $\pi^\mu=\pi_{xy}^{a^\star}$. After assigning $\pi^{\mu}$ we update $h^{a^\star}_{s^\mu t^\mu}\gets h^{a^\star}_{s^\mu t^\mu}-1$, this way, if there are multiple paths with the same origin and destination, one takes into account the flow that is already assigned. We report below the pseudocode describing this procedure. Since it contains a loop over $M$ the complexity of this algorithm is at least linear in $M$. In appendix \ref{app:ITAP_relaxed} we present a smarter implementation of RITAP whose time complexity depends on $M$ as $\min(M,N^2)$, and is therefore more suited to very high $M$. The same appendix also contains additional details about the Frank-Wolfe algorithm. 
\begin{algorithm}[H]
\caption{Relaxed ITAP solver (RITAP)} 
\label{alg:relaxed_ITAP_solver}
\begin{algorithmic}[1]
\State\textbf{Input: }  $G$ graph over which paths are sampled, $D$ demand matrix. 
\State\textbf{Output:} $\{\pi^\mu\}_{\mu\in[M]}$ the paths outputted by the algorithm
\State $\PM \gets \verb|solveTAP|(G,D)$. \Comment{Solving TAP}
\For{$\mu=1,\dots,M$} \Comment{Loop over OD pairs}
\State $a^\star\gets \argmax_a \PM_{s^\mu t^\mu}.h$ \Comment{Find path with maximum flow}
\State $\pi^\mu\gets \PM_{s^\mu t^\mu}.\pi [a^\star]$ 
\State $\PM_{s^\mu t^\mu}.h[a^\star]\gets \PM_{s^\mu t^\mu}.h[a^\star]-1$\Comment{Update the flow on $\pi_{s^\mu t^\mu}^{a^\star}$}
\EndFor
\end{algorithmic}
\end{algorithm}

\section{Results on random instances}
\label{sec:results_random_model}
To test the algorithms and study ITAP, we consider it on undirected sparse random regular graphs. Origin-destination pairs are sampled i.i.d. from the uniform distribution over pairs of distinct vertices. Mathematically we sample $M$ times independently from $P(x,y)=\mathbb{I}[x\neq y]/(N(N-1))$, where $x,y \in (V\times V)$. 
We consider nonlinearities of the form $\phi(x)=x^\gamma$, with $\gamma>0$. In the repulsive case we set $\gamma=2$, while in the attractive case $\gamma=1/2$.
We run extensive numerical experiments with different algorithms to study their ability to minimize the ITAP energy $H(\pmb I)$. Additional details about the experimental settings are given in appendix \ref{app:num_exp_details}.

Given a graph $G=(V,E)$ and a set of OD pairs $\{s^\nu,t^\nu\}_{\nu\in[M]}$, each algorithm outputs a set of paths $\{\pi^\nu\}_{\nu\in[M]}$ connecting the respective OD pairs. We name $H$ the energy reached by the algorithm. To establish a baseline, we consider a very basic algorithm that, for each OD pair $(x,y)$, outputs the shortest path (in the topological distance on $G$) between $x$ and $y$. We call the resulting energy $H_\text{SP}$.\footnote{When the shortest path is not unique we output one of them; this arbitrariness does not significantly affect our results.} We use the shortest paths as a benchmark since they represent the optimal routing for the non-interacting case (every path is routed independently of others), corresponding to $\gamma=1$. In fact in this case the energy can be written as $H(\pmb I)=\sum_{\mu\in[M]}\sum_{e\in\pi^\mu} 1=\sum_{\mu\in[M]}\text{Length}(\pi^\mu)$. Comparing $H$ to $H_\text{SP}$ also allows us to gauge the effective strength of interactions, that is how much gain in energy one achieves by taking the interaction into account.

\subsection{Regime with the average flow per edge $O(1)$}
\label{section_results_comparison_algos_small_M}

For every considered algorithm we plot the fraction of energy saved when compared to the shortest paths, mathematically expressed as $1-H/H_\text{SP}$.
To minimize the energy we employ several algorithms: Greedy, CBP, simulated annealing, and RITAP. In each curve we fix the degree $d$ and the number of nodes $N$, and vary the number of OD pairs $M$. We plot quantities as a function of the effective parameter \begin{equation}
\label{eq:rho}
    \rho\coloneqq \frac{2M\log N}{Nd\log d}.
\end{equation} This will allow us to compare different sizes or degrees. The choice of this scaling is justified both by the empirical collapse of the curves with different $N,d$ and by the fact that $\rho$ is approximately the average the flow per edge therefore it is a proxy for how congested the graph is. To see this consider the shortest path algorithm. In an RRG on average the shortest path between two nodes has an average length of approximately $\log N/\log d$ \cite{tishby2022mean}. Then the total flow $\sum_{e\in E} I_e$ present on the graph is about $M\log N/\log d$, while the number of edges is $Nd/2$. This gives a flow per edge of $\rho$. While the average length of the outputted paths changes with the algorithm, the difference is small enough not to compromise the validity of the approximation. The agreement between $\rho$ and the flow per edge is further verified with numerical experiments in Appendix \ref{app:num_exp_details}. To compare algorithms, we focus initially on the regime where $\rho$ is of order one, since for higher $\rho$, most algorithms have similar performances, and for $\rho\to 0$, the system is non-interacting. Furthermore in most cases, the quantity $1-H/H_\text{SP}$ peaks for $\rho\approx 1$, indicating that this is where the effective coupling among paths is the greatest. Error bars in all plots correspond to a $\pm 1$ standard deviation interval around the mean.
\subsubsection{Attractive paths case ($\gamma=1/2$)}
We first consider the case in which paths attract each other, corresponding to a concave $\phi$. In this case, TAP local optima coincide with ITAP ones, and both problems are NP-hard. We start by comparing the algorithms for $N=500$ and $d=3$, as portrayed in figure \ref{fig:allalgos_gamma05}. 
\begin{figure}[!ht]
     \centering
     \includegraphics[scale=1]{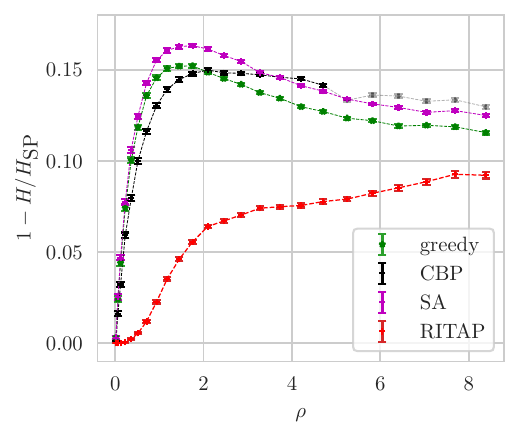}
     \caption{Relative energy difference between the paths obtained using each algorithm and the shortest paths. The higher the better. Algorithms used are: greedy, CBP, simulated annealing, and RITAP. The CBP line is semi-transparent when CBP converges on less than 80\% of the instances. Results are averages over 200 instances of ITAP, with $d=3, \,N=500, \,\gamma=1/2$.}
     \label{fig:allalgos_gamma05}
\end{figure}
We immediately remark that RITAP is less effective than other algorithms. This is not surprising as in the attractive case, the cost function to minimize is highly non-convex, and RITAP gets easily stuck in a local minimum.
Simulated annealing is observed to be the best-performing algorithm in this case. CBP instead is affected by convergence issues at high $\rho$. 
When it converges for less than 80\% of the instances we use a semi-transparent line in the plots. For high $\rho$, averaging over converged instances, CBP achieves marginally better performance compared to simulated annealing.

To consolidate our findings, we repeat the experiments for different values on $N,d$. Figure \ref{fig:gamma05_varyN} shows the curves of energy for a fixed $d=3$ and for several graph sizes.
\begin{figure}
    \centering
    \includegraphics[scale=1]{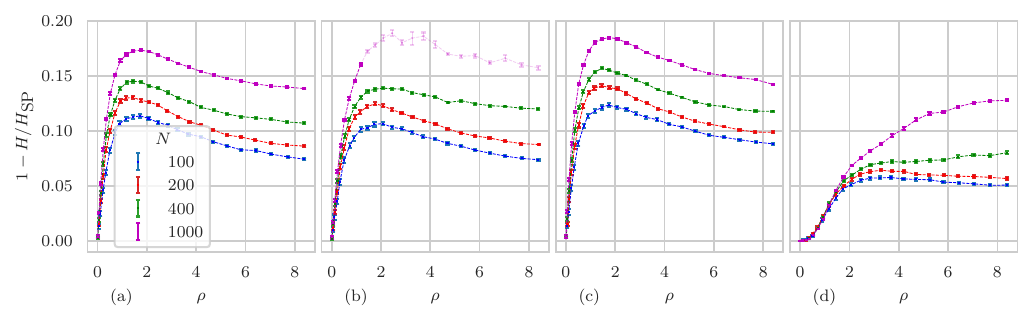}
    \caption{Relative energy difference between the paths obtained using each algorithm and the shortest paths. The higher the better. The algorithms used are: (a) greedy, (b) CBP, (c) simulated annealing, and (d) RITAP. The CBP line is semi-transparent when CBP converges on less than 80\% of the instances. We fix $\gamma = 1/2$, and consider graphs of degree $d=3$ with different sizes $N\in\{100,200,400,1000\}$. Results are averages over 200 instances of ITAP.}
    \label{fig:gamma05_varyN}
\end{figure}
Notice that the larger the size, the more energy is saved compared to the shortest paths. The peaks of curves with different $N$ align, confirming the validity of $\rho$ as a size-independent measure of congestion.
We remark that for higher $N$, CBP becomes less stable, but at high $\rho$, if it converges, it is usually the best-performing algorithm. 
Apart from CBP, simulated annealing maintains a small but consistent margin over other algorithms.
In the last set of experiments, whose results are depicted in figure \ref{fig:gamma05_varyd}, we set $N=200, \gamma=1/2$ and vary $d$. 

\begin{figure}
    \centering
    \includegraphics[scale=1]{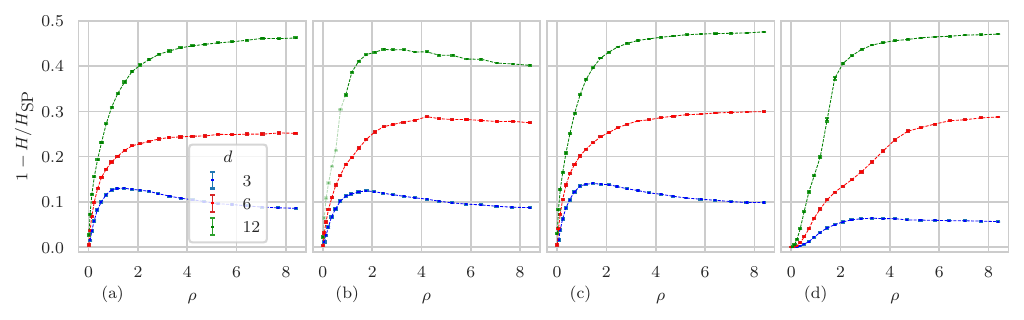}
    \caption{Relative energy difference between the paths obtained using each algorithm and the shortest paths. The higher, the better. The algorithms used are (a) greedy algorithm, (b) CBP, (c) simulated annealing, and (d) RITAP. Transparent parts of a line correspond to values for which the algorithm converges only for less than 80\% of instances. Results are averages over 200 instances of ITAP, with $N=200,\gamma=1/2$ and $d\in\{ 3,6,12\}$.}
    \label{fig:gamma05_varyd}
\end{figure}

We remark that the higher the degree of the RRG, the higher will be the energy saved when comparing to the shortest paths. We also observe that the performance gap between RITAP and other algorithms closes at high $d$ and $M$. For higher $d$, the proportion of instances for which CBP converges changes haphazardly with $M$. For instance, for $N=200$ and $d=12$, the proportion is above 80\% for the two first points, then decreases to around 60\% (transparent part) and goes back to higher values (opaque part). 

\subsubsection{Repulsive paths case ($\gamma=2$)} 
We now turn to the repulsive case, where $\phi$ is convex. The relaxed problem, TAP, in this case is a convex, and hence easy, optimization problem. 
Figure \ref{fig:allalgos_gamma2} shows the dependence of $1-H/H_\text{SP}$ on $\rho$ for different algorithms. The degree and the size of the graphs are fixed to $d=3$ and $N=500$. We also plot the curve corresponding to the TAP optimal energy, which constitutes an upper bound to the energy saving that can be achieved by any algorithm that minimizes the energy under the ITAP constraints. 
\begin{figure}[!ht]
     \centering
     \includegraphics[scale=1]{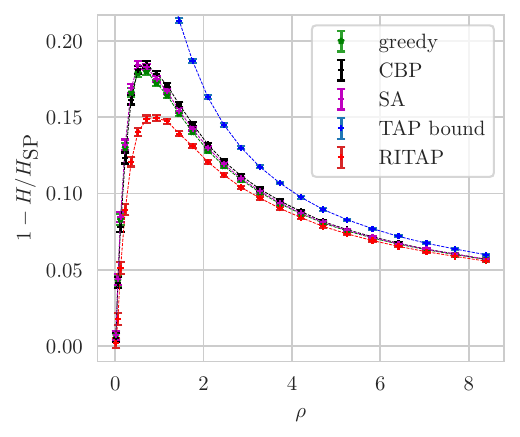}
     \cprotect\caption{Relative energy difference between the paths obtained using each algorithm and the shortest paths. The higher the better. Algorithms used are: greedy, CBP, simulated annealing, RITAP and, as an upper bound, the TAP optimal energy. Results are averaged over 200 instances of ITAP, with $d=3, \,N=500, \,\gamma=2$.}
     \label{fig:allalgos_gamma2}
\end{figure}
Greedy, CBP, and simulated annealing display approximately the same performance in terms of saved energy, instead RITAP is slightly less efficient. Increasing $\rho$ however all algorithms perform comparably and the gap between saved energy in ITAP and the TAP bound progressively decreases. Considering very small $M$, all the curves (except for the bound) go to zero. This is because when only a few paths are present, they are unlikely to share edges and therefore do not interact. In absence of interactions, the shortest paths are the global optimal solution. Conversely, the TAP bound follows a different behavior. For small values of $\rho$, the quantity $1-H/H_\text{SP}$ is close to 0.8. 
Indeed, since the paths are not constrained to be integer, the flow is split across several paths, allowing a considerable advantage of energy compared to any integer configuration. 

 In figure \ref{fig:gamma2_varyN} we vary $N$ and keep $d=3$ fixed. We observe that the size does not modify the phenomenology observed in \ref{fig:allalgos_gamma2}. Moreover the curves for different $N$ overlap with each other.
\begin{figure}[!ht]
    \centering
    \includegraphics[scale=1]{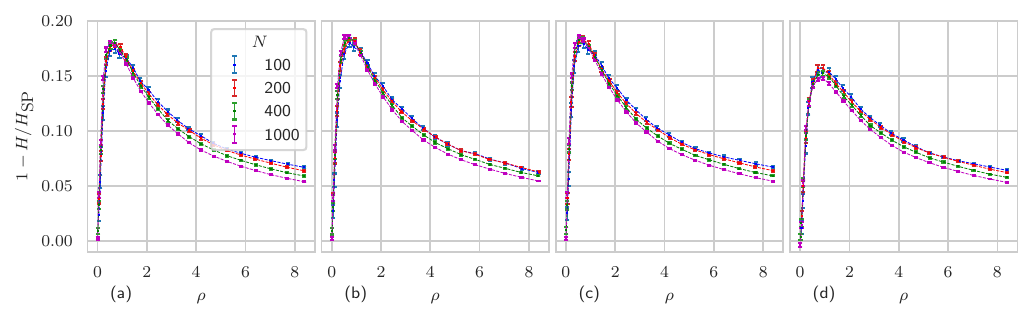}
    \caption{Relative energy difference between the paths obtained using each algorithm and the shortest paths. The algorithms used are: (a) greedy , (b) CBP, (c) simulated annealing, and (d) RITAP. The higher the better. Graphs of degree $d=3$ with different sizes $N\in\{100,200,400,1000\}$ are used. Results shown are averages over 200 realizations with $\gamma=2$.}
    \label{fig:gamma2_varyN}
\end{figure}
Next we fix $N=200,\gamma=2$ and vary the degree of the RRG. Figure \ref{fig:gamma2_varyd} illustrates the results. Similarly to the $\gamma=1/2$ case, as the degree increases, the quantity $1-H/H_\text{SP}$ reaches larger values, indicating that the shortest path algorithm is increasingly suboptimal.
Additionally, the simulated annealing gains some slight advantage over other algorithms for higher $d$. In the same figure we also plot the TAP upper bound; while for $d=3,6$ the gap between TAP and ITAP efficiencies narrows when increasing $\rho$, it appears to widen for $d=12$. However, for even higher $\rho$, the gap decreases and ultimately approaches zero (see figure \ref{fig:TAP_ITAP_conv_N128_varyd}).
Results for CBP have not been reported for $d=6,12$ since respectively for $\rho\geq 0.43$ and $\rho\geq 0.14$, more than half of the 200 instances fail to converge. Last, we remark that $\rho$ yields again a good collapse of the curves with different degrees, by aligning their peaks.

We end this section with a note on the speed of different algorithms.
In general the greedy algorithm stands out as the most computationally efficient algorithm that achieves good performance over the considered range of $M$. RITAP is also fast computationally, but not as effective unless $\rho$ is large. All the other algorithms incur in higher computational costs, posing challenges to scalability for large sizes.
\begin{figure}[!ht]
    \centering
    \includegraphics[scale=1]{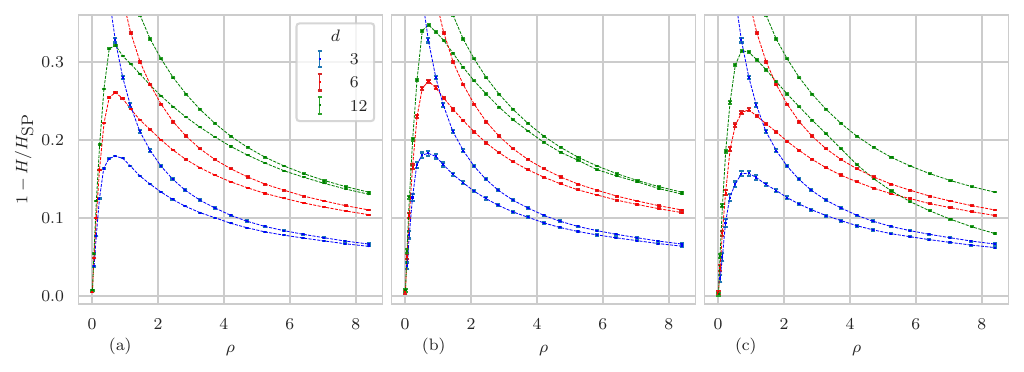}
    \caption{Relative energy difference between the paths obtained with each algorithm and the shortest paths. Lower curves are ITAP solutions outputted by: (a) greedy algorithm, (b) simulated annealing, (c) RITAP. The higher the better. The upper curves represent the TAP bound. Graphs of size $N=200$ with $\gamma=2$ and different degrees $d\in\{3,6,12\}$ are used. The results shown are averages over 200 realizations.}
    \label{fig:gamma2_varyd}
\end{figure}

\subsection{Regime of large average flow per edge $M = O(N^2)$: Integrality of relaxed solutions}
\label{sec:integ_relax_solutions}
In the previous section we saw that the parameter $\rho$ is the relevant quantity to consider when describing systems where the average flow per edge of order one. Here we argue that the problem exhibits a second intrinsic scaling regime when $M=\Theta(N^2)$.
We then investigate whether TAP solutions are also ITAP solutions when $M$ is large and $\phi$ is convex. Defining an effective parameter \begin{equation}
\label{eq:eta}
    \eta\coloneqq \frac{M}{N(N-1)},
\end{equation} we observe that a stable asymptotic behavior arises for $\eta\approx 1$ and continues for higher $M$.

We have $\eta=\E_D[D_{xy}]$, hence this regime corresponds to a dense demand matrix. 
We run RITAP for several different values of $d, N$ and plot quantities as a function of $\eta$. In this section we report the results for fixed $d=3, \gamma=2$ and variable $N$. Appendix \ref{app:num_exp_M=N^2} illustrates the additional results when varying $d$ and when using more realistic nonlinearities.
\begin{figure}[]
    \centering
    \includegraphics[scale=1]{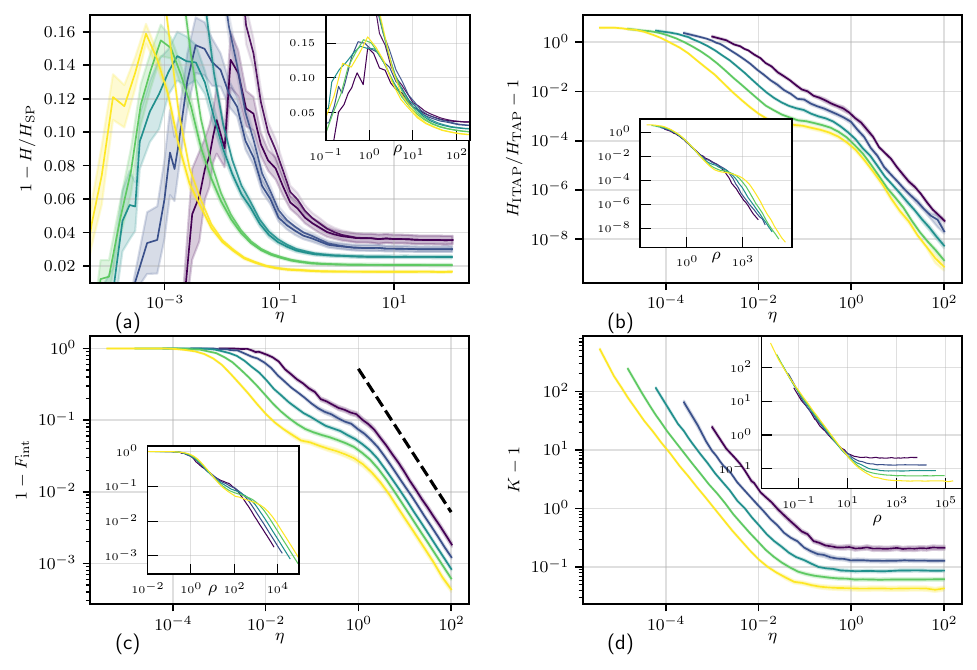}
    \caption{Properties of the TAP and ITAP solutions for large $M$.
    Experiments with RRG with $d=3$, $\gamma=2$. From darker to lighter the curves correspond to $N=32,64,128,256,512$. The shaded area indicates a $\pm1$ standard deviation interval around the mean. In insets, the same quantities as the main plots are plotted as a function of $\rho$. (a) Efficiency with respect to shortest paths for TAP (upper curves) and ITAP (lower curves) as a function of $\eta$. (b) Dependence on $\rho$ of the ratio between the TAP optimal energy and the energy obtained after projecting the TAP solution onto an integer one using RITAP. (c)  Fraction of non-integer paths in TAP as a function of $\eta$. The dashed line has a slope of $1/\eta$ (equivalently $1/M$), to illustrate asymptotic behavior. (d) Average degeneracy of support paths as a function of $\eta$.}
    \label{fig:energy_conv_tap_itap}
\end{figure}
Panel (a) in figure \ref{fig:energy_conv_tap_itap} shows the quantity $1-H/H_{SP}$ (averaged over instances) as a function of $\eta$, for several $N$s. 
In the upper curves $H$ is the optimal energy in TAP (let us call it $H_\text{TAP}$). In the lower curves it is the energy of the ITAP solution obtained with RITAP; we shall name this quantity $H_\text{ITAP}$. We notice that for low $M$ the two curves take quite different values, however as $M$ increases the gap between the two shrinks towards zero. To closely examine the convergence we also plot the ratio $H_\text{ITAP}/H_\text{TAP}-1$ in the right panel. The fact that this quantity goes to zero confirms that $H_\text{ITAP}\to H_\text{TAP}$ when $\eta\to \infty$. Since $H_\text{TAP}$ is a lower bound to $H_\text{ITAP}$, this result shows that RITAP is basically reaching the global minimum of ITAP, and thus in the high $\eta$ regime, ITAP is easy to solve approximately.

Next we explore how the convergence in energy reflects on the paths through which flow is routed in TAP and ITAP solutions. We say a path is integer if it carries an integer amount of flow at optimality in TAP. Using the notation of section \ref{sec:TAP_relaxation}, consider a TAP solution $\PM$. We define the fraction of integer paths in TAP as $F_\text{int}\coloneqq\frac{1}{M}\sum_{x,y \in (V\times V)} \sum_{a\in R_{xy}} \lfloor h_{xy}^a\rfloor$. \footnote{$\lfloor h_{xy}^a\rfloor$ is indeed the integer part of the flow being routed on $\pi_{xy}^a$.}
In other words, $F_\text{int}$ is the fraction of TAP optimal paths that already satisfy ITAP constraints. 
Figure \ref{fig:energy_conv_tap_itap}(c) depicts the dependence of $1-F_\text{int}$ on $\eta$. $1-F_\text{int}$ decreases as $\eta$ increases, moreover when $\eta\approx 1$, we observe the onset of a power law behavior $1/\eta$ as indicated by the dashed black line.
Overall this shows that also on the path level, for $\eta\gg 1$, a solution to TAP is almost a solution to ITAP, in the sense that almost all TAP paths are integer ($F_\text{int}\to 1$).
 In Appendix \ref{app:proofs} we prove that $F_\text{int}\to 1$ implies that $ H_\text{ITAP}/H_\text{TAP}\to 1$, therefore rigorously establishing the relation between path integrality and energy difference in TAP and ITAP.
 
 The main drawback of studying TAP paths at optimality is that these are not unique, thus we could be observing some peculiarity of the Frank-Wolfe algorithm rather than a fundamental property of the problem. To erase this doubt and to cast light onto the behavior of $F_\text{int}$ we consider a quantity which is unique at optimality in TAP. 
We start from the observation that, although not unique, path flows at optimality are not arbitrary either. Based on the optimal edge flows one can determine a set of paths on which positive flow is possible. We call these the support paths. The following result by Wardrop \cite{doi:10.1680/ipeds.1952.11259} characterizes the support paths.
\begin{proposition}[Wardrop's second principle]
\label{prop:wardrop_2_principle}
Let $G,D$ be a TAP instance and $\{h_{xy}^{\star a}\}_{(x,y)\in V\times V, a\in [|\Pi_{xy}|]}$ be any path flows at optimality. Also let $\pmb{I^\star}$ be the edge flows at optimality. \textbf{Then} for every $(x,y)\in V\times V$, for every $a\in[|\Pi_{xy}|]$, a necessary condition for $h_{xy}^{\star a}$ to be strictly positive is that
\begin{equation}
    \sum_{e\in \pi_{xy}^a} \phi'(I_e^\star)=\min_{\pi\in\Pi_{xy}}  \sum_{e\in \pi} \phi'(I_e^\star)
\end{equation}
\end{proposition}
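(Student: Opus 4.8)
The plan is to prove the contrapositive by an elementary flow-exchange (first-order optimality) argument. Writing $\ell(\pi)\coloneqq\sum_{e\in\pi}\phi'(I_e^\star)$ for the ``length'' of a path under the marginal edge costs, I want to show that if some path $\pi_{xy}^a$ carries strictly positive flow $h_{xy}^{\star a}>0$ but has $\ell(\pi_{xy}^a)>\min_{\pi\in\Pi_{xy}}\ell(\pi)$, then one can construct a feasible flow with strictly smaller energy, contradicting optimality of $\pmb I^\star$. Since $\ell(\pi_{xy}^a)\geq\min_{\pi\in\Pi_{xy}}\ell(\pi)$ always holds, this yields the claimed equality. (Equivalently, the statement is exactly the KKT stationarity condition for the linearly constrained convex program \eqref{eq:TAP_ITAP_formal} together with complementary slackness for the constraints $h_{xy}^a\geq 0$; I will present the self-contained exchange argument, which makes the same content transparent.)

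First I would pin down the minimizer. Because $\phi$ is increasing we have $\phi'\geq 0$, so deleting a cycle from a walk never increases $\ell$; hence the infimum of $\ell$ over $\Pi_{xy}$ equals its minimum over the finite set of simple $x$–$y$ paths and is attained. Call a minimizer $\pi^\star$. Here I use differentiability of $\phi$, which is implicit in the very statement of $\phi'$; were $\phi$ merely convex one would replace every $\phi'(I_e^\star)$ by a one-sided derivative and the argument would go through unchanged.

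Next I would construct the perturbation. Assume for contradiction $h_{xy}^{\star a}>0$ and $\ell(\pi_{xy}^a)>\ell(\pi^\star)$ (in particular $\pi^\star\neq\pi_{xy}^a$). For $\delta\in(0,h_{xy}^{\star a})$ move $\delta$ units of flow from $\pi_{xy}^a$ to $\pi^\star$: decrease $h_{xy}^a$ by $\delta$, increase the flow on $\pi^\star$ by $\delta$, and leave all other path flows unchanged. This is TAP-feasible: all path flows remain nonnegative, the demand for $(x,y)$ is preserved, and no other OD pair is affected. The induced edge flows change only on the symmetric difference $\pi^\star\,\triangle\,\pi_{xy}^a$, by $+\delta$ on edges of $\pi^\star\setminus\pi_{xy}^a$ and by $-\delta$ on edges of $\pi_{xy}^a\setminus\pi^\star$, the shared edges being untouched. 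Since $H$ is a finite sum of differentiable terms, $\delta\mapsto H(\pmb I(\delta))$ is differentiable at $0$ with derivative $\sum_{e\in\pi^\star\setminus\pi_{xy}^a}\phi'(I_e^\star)-\sum_{e\in\pi_{xy}^a\setminus\pi^\star}\phi'(I_e^\star)=\ell(\pi^\star)-\ell(\pi_{xy}^a)<0$ (the shared edges cancel in the difference). Therefore $H(\pmb I(\delta))<H(\pmb I^\star)$ for all sufficiently small $\delta>0$, contradicting optimality of $\pmb I^\star$. Hence $h_{xy}^{\star a}>0$ forces $\ell(\pi_{xy}^a)=\min_{\pi\in\Pi_{xy}}\ell(\pi)$.

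I do not expect a genuine obstacle here: the computation is a routine directional-derivative estimate. The only points that need a word of care are (i) ensuring the minimum over the a priori infinite set $\Pi_{xy}$ is attained, which I handle by reducing to simple paths using $\phi'\geq 0$, and (ii) the differentiability of $\phi$, which the statement already presupposes; both are minor.
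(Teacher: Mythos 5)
Your proof is correct, but it takes a genuinely different route from the paper's. The paper proves this proposition by writing down the Lagrangian of the constrained program \eqref{eq:TAP_ITAP_formal}, extracting the KKT conditions (stationarity, primal/dual feasibility, complementary slackness), and observing that complementary slackness forces the multiplier $\mu_{xy}^\pi$ to vanish on every path with positive flow, so that all support paths share the common value $-\lambda_{xy}$ of $\sum_{e\in\pi}\phi'(I_e^\star)$; the claim that this common value is actually the \emph{minimum} over $\Pi_{xy}$ is then only asserted, with a pointer to ``second order optimality conditions.'' Your flow-exchange argument is more elementary and self-contained: by shifting $\delta$ units of flow from a non-shortest support path onto a shortest one and computing the one-sided derivative $\ell(\pi^\star)-\ell(\pi_{xy}^a)<0$ of $H$ along this feasible direction, you directly contradict optimality, which simultaneously delivers both the equality of lengths on support paths and their minimality --- precisely the part the paper leaves implicit. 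Your attention to attainment of the minimum over $\Pi_{xy}$ (reducing to simple paths via $\phi'\ge 0$) and to the cancellation on shared edges is appropriate, and your argument requires only differentiability of $\phi$ at a global optimum, not convexity. What the paper's KKT route buys in exchange is reusability: the same Lagrangian and multiplier bookkeeping is recycled verbatim in the proof of Lemma \ref{lemma:double_D} on rescaling the demand matrix, whereas your exchange argument is tailored to this one statement.
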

A proof of this fact is presented in Appendix \ref{app:proofs}.
Wardrop's second principle says that the support paths are the shortest paths in the graph whose edges $e\in E$ are weighed with $\phi'(I_e^\star)$. If the optimal flow between nodes $x$ and $y$ is split across several paths, it means all of these paths are the shortest paths in the aforementioned graph. We indicate with $\mathcal S_{xy}(\pmb{I^\star})\coloneqq\argmin_{\pi\in\Pi_{xy}} \sum_{e\in \pi} \phi'(I_e^\star)$ the set of support paths starting and terminating respectively at nodes $x$ and $y$. Notice that for a given instance of TAP, the support paths are unique, as they depend only on the optimal edge flows.  The larger the cardinality of $\mathcal S_{xy}(\pmb{I^\star})$, the more potential split across paths, the less integrality in TAP paths.
Let $Q\coloneqq\sum_{x,y\in V\times V} \mathbb I[D_{xy}>0]$ be the number of nonzero entries in the demand matrix.
We define the average TAP path degeneracy to be $K\coloneqq \frac{1}{Q}\sum_{x,y\in V\times V} \mathbb I[D_{xy}>0]\left|\mathcal S_{xy}(\pmb{I^\star})\right|$. 
The behavior of $K-1$ as a function on $\eta$ is plotted in figure \ref{fig:energy_conv_tap_itap} (d). For small $\eta$, there is a huge degeneracy in the support paths, with the flow being possibly split across hundreds of paths. When $\eta$ increases the degeneracy decreases and plateaus to a fixed value when $\eta$ becomes larger than one.

Let us explain why the degeneracy plateauing plays a crucial role in guaranteeing that TAP converges to ITAP. Fix two nodes $x,y$ and suppose $\left|\mathcal S_{xy}(\pmb{I^\star})\right|$ is constant in $M$: when $M$ increases so does $D_{xy}$, however the flow will always be routed through the same number of paths joining $x$ and $y$. In turn this implies that an increasing fraction of the flow is integer, resulting in TAP solutions being close to an ITAP one.\footnote{a moment of thought reveals that the fractional part of the flow is smaller than $\left|\mathcal S_{xy}(\pmb{I^\star})\right|$, which is constant in $M$. Therefore the proportion of fractional flow is at most $\left|\mathcal S_{xy}(\pmb{I^\star})\right|/D_{xy}$ which goes to zero.} Translating this reasoning in formulas we arrive at the following identity which gives a lower bound to $F_\text{int}$ in terms of the paths degeneracies.
\begin{equation}
\label{eq:bound_Fint_degen}
     F_\text{int}\geq1-\frac{1}{\eta}\left(K-1\right)
\end{equation}
We give a proof of it in Appendix \ref{app:ITAP_relaxed}. For example if $S_{xy}(\pmb{I^\star})=1$ (no degeneracy) for all $(x,y)$, then the bound gives $F_\text{int}\geq 1$.
Therefore if $K$ is constant in $M$, $1-F_\text{int}$ goes to zero like $1/\eta$ if not faster, reproducing the rate observed in figure \ref{fig:energy_conv_tap_itap} (c). 

One question stands: why when $M$ increases the degeneracy decreases and ultimately plateaus? We conjecture that this phenomenon is related to the demand matrix becoming more dense and uniform (in the sense that all entries are of the same order) for higher $M$. In particular, the distribution of OD pairs, rather than their number, plays a crucial role in reducing the degeneracy. We support this claim by showing that the number of OD pairs can be increased while leaving the degeneracy unchanged. Consider two TAP instances with the same $G$ and demand matrices respectively $D$ and $2D$. Let $\phi(x)=x^\gamma$ with $\gamma\geq 1$, and suppose we know the path flows at TAP optimality in the instance $G, D$. Then we show in Appendix \ref{app:proofs} that by keeping the paths the same and doubling the flow on each of them, we obtain a TAP optimal solution to the instance $G,2D$. This implies that the support paths do not change. We conclude that the degeneracy is not influenced by the norm of $D$. 
The above results and discussion lead us to formulate the following conjecture. 
\begin{conjecture}
\label{conj:limit_degen}
    Let $G\sim \text{RRG}(N,d)$ be an RRG with $N$ nodes and degree $d$, and $D$ be a demand matrix obtained by sampling $M$ OD pairs uniformly at random. Suppose $\phi$ is convex and let $K(G,D)$ be the average degeneracy in the TAP instance $G,D$. Define  
    $\kappa(\eta)\coloneqq \lim\limits_{\substack{M,N\to\infty\\ M/(N(N-1))=\eta}} \E_{G,D} K(G,D)$. \textbf{Then} $\lim\limits_{\eta\to\infty} \kappa(\eta)$ exist and is finite.
\end{conjecture}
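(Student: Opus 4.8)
The idea is to identify the deterministic object that the (unique) optimal edge flow converges to in the double limit and to read the support paths off it. For concreteness take $\phi(x)=x^\gamma$ with $\gamma>1$; the case $\gamma=1$ is the non-interacting one, where every edge has weight $\phi'(I_e^\star)=1$ and $\mathcal S_{xy}(\pmb I^\star)$ is just the set of graph geodesics from $x$ to $y$, while a general convex $\phi$ is handled by replacing the homogeneity rescaling below with the regular-variation behaviour of $\phi'$ at infinity. Because $\phi$ is homogeneous, the rescaling $I_e\mapsto I_e/M$ maps the TAP instance $(G,D)$ to the TAP instance $(G,D/M)$: indeed $H(\pmb I)=M^\gamma\sum_e(I_e/M)^\gamma$ and the demand constraint becomes $\sum_a h^a_{xy}/M=D_{xy}/M$. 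Since each $D_{xy}$ is $\mathrm{Binomial}(M,1/(N(N-1)))$, a short second-moment computation gives $\sum_{x\neq y}|D_{xy}/M-U_{xy}|=O_{\mathbb P}(1/\sqrt\eta)$, with $U_{xy}\coloneqq\I[x\neq y]/(N(N-1))$, uniformly in $N$. As $\gamma>1$ the energy is strictly convex in the edge flows, so the optimal edge flow is the unique minimiser of a strictly convex program and, by convex sensitivity analysis, moves continuously with the demand; hence $\pmb I^\star/M$ is, with high probability, $o_\eta(1)$-close to the optimal edge flow $\pmb g=\pmb g(G)$ of the continuum instance $(G,U)$, and the normalised Wardrop weights $w_e\coloneqq\phi'(I_e^\star)/M^{\gamma-1}=\gamma(I_e^\star/M)^{\gamma-1}$ converge to $\bar w_e\coloneqq\gamma g_e^{\gamma-1}$.

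Since shortest $x$--$y$ paths (Proposition~\ref{prop:wardrop_2_principle}) depend on the weights only up to a positive scalar, and since the minimisers of the linear functional $\pi\mapsto\sum_{e\in\pi}w_e$ over the finite set of $x$--$y$ paths are, for every $w$ in a small enough neighbourhood of $\bar w$, contained in the minimisers for $\bar w$, one gets $\limsup|\mathcal S_{xy}(\pmb I^\star)|\le|\mathcal S_{xy}(\bar{\pmb w}(G))|$ once $\eta$ is large. The pairs with $D_{xy}>0$ have density $\to 1-e^{-\eta}$ independently of $G$, so summing yields
\begin{equation}
\label{eq:kappa_upper}
  \kappa(\eta)\ \le\ \lim_{N\to\infty}\E_G\Big[\tfrac{1}{N(N-1)}\sum_{x\neq y}\big|\mathcal S_{xy}(\bar{\pmb w}(G))\big|\Big]\ +\ o_\eta(1).
\end{equation}
The matching lower bound $\liminf|\mathcal S_{xy}(\pmb I^\star)|\ge|\mathcal S_{xy}(\bar{\pmb w})|$ is equivalent to the persistence of the exact weight ties of the uniform-demand equilibrium when the demand is perturbed from $U$ to $D/M$: each such tie $\sum_{e\in\pi_1}\bar w_e=\sum_{e\in\pi_2}\bar w_e$ is an active constraint of the equilibrium program, and I would show it stays active in a neighbourhood of $U$ by proving non-degeneracy of the equilibrium for the RRG ensemble; alternatively one can sidestep this via the decomposition $D^{(2\eta)}\stackrel{d}{=}D_1^{(\eta)}+D_2^{(\eta)}$ together with the exact invariance $K(G,cD)=K(G,D)$ for $\gamma\ge1$ (shown in the appendix for $c=2$ and extending to all positive integers), which forces the governing demand along $\eta,2\eta,4\eta,\dots$ to concentrate around scalar multiples of $U$ and hence $\kappa$ to be Cauchy along that subsequence.

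It remains to prove that $k_N\coloneqq\E_G\big[\tfrac{1}{N(N-1)}\sum_{x\neq y}|\mathcal S_{xy}(\bar{\pmb w}(G))|\big]$ has a finite limit. For finiteness I would first show the equilibrium weights are almost uniform: by Benjamini--Schramm convergence of $\mathrm{RRG}(N,d)$ to the $d$-regular tree together with a contraction/uniqueness argument for the continuum uniform-demand equilibrium, $\bar w_e=\bar w^\star(1+o(1))$ with fluctuations that stay $o(1)$ even summed along a path of length $\Theta(\log N)$; consequently every $\bar{\pmb w}$-shortest path is a graph geodesic, so $|\mathcal S_{xy}(\bar{\pmb w})|$ is dominated by the number of $x$--$y$ geodesics in the RRG, a quantity with bounded expectation when averaged over $(x,y)$. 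Convergence of $k_N$ I would then deduce from the same local weak convergence, writing $|\mathcal S_{xy}(\bar{\pmb w})|$ for $x,y$ at a fixed distance as an almost-local functional of the limiting tree decorated by the limiting weights and summing against the (converging) distance profile of the RRG; the same machinery simultaneously establishes that the inner double limit $\kappa(\eta)$ exists. The genuine obstacle is the lower bound of the previous paragraph --- ruling out that the degeneracy drops strictly in the limit because the weight ties of the uniform-demand equilibrium are accidental and broken by every nearby demand. All remaining steps are, in principle, an assembly of standard convex-analysis and local-weak-convergence estimates, the only delicate point of uniformity being that the ``$M$ large enough'' threshold in the argmin-stability step can be taken uniform over the RRG ensemble, which I expect to follow from a quantitative version of the almost-uniformity of $\bar{\pmb w}$.
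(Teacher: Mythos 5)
First, a point of comparison: the statement you are addressing is stated in the paper as a \emph{conjecture}. The authors do not prove it; they support it with numerical evidence (the plateau of $K-1$ at large $\eta$) and with the heuristic, backed by Lemma~\ref{lemma:double_D}, that the degeneracy depends on the direction of the demand matrix rather than on its norm. So there is no proof in the paper to compare against, and your argument must be judged on its own.

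As written, your proposal is a plausible research programme but not a proof, and the gaps are load-bearing. (i) Upper semicontinuity of the argmin of $\pi\mapsto\sum_{e\in\pi}w_e$ only yields $\limsup_{\eta\to\infty}\kappa(\eta)\le k_\infty$, where $k_\infty$ is the degeneracy of the continuum instance $(G,U)$; without the matching lower bound --- persistence of the \emph{exact} weight ties of the uniform-demand equilibrium under the $O(1/\sqrt{\eta})$ perturbation of the demand --- you cannot conclude that the limit exists, which is the content of the conjecture. You correctly flag this as the obstacle, but it is essentially the whole theorem: ties among sums of $\Theta(\log N)$ real weights are a priori non-generic, while the paper's numerics show $K-1>0$ on the plateau, so the claimed limit is nontrivial and the lower bound cannot be waved away. (ii) The proposed workaround via $D^{(2\eta)}\stackrel{d}{=}D_1^{(\eta)}+D_2^{(\eta)}$ together with $K(G,cD)=K(G,D)$ does not close this gap: Lemma~\ref{lemma:double_D} applies only to exact scalar multiples of a fixed $D$, and $D_1^{(\eta)}+D_2^{(\eta)}$ is not a scalar multiple of $D_1^{(\eta)}$, so no Cauchy property along $\eta,2\eta,4\eta,\dots$ follows. (iii) The ``small enough neighbourhood'' in the argmin-stability step must beat the gap between the best and second-best path costs under $\bar{\pmb w}$; since there are exponentially many candidate paths of length $\Theta(\log N)$, this gap can shrink with $N$, and making the threshold uniform in $N$ and in the graph is not a routine detail but the quantitative heart of the estimate. (iv) Finally, $|\mathcal S_{xy}(\bar{\pmb w})|$ depends on all geodesics between vertices at distance $\Theta(\log N)$, so it is a global, not an almost-local, functional of the weighted graph; invoking Benjamini--Schramm convergence to get existence and finiteness of $k_N$ and of the inner limit $\kappa(\eta)$ requires substantial additional work that is only asserted. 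In short, the sketch is consistent with the paper's heuristics and identifies the right limiting object, but it does not constitute a proof of the conjecture.
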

Plugging this result into $F_\text{int}$ one sees that $ \lim\limits_{\substack{M,N\to\infty\\ M/(N(N-1))=\eta}} \E_{G,D}F_\text{int}$, approaches $1$ as $\eta\to\infty$.

\section{Two scales present also on real-world graphs}

\subsection{A tale of two scales}
In our above analysis, we identified two main regimes in which the problem's behavior is interesting. These correspond respectively to when the effective flow per edge $\rho$ and $\eta$, defined in \eqref{eq:rho} and \eqref{eq:eta}, are of order one. Figure \ref{fig:energy_conv_tap_itap} illustrates the presence of the two regimes by showing the same quantities plotted as a function of $\eta$ and $\rho$, respectively, in the main panels and in the insets. One observes that when $\rho=\Theta(1)$ the curves for different sizes collapse when plotted against $\rho$. The importance of the regime $\rho=\Theta(1)$ is that this is where the effective interactions among paths arise and are the strongest. Panel (a) shows that the greatest relative difference between the ITAP energy and the shortest paths (non-interacting case) one is indeed attained for $\rho\approx 1$. Instead for $\rho\to0$ the effective interaction goes to zero and the ITAP energy approaches that of the shortest paths. \footnote{the fact that the peak of $1-H/H_\text{SP}$ happens for $\rho\approx 1$ is found also in all other plots, except for figure \ref{fig:gamma05_varyd}.}

Increasing $M$ we move into a setting there $\rho\gg 1$ and $\eta\approx 1$. In panels (b,c) one observes that the 'knee' of the curve, where the power law arises, indeed corresponds to $\eta\approx 1$, but different values of $\rho$, depending on the size.
Similarly, in panel (d), the degeneracy plateaus at fixed $\eta$ but not at fixed $\rho$. 

\subsection{Validation on real-world graphs}

\label{sec:real_data_exp}
To prove the robustness of our results we consider two real-world traffic networks:
\begin{itemize}
    \item The Anaheim city road network, composed by $N=416$ nodes. The edges are directed and average total degree (in degree plus out degree) is 4.39. 
    \item The Eastern Massachusetts highway network, with $N=74$ nodes. The edges are directed and average total degree is 6.97.
\end{itemize}
Both networks are taken from \cite{bstabler}. We use $\phi(x)=x^\gamma$ as nonlinearity and sample the OD pairs uniformly at random, always with the constraint that the origin and destination must be distinct. 
Figure \ref{fig:real_nets_highM} depicts the results of these experiments. 
\begin{figure}
    \centering
    \includegraphics[scale=1]{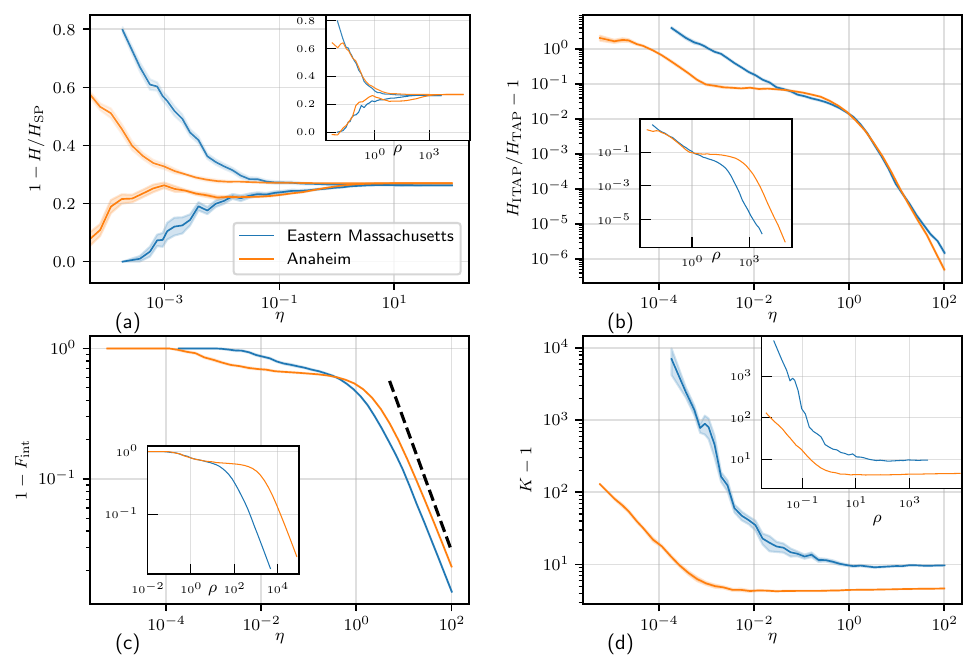}
    \caption{Experiments on Anaheim and Eastern Massachusetts networks with OD pairs picked uniformly at random, and nonlinearity $\phi(x)=x^2$. The shaded area indicates a $\pm1$ standard deviation interval around the mean. The randomness is over 20 realizations of the OD pairs. In insets, the same quantities as the main plots are plotted as a function of $\rho$. (a) Relative energy gain with respect to the shortest path routing. The upper curves correspond to TAP optimality, the lower ones to the RITAP solution. (b) Dependence on $\rho$ of the ratio between the TAP optimal energy and the energy obtained after projecting the TAP solution onto an integer one using RITAP. (c)  Fraction of non-integer paths in TAP as a function of $\eta$. The dashed line has a slope of $1/\eta$ (equivalently $1/M$), to illustrate asymptotic behavior. (d) Average degeneracy of support paths as a function of $\eta$. In the case of Ahaheim the computation of the support paths was too lengthy computationally, therefore we instead computed the number of paths over which TAP flow is routed in the FW solution. This is found to agree well with the actual degeneracy in the cases (low $\eta$) in which it could be computed.}
    \label{fig:real_nets_highM}
\end{figure}
The quantities that are plotted are the same as in figure \ref{fig:energy_conv_tap_itap}. 
As in the RRG case the ITAP energy approaches the TAP one (panels (a,b)) when $\eta$ increases. Moreover the degeneracy $K-1$ plateaus at high $M$ (panel (d)) thus producing the $1/\eta$ rate in the quantity $1-F_\text{int}$ (panel (c)). 
Looking at the insets in panels (a,b,c) we remark that when $\rho=\Theta(1)$ the curves overlap when plotted as a function of $\rho$. Instead, considering the regime $\eta=\Theta(1)$ we see that curves especially in panels (b,c) collapse when plotted against $\eta$, with power law behavior arising for $\eta\approx 1$.
This confirms that $\eta$ and $\rho$ and are the relevant parameters when studying the asymptotic and fixed flow per edge regimes. To conclude, we also ran additional numerical experiments with typical nonlinearities borrowed from traffic assignment. We remark that the above phenomenology is quite robust with respect to the functional form of the nonlinearity. See Appendix \ref{app:num_exp_M=N^2} for the detailed results.

\section{Conclusions}
Our work provides an examination of the integer traffic assignment problem on sparse graphs, with origin-destination pairs picked uniformly at random. 
We first devised several algorithms, most of them based on the idea of optimizing the position of one path while keeping the others fixed. The only exception is RITAP which is based on relaxing the integral flow constraint. We found a more transparent derivation of CBP, and proposed a novel method to sample self-avoiding paths on arbitrary graphs. 

To compare the algorithms we focused on the regime where the flow per edge is of order one, $\rho\coloneqq 2M\log N/(Nd\log d)=O(1)$. 
In the attractive case ($\phi$ concave), we see that simulated annealing and CBP are the most effective. CBP, however, fails to converge in some settings.

In the repulsive case ($\phi$ convex) we find that all algorithms except for RITAP attain approximately the same performance, indicating that elaborated methods are not significantly more performant than simple ones like the greedy algorithm. Only simulated annealing exhibits a marginally better performance when the degree of the RRG increases. 

We conclude that in the absence of computational cost constraints, simulated annealing represents a valid choice to solve ITAP. When taking speed into consideration, we find that the greedy algorithm and RITAP are orders of magnitude faster while sacrificing only a few per cent in energy. RITAP in particular benefits from an implementation that makes its running time independent of $M$, when $M>N^2$. It is thus better suited to cases where $M$ is very large. Finally, we remark that CBP is also effective, but its slowness and unreliable convergence limit its applicability. 

Next, using RITAP, we investigate the regime $M=\Theta(N^2)$ where every entry of the demand matrix is of order one. We do so in the case $\phi(x)=x^2$. We observe that as $\eta=M/(N(N-1))$ grows, TAP approaches ITAP. Therefore, if we are interested in solving the NP-hard problem ITAP in this limit, we can instead solve TAP, which is a convex problem and the relaxed solution will be already very close to an integer one. 
The convergence of TAP to ITAP is well explained by the fact that the degeneracy of the support paths plateaus when $\eta > 1$. We conjecture that the degeneracy becoming constant for higher $M$ originates from the demand matrix becoming more uniform (in the sense that all entries have similar magnitude). In particular the distribution of OD pairs, rather than their number plays a crucial role in controlling the degeneracy. 

By fixing a graph and gradually increasing $M$ our study characterizes the phases encountered by the system. One first finds the regime $\rho=\Theta(1)$, where paths start interacting. Then, as $\eta$ becomes of order one, the system enters in an asymptotic regime where most relevant quantities either plateau (e.g. $1-H/H_\text{SP}$, $K$) or exhibit a power law behavior (e.g. $H_\text{ITAP}/H_\text{TAP}-1$, $1-F_\text{int}$). This phase is characterized by the aforementioned convergence of TAP to ITAP. The relevance and properties of the two scales, $\rho = O(1)$ and $\eta = O(1)$, were further verified in the case of real-world graphs. In future work it would be important to consider realistic and not random origin-destination pairs and investigate how that changes the properties of the problem. 

\section*{Acknowledgements}
We thank Bill Chi Ho Yeung for sharing with us the code of the CBP algorithm.
\bibliography{biblio}
\appendix
\section{Pseudocode of the greedy algorithm}
\label{app:pseudocode_greedy}
The following is a pseudocode of an efficient implementation of the greedy algorithm.
\begin{algorithm}[H]
\caption{Greedy algorithm} 
\label{alg:greedy_algorithm}
\begin{algorithmic}[1]
\State\textbf{Input: } $G=(V,E)$ graph over which paths are sampled, $\{s^\mu\}_{\mu\in[M]},\{t^\mu\}_{\mu\in [M]}$ respectively the origins and destinations of each path,
$\{\pi^\mu_0\}_{\mu\in [M]}$ set of initial paths.
\State\textbf{Output:} A set of paths $\{\pi^\mu_0\}_{\mu\in [M]}$, that correspond to a local minimum of $H$.
\State Gflow$\gets$ copy($G$) \Comment{the weight on edge $e$ of Gflow is $I_e$} 
\State Gcost $\gets$ copy($G$)  \Comment{the weight on edge $e$ of Gcost is $\phi(I_e)$}
\For{$e \in G.$edges}
\State Gflow.weight($e$)$\gets$0
\EndFor
\For{$\mu=1,\dots,M$}
\State $\pi^\mu\gets \pi_0^\mu$
\For{$k=1,\dots,$ length($\pi^\mu)-1$}
\State $e\gets [\pi^\nu[k],\pi^\nu[k+1]]$
\State Gflow.weight($e$)$\gets$ Gflow.weight($e$)+1 \Comment{populating the flow graph with paths}
\EndFor
\EndFor
\State H$\gets 0$ 
\For{$e \in G.$edges} \Comment{loop to compute current value of $H(\pmb I)$}
\State Gcost.weight($e$)$\gets\phi(\text{Gflow.weight($e$)}+1)-\phi(\text{Gflow.weight($e$)})$
\State H$\gets$ H+Gcost.weight($e$)
\EndFor
\State flagconv $\gets$ False\Comment{flag for convergence}
\While{not flagconv}\Comment{main loop}
\For{$\nu=1,\dots,M$}\Comment{loop over all paths}
\For{$k=1,\dots,\text{length($\pi^\nu$)}-1$}
\State $e\gets [\pi^\nu[k],\pi^\nu[k+1]]$
\State Gflow.weight($e$)$\gets$ Gflow.weight($e)-1$\Comment{removing path $\nu$ from the graph so that one has $I^{\backslash\nu}$}
\State Gcost.weight($e$)$\gets\phi(\text{Gflow.weight($e$)}+1)-\phi(\text{Gflow.weight($e$)})$
\EndFor
\State $\pi^\nu\gets$  shortestPath(Gcost, $s^\nu,t^\nu$)\Comment{updating path $\nu$}
\For{$k=1,\dots,\text{length($\pi^\nu$)}-1$}
\State $e\gets [\pi^\nu[k],\pi^\nu[k+1]]$
\State Gflow.weight($e$)$\gets$ Gflow.weight($e$)+1\Comment{adding back path $\nu$ to the graph}
\State Gcost.weight($e$)$\gets\phi(\text{Gflow.weight($e$)}+1)-\phi(\text{Gflow.weight($e$)})$
\EndFor
\EndFor
\State newH$\gets 0$ \Comment{measuring energy after all paths are updated}
\For{$e \in G.$edges}
\State newH$\gets$newH+Gcost.weight($e$)
\EndFor
\If{H == newH}
\State flagconv$\gets$True \Comment{If $H(\pmb I)$ does not change once all the paths are updated, then the algorithm has converged}
\EndIf
\State H $\gets$ newH
\EndWhile
\end{algorithmic}
\end{algorithm}
The method '.weight($e$)' accesses the weight of edge $e$. '$G$.edges' is a list of all the edges in $G$. Each path is represented as a list of nodes, i.e., $\pi=[\pi[1],\pi[2],\dots,\pi_{\text{length}(\pi)}]$. Similarly an edge is represented as a list of two nodes. The function 'shortestPath($G,s,t$)' instead returns she shortest path from $s$ to $t$ in the weighted graph $G$.

\section{Derivation of CBP}
\label{app:BP_derivation}

The conditional probability distribution \eqref{eq:marginal_directed} can be represented by a factor graph, where variable nodes are the at the edges and factor nodes at the nodes of the original network. Let $(ij)$ be an (undirected) edge on the graph $G$. We place two variables on this edge $\sigma_{i\to j}^\nu =\mathbb{I}[i\to j\in \pi^\nu]$ and $\sigma_{j\to i}^\nu =\mathbb{I}[j\to i\in \pi^\nu]$.
The probability distribution (\ref{eq:conditional_prob_with_edgecosts})can be decomposed into factors as $P\left(\{\sigma^{\nu}_{i\to j}\}_{(ij)\in E}|\{H^{\nu}_{ij}\}_{ (ij)\in E},s^{\nu},t^{\nu}\right) \propto f_{s^\nu} f_{t^\nu} \prod_{i\in [N]\backslash s^\nu,t^\nu} f_i $. 
Let us look at these factors:
\begin{itemize}
    \item $f_i=\mathbb{I}[\sum_{\substack{k,l\in \partial i\\k\neq l}}\sigma^{\nu}_{k\to i}+\sigma^{\nu}_{i\to l}=(0 \text{ OR } 2)] \exp{(-\beta \sum_{k\in \partial i} H^{\nu}_{ik} \sigma^{\nu}_{i\rightarrow k} )}]$.
    This constraint ensures that $\pi^\nu$ passes through node $i$ no more than once. It also ensures that the $\{\sigma_{i\to j}\}$ form a valid path and it contains the weighting factor $e^{-\beta H^{\nu}_{ik} }$ that represents the cost of passing through edge $i,k$, independently of direction. This constraint is placed at every node in the graph except for $s^\nu,t^\nu$
    \item $f_{t^\nu}=\mathbb{I}\left[\sum_{k\in\partial t^\nu}\sigma^{\nu}_{k\to     t^\nu}=1\right]\mathbb{I}\left[\sum_{k\in\partial t^\nu}\sigma^{\nu}_{t^\nu\to k}=0\right] $ imposes that the path ends at $t^\nu$.
    \item $f_{s^\nu}=\mathbb{I}\left[\sum_{k\in\partial s^\nu}\sigma^{\nu}_{s^\nu\to k}=1\right]\mathbb{I}\left[\sum_{k\in\partial s^\nu}\sigma^{\nu}_{k\to s^\nu}=0\right]\prod_{k \in \partial s^\nu}  ( \exp{(-\beta H^{\nu}_{s^\nu k}\sigma^{\nu}_{s^\nu \rightarrow k})} )$ imposes that the path starts at $s^\nu$.
\end{itemize}
Having defined the quantity $f_i$, one can draw the factor graph as shown in Figure \ref{fig:factor_graph}.

 \begin{figure} [H]
     \centering
     \includegraphics[width=0.38\linewidth]{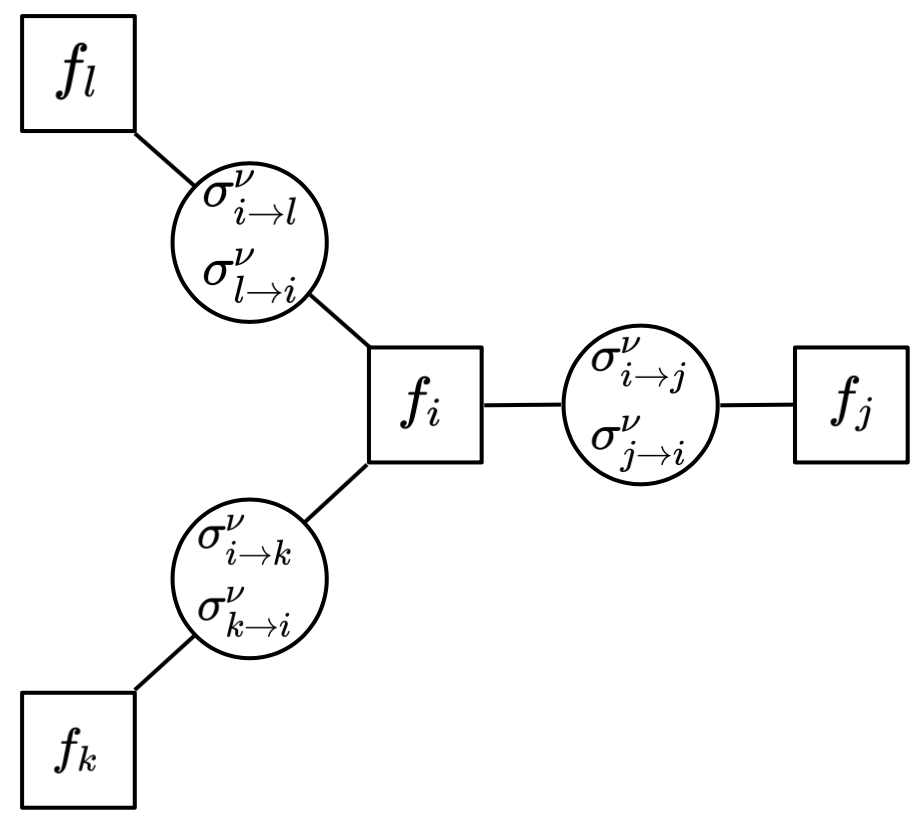}
     \caption{Factor graph associated to the probability conditional probability distribution (\ref{eq:conditional_prob_with_edgecosts}). }
     \label{fig:factor_graph}
\end{figure}

From the construction of the graph, one can deduce that if the original network is tree like, the factor graph will remain tree like given our choice of variables. Given this property, BP equations are expected to converge to a good solution if the graph does not have too many loops.

We can now derive the BP equations for this factor graph; this gives two sets of messages. Since each variable node has exactly two neighbouring factor nodes, simplifications of the BP equations leads to the iteration of only one set of messages. We denote by $\chi^{\nu}_{i\rightarrow j}$ the message going from node $i$ to node $j$. The message $\chi^{\nu}_{i\rightarrow j}(\sigma^{\nu}_{i\to j},\sigma^{\nu}_{j\to i})$ depends on the value of the variable, that can take four different values. First notice that $\nu_{i\to j}(1,1)=0$, since the path cannot backtrack on one edge. For the other three values ($\sigma^\nu_{i\to j},\sigma^\nu_{j\to i}) \in$ \{(0,0),(1,0),(0,1)\} (for a node $i\neq s^\nu,e^\nu$), update equations are derived: 

 \begin{align}
     \chi^{\nu}_{i\to j}(0,0)&=\frac{1}{Z_{i\to j}}\left[\prod_{k\in\partial i\backslash j}\chi^{\nu}_{k\to i}(0,0)+\sum_{\substack{k,l\in\partial i\backslash j\\ k\neq l}} e^{-\beta H^\nu_{il}}\chi^{\nu}_{k\to i}(1,0)\chi^{\nu}_{l\to i}(0,1)\prod_{r\in \partial i\backslash j,k,l}\chi^{\nu}_{r\to i}(0,0)\right] \nonumber  \\
     \chi^{\nu}_{i\to j}(1,0)&=\frac{e^{-\beta H^\nu_{ij}}}{Z_{i\to j}}\sum_{k\in\partial i\backslash j}  \chi^{\nu}_{k\to i}(1,0)\prod_{l\in \partial i\backslash j,k}\chi^{\nu}_{l\to i}(0,0)\\
     \chi^{\nu}_{i\to j}(0,1)&=\frac{1}{Z_{i\to j}}\sum_{k\in\partial i\backslash j} e^{-\beta H^\nu_{ik}} \chi^{\nu}_{k\to i}(0,1)\prod_{l\in \partial i\backslash j,k}\chi^{\nu}_{l\to i}(0,0) \nonumber
\label{BP_chi}
\end{align}

Equations \ref{BP_chi} are intuitive. For example, to derive the expression of $\chi^{\nu}_{i\to j}(0,0)$, two cases need to be considered: the case where none of the neighboring edges to $i$ is occupied, and the case where a path passes through node $i$, but without passing through node $j$ (and therefore passes through $k\to i\to l$ with $k,l\neq j$). Same reasoning can be applied for the two other messages.

Special equations apply for the source node $s^{\nu}$: 

\begin{align}
    \chi^{\nu}_{s^{\nu}\to j}(0,0)&=\frac{1}{Z_{s^\nu\to j}} \sum_{k\in\partial s^{\nu}\backslash j} e^{-\beta H^\nu_{s^{\nu}k}}\chi^{\nu}_{k\to s^{\nu}}(0,1) \prod_{l\in \partial s^{\nu}\backslash j,k}\chi^{\nu}_{l\to s^{\nu}}(0,0) \nonumber \\
    \chi^{\nu}_{s^{\nu}\to j}(1,0)&=e^{-\beta H^\nu_{s^{\nu}j}} \frac{1}{Z_{s^\nu\to j}} \prod_{k\in \partial s^{\nu}\backslash j}\chi^{\nu}_{k\to s^{\nu}}(0,0)\\
    \chi^{\nu}_{s^{\nu}\to j}(0,1)&=0 \nonumber 
\end{align}

and for the end node $e^{\nu}$ :
\begin{align}
    \chi^{\nu}_{e^{\nu}\to j}(0,0)&=\frac{1}{Z_{e^{\nu}\to j}} \sum_{k\in\partial e\backslash j}\chi^{\nu}_{k\to e^{\nu}}(1,0) \prod_{l\in \partial e\backslash j,k}\chi^{\nu}_{l\to e^{\nu}}(0,0) \nonumber \\
    \chi^{\nu}_{e^{\nu}\to j}(0,1)&=\frac{1}{Z_{e^\nu\to j}} \prod_{k\in \partial e\backslash j}\chi^{\nu}_{k\to e^{\nu}}(0,0) \\
    \chi^{\nu}_{e^{\nu}\to j}(1,0)&=0 \nonumber
\end{align}

To get rid of the normalization factor $1/Z_{i\rightarrow j}$ present in the messages, the following change of variables can be applied:
\begin{align}
\label{eq:changeofvariable}
    z^{\nu}_{i\rightarrow j } = \frac{\chi^{\nu}_{i\rightarrow j}(1,0)}{\chi^{\nu}_{i\rightarrow j}(0,0)} \text{ and } \tilde{z}^{\nu}_{i\rightarrow j } = \frac{\chi^{\nu}_{i\rightarrow j}(0,1)}{\chi^{\nu}_{i\rightarrow j}(0,0)} e^{-\beta H^\nu_{ji}}
\end{align}

Using equations \ref{BP_chi}, one finds a closed form for the normalized messages $z^{\nu}_{i\rightarrow j }$ and $\tilde{z}^{\nu}_{i\rightarrow j }$ :

\begin{align}
\label{eq:zupdate}
    z^{\nu}_{i\to j}&= \I(i\neq s^{\nu})\I(i\neq e^{\nu})\frac{e^{-\beta H^\nu_{ij}}\sum_{k\in\partial i\backslash j} z^{\nu}_{k\to i}}{1+\sum_{\substack{k,l\in\partial i\backslash j\\k\neq l}} z^{\nu}_{k\to i}\tilde z^{\nu}_{l\to i}} + \I(i= s^{\nu}) \frac{e^{-\beta H^\nu_{s^{\nu}j}}}{\sum_{k\in\partial s^{\nu}\backslash j}  \tilde z^{\nu}_{k\to s^{\nu}}} \\ 
\label{eq:ztildeupdate}
    \tilde{z}^{\nu}_{i\to j}&= \I(i\neq s^{\nu})\I(i\neq e^{\nu}) \frac{\sum_{k\in\partial i\backslash j} \tilde{z}^{\nu}_{k\to i}}{1+\sum_{\substack{k,l\in\partial i\backslash j\\k\neq l}} z^{\nu}_{k\to i}\tilde z^{\nu}_{l\to i}} + \I(i= e^{\nu}) \frac{1}{\sum_{k\in\partial e^{\nu}\backslash j} z^{\nu}_{k\to e^{\nu}}}
\end{align}

Equations \ref{eq:zupdate} and \ref{eq:ztildeupdate} are fixed point equations for the messages $z^{\nu}_{i\to j}$ and $\tilde{z}^{\nu}_{i\to j}$ at finite $\beta$. In order to take the $\beta \rightarrow \infty$ limit, it is useful to define the quantities $a^{\nu}_{i\to j}=-\log(z^{\nu}_{i\to j})/\beta$ and $b^{\nu}_{i\to j}=-\log(\tilde z^{\nu}_{i\to j})/\beta$. After taking the zero temperature limit ($\beta\to\infty$), the belief propagation equations thus provide update equations for the messages $a^{\nu}_{i\rightarrow j}$ \ref{eq:BP_a} and $b^{\nu}_{i\rightarrow j}$ \ref{eq:BP_b}. 

The zero temperature belief propagation equations presented allow, if iterated until convergence, to find the path $\nu$ minimizing the energy once all other paths $\mu\in [M]\backslash \nu$ are fixed. Indeed, it samples from the conditional probability distribution \ref{eq:conditional_prob_with_edgecosts} that concentrates around the configurations of lowest energy when $\beta$ goes to $\infty$. As already mentioned in section \ref{sec:CBP}, following this procedure results in one step of the greedy procedure. In other words, BP is used as an algorithm to find the shortest path. 
However, the procedure described in \cite{yeung2013physics} consists in only making one iteration of the update equations \ref{eq:BP_a} and \ref{eq:BP_b} before updating the cost on edges. The cost $H_{ij}^\nu$ at the edge $(ij)$ being defined as $H_{ij}^\nu = \phi(I^{\backslash \nu}_{ij}+1)-\phi(I^{\backslash \nu}_{ij})$, it is necessary to compute the expected total traffic of the other paths $I^{\backslash \nu}_{ij}$ between $i$ and $j$. We can write the expected traffic as $I^{\backslash \nu}_{ij} = \sum_{\mu\in [M]\backslash \nu} P(\sigma^\mu_{(ij)}=1)$ where $\sigma^\mu_{(ij)}=\I[ i\to j \in \pi^\nu \text{ OR }  j\to i \in \pi^\nu]$. The expected contribution of path $\mu$ to the traffic at edge $(ij)$ is equal to the probability $P(\sigma^{\mu}_{(ij)}=1)$. Naturally, we have that $P(\sigma^{\mu}_{(ij)}=1)=P(\sigma^{\mu}_{i\rightarrow j}=1) + P(\sigma^{\mu}_{j\rightarrow i}=1)$, where $P(\sigma^{\mu}_{i\rightarrow j}=1)$ can be expressed using the messages $a^{\mu}_{i\rightarrow j}$ and $b^{\mu}_{i\rightarrow j}$. In order to derive the expression, one can start by deducing the marginal from the finite temperature messages $\chi^{\mu}_{i \rightarrow j}$, then operate the same change of variables \ref{eq:changeofvariable} as follows : 
\begin{align}
P(\sigma^\mu_{i\rightarrow j}=1)& =\frac{{\chi^{\mu}_{i\rightarrow j}(1,0)} \sum_{k\in \partial j\backslash i} e^{-\beta H^\mu_{jk}} \chi^{\mu}_{k\rightarrow j}(0,1) \prod_{l\in \partial j\backslash i,k} \chi^{\mu}_{l\rightarrow j }(0,0)}{ \prod_{k\in \partial j} \chi^{\mu}_{k\rightarrow j}(0,0) + \sum_{k,l\in\partial j, k\neq l} e^{-\beta H^\mu_{jl}} \chi^{\mu}_{k\rightarrow j}(1,0) \chi^{\mu}_{l \rightarrow j}(0,1) \prod_{r\in \partial j \backslash l,k} \chi^{\mu}_{r\rightarrow j } (0,0)} \nonumber  \\
& = \frac{z^\mu_{i\rightarrow j } \sum_{k\in \partial j \backslash i} \tilde{z}^{\mu}_{k\rightarrow j}}{ 1 + \sum_{k,l\in \partial j, k\neq l} z^{\mu}_{k\rightarrow j } \tilde{z}^{\mu}_{l\rightarrow j} }
\end{align}

Taking the limit $\beta\rightarrow \infty$, the probability $P(\sigma^\mu_{i\rightarrow j}=1)$ concentrates at one value (either 0 or 1) depending on the values of the messages. Explicitly, the value of $\sigma^\mu_{i\rightarrow j}$ (including the case $i=s^\mu$ and $i=t^\mu$) will follow the expression: 
\begin{align}
  \sigma^\mu_{i\rightarrow j} = \I(i=s^{\mu})&\Theta\bigg(
  \min_{l\in\partial j\backslash i} (b^\mu_{l\rightarrow j})-b^\mu_{i \rightarrow j}
  \bigg) \nonumber \\ & + \I(i\neq s^{\mu})\Theta\bigg( \min{\bigg( 0,\min_{\substack{l\in \partial j \backslash i}}{(a^\mu_{l\rightarrow j } + \min_{\substack{k \in \partial j \backslash l} } b^\mu_{k\rightarrow j})} \bigg)} -(a^\mu_{i \rightarrow j}+ \min_{\substack{k \in \partial j\backslash i}} b^\mu_{k\rightarrow j}) \bigg) 
\label{marginal_directed}
\end{align}
The results derived in this appendix can be compared to the ones exposed in the supplementary material of \cite{yeung2013physics} and to the pseudocode of the CBP algorithm (also taken from \cite{yeung2013physics}). The update equations for $a^\nu_{i\rightarrow j}$ \eqref{eq:BP_a} and $b^\nu_{i\rightarrow j}$ \eqref{eq:BP_b} can be compared to equations \eqref{eq3CBP},\eqref{eq4CBP} which are the ones originally derived in \cite{yeung2013physics}.
The two sets of equations, are almost identical, we shall now focus on the differences. The main difference is the presence of $H_{ij}^\nu$ in place of $\phi'(\eta_{ji}^{\nu*})$.
This is due to the fact that the authors use an approximated form of the energy obtained by Taylor expanding $H_{ij}^\nu$. 
This approximation is only valid if $I_{i\rightarrow j}^{\backslash \nu}\gg 1$ which is not the case in the regime $\rho=O(1)$, which is the one considered in \cite{yeung2013physics}. Another difference lies in the update equation of message $b^\nu_{i\rightarrow j}$ in the case $i=e^\nu$. Indeed, the term $-\min_{k\in \partial e^\nu \backslash j } [a^\nu_{k\rightarrow e^\nu}]$ is present in equation \eqref{eq:BP_b} and is missing from equation \eqref{eq4CBP}.
Additionally, some algebraic computations reveal that equation \eqref{eq:marginal_directed} exactly corresponds to equation \eqref{eq1CBP}, originally derived in \cite{yeung2013physics}.

Despite the presence of these differences, in our experiments we use the code from \cite{yeung2013physics}. We further tried using the exact formula for  $H_{ij}^\nu$ in place of the Taylor expansion and this did not improve the energy reached by CBP.
\subsection{CBP pseudocode}
\label{sec:pseudocode_CBP}
The pseudo-code corresponding to the implementation made by the authors of \cite{yeung2013physics} is given in \ref{alg:CBP_pseudocode}. 
We report below the formulas to update the binary variables of each path based on the values of the messages. We say $\sigma_{ji}^\nu$ is a binary variable indicating if path $\nu$ goes through the undirected edge $(ji)$ and $\eta_{ji}^{\nu *}$ the expected traffic through edge $(ji)$ taking into account that path $\nu$ passes through it (thus the +1). We can determine these quantities from the messages as
\begin{align}
\label{eq1CBP}
 \sigma_{j i}^\nu & = \I(i=s^{\nu}) \Theta\left(\min _{l \in \partial j \backslash i}\left[b_{l \rightarrow j}^\nu\right]-b_{i \rightarrow j}^\nu\right)+\I(i=t^{\nu}) \Theta\left(\min _{l \in \partial j \backslash i}\left[a_{l \rightarrow j}^\nu\right]-a_{i \rightarrow j}^\nu\right) \nonumber \\
& + \I(i\neq s^{\nu})\I(i\neq t^{\nu}) \Theta\left(\min \left[0, \min _{\substack{l, r \in \partial j \backslash i \\
l \neq r}}\left[a_{l \rightarrow j}^\nu+b_{r \rightarrow j}^\nu\right]\right]\right. \nonumber \\
&\left.-\min \left[a_{i \rightarrow j}^\nu+\min _{l \in \partial j \backslash i}\left[b_{l \rightarrow j}^\nu\right], b_{i \rightarrow j}^\nu+\min _{l \in \partial j \backslash i}\left[a_{l \rightarrow j}^\nu\right]\right]\right),
\end{align}
\begin{equation}
\label{eq2CBP}
\eta_{j i}^{\nu *}=1+\sum_{\mu \neq \nu} \sigma_{j i}^\mu,
\end{equation}
The last two equations are respectively referenced to in the pseudo-code by formulas $F_1(\{a_{r\rightarrow j}^\nu\}_{r\in \partial j},\{b_{s\rightarrow j}^\nu\}_{s\in \partial j})$ and $F_2(\{\sigma^\mu_{(ji)}\}_{\mu\in [M]\backslash \nu})$.

We now give the update equations for the sets of messages $a$ and $b$.
\begin{align}
\label{eq3CBP}
a_{j \rightarrow i}^\nu= \begin{cases}\min _{l \in \partial j \backslash i}\left[a_{l \rightarrow j}^\nu\right]+\phi^{\prime}\left(\eta_{j i}^{\nu *}\right)-\min \left[0, \min _{\substack{l, r \in \partial j \backslash i \\
l \neq r}}\left[a_{l \rightarrow j}^\nu+b_{r \rightarrow j}^\nu\right]\right], & \text{if } j\not \in \{s^{\nu},e^{\nu}\} \\
-\min_{l \in \partial j \backslash i}\left[b_{l \rightarrow j}^\nu\right]+\phi^{\prime}\left(\eta_{j i}^{\nu *}\right), & \text{if } j =s^{\nu}   \\
\infty, & \text{if } j = e^{\nu} \end{cases} \\
b_{j \rightarrow i}^\nu= \begin{cases}\min _{l \in \partial j \backslash i} \left[b_{l \rightarrow j}^\nu\right]+\phi^{\prime}\left(\eta_{j i}^{\nu *}\right)-\min \left[0, \min_{\substack{l, r \in \partial j \backslash i \\
l \neq r}}\left[a_{l \rightarrow j}^\nu+b_{r \rightarrow j}^\nu\right]\right], & \text{if } j\not \in \{s^{\nu},e^{\nu}\} \\
\infty, & \text{if } j =s^{\nu} \\
\phi^{\prime}\left(\eta_{j i}^{\nu *}\right), & \text{if } j =e^{\nu} 
\end{cases}
\label{eq4CBP}
\end{align} 
where the term $\phi^{\prime}\left(\eta_{j i}^{\nu *}\right)$ quantifies the contribution for the total energy of path $\nu$ passing through the undirected edge $(ji)$, using the Taylor expansion as explained previously. 
Formula $F_3(\{a_{r\rightarrow i}^\nu\}_{r\in \partial i\backslash j},\{b_{s\rightarrow i}^\nu\}_{s\in \partial i\backslash j},\phi'(\eta_{ji}^{\nu *}))$ allows to update the messages and is reported in the equations \eqref{eq3CBP} and \eqref{eq4CBP}. 
The initialization of the messages is chosen in order to bias the algorithm to converge towards a configurations with consistent paths and this is done in the following way. Let us consider the path $\nu$. While the messages related to path $\nu$ involving all nodes except $s^\nu$ and $e^\nu$ are initialized to a positive random value between 0 and 1. Instead the messages $\{a_{s^\nu \rightarrow j}^\nu\}_{j\in \partial s^\nu}$ exiting source node are set to a random negative value between -1 and 0 (the probability for the edge to be occupied is higher). Also, given that path $\nu$ is not allowed to exit the end node $t^\nu$, it is natural to initialize the messages $\{a_{t^\nu \rightarrow j}^\nu\}_{j\in\partial t^\nu}$ to infinity (corresponding to a probability 0). In the same way, all messages in $\{a_{s^\nu \rightarrow j}^\nu\}_{j\in \partial s^\nu}$ are initialized to infinity as no value as edges entering node $s^\nu$ should not be occupied by the path $\nu$. Some other implementation choices have been made in order to improve the convergence of the algorithm. For example, an additive noise has been introduced in the computation on the costs on the edges (term $\phi'(\eta_{ij}^{\nu *})$ line 13 of pseudo-code \ref{alg:CBP_pseudocode}) in order to break the degeneracy (cases where two path configurations have the same energy). 

\begin{algorithm}[H]
\caption{CBP algorithm for routing}
\label{alg:CBP_pseudocode}
\begin{algorithmic}[1]
\State\textbf{Input: } $G=(V,E)$ graph over which paths are computed. Sources $\{s^\mu\}_{\mu\in [M]}$ and destinations $\{s^\mu\}_{\mu\in [M]}$ of each path. The maximum number of iterations $MaxIter$. Equilibrium time $EqTime$ (stopping condition). A tolerance level $\epsilon$ to determine if the messages have converged. 

\State\textbf{Output:} $\{\sigma_{(ij)}^\nu\}$ are binary variables representing the occupancy of each edge by each path (=1 if undirected edge $(ij)$ is occupied by path $\nu$ and 0 otherwise). 
\State For all $(ij)\in E,\nu \in [M]$ initialize messages $a^\nu_{i\rightarrow j}$ and $b^\nu_{i\rightarrow j}$ as : 
\begin{align*}
&\begin{aligned}
a^\nu_{i\to j}&=\left\{
\begin{array}{lr}
    - U[0,1]\text{ \ \ if $i= s^\nu$ }, \\
    \infty  \text{ \ \ \ \ \ \ \ \  \ \ if $i= t^\nu$ },\\ 
    U[0,1] \text{ \ \  \ \ \ otherwise.}
\end{array}\right.
\end{aligned}
&
&\begin{aligned}
b^\nu_{i\to j}&=\left\{
\begin{array}{lr}
    \infty  \text{ \ \ \ \ \ \ \ \ if $i= s^\nu$ }, \\ 
    0   \text{\ \ \ \ \ \ \ \ \ \ \ if $i= t^\nu$ },\\ 
    U[0,1] \text{ \ \ \ otherwise.}
\end{array}\right. \nonumber
\end{aligned}
\end{align*} 

\State $t$ $\gets $ 0  \Comment{counter on number of iterations.}
\State $EqCount$ $\gets$ 0  \Comment{counter used to condition on the convergence. } 
\Repeat
    \State Select a random node $i \in V$ and a random neighbour $j \in \partial i $
    \State $pre_a^\nu$ $\gets $ $a_{i\rightarrow j}^\nu \text{\ \ , \ \ } \forall \nu\in [M] $  
    \State $pre_b^\nu$ $\gets $ $b_{i\rightarrow j}^\nu \text{\ \ , \ \ } \forall \nu\in [M] $ \Comment{Store old messages.}
    \For{each path $\nu \in [M]$} 
        \State $\sigma_{(ij)}^\nu  \gets$ $F_1(pre_a^\nu,pre_b^\nu,\{a_{r\rightarrow j}^\nu\}_{r\in \partial i\backslash j},\{b_{s\rightarrow j}^\nu\}_{s\in \partial i\backslash j})$ \Comment{assign spins, based on old messages.} 
        \State $\eta_{ji}^{\nu *} \gets$ $F_2(\{\sigma^\mu_{(ji)}\}_{\mu\in [M]\backslash \nu})$ \Comment{Compute the undirected traffic at edge $(ji)$ ignoring path $\nu$.}
        \State $a_{i\rightarrow j}^\nu, b_{i\rightarrow j}^\nu \gets  $ $F_3(\{a_{r\rightarrow i}^\nu\}_{r\in \partial i\backslash j},\{b_{s\rightarrow i}^\nu\}_{s\in \partial i\backslash j},\phi'(\eta_{ji}^{\nu *}))$ \Comment{Update of the messages.}
    \EndFor
    \If {$\sum_{\mu=0}^N |a_{j\rightarrow i}^\mu-pre_a^\mu|+|b_{j\rightarrow i}^\mu-pre_b^\mu| < \epsilon$} \Comment{Check for convergence} 
        \State $EqCount \gets EqCount+1$
    \Else 
        \State $EqCount \gets 0$ 
    \EndIf  
    \State $t \gets t+1$
\Until{($EqCount$ $\geq$ $EqTime$ or $t \geq$ $MaxIter*$card($V$))}
\For {each edge $(ij)\in E$}
    \For {each path $\nu \in [M]$}
        \State $\sigma_{(ij)}^\nu  \gets$ $F_1(\{a_{r\rightarrow j}^\nu\}_{r\in \partial j},\{b_{q\rightarrow j}^\nu\}_{q\in \partial j})$
    \EndFor
\EndFor
\end{algorithmic}
\end{algorithm}
\section{Derivation of the SAW sampling algorithm}
\label{app:SAW_sampler}
This appendix we detail the derivation of the MCMC to sample SAWs on a weighted graph.
Let us first set up some notation. We will indicate with $\pi=(\pi_1,\dots,\pi_{L(\pi)})$ a path going from $\pi_1$ to $\pi_{L(\pi)}$. $L(\pi)$ indicates the number of nodes in the path. Consider a graph with edges weighted with the matrix $W\in\R^{N\times N}$. $W_{ij}$ represents the cost for the walk to traverse edge $i\to j$. If the edge $i\to j$ is not present we set $W_{ij}=\infty$. 
We suppose $W_{ij}\geq0$ for all $i,j$. When referring to shortest paths, we will mean paths with minimum weight with respect to $W$. We wish to sample from the following probability measure, over the SAWs going from node $s$ to node $t$.
\begin{equation}\label{eq:path_prob_fixed_W}
P(\pi|s,t)=\frac{1}{Z(s,t)}e^{-\beta E_W(\pi)}\I[\pi_1=s]\I[\pi_{L(\pi)}=e]\I[\pi \text{ is a SAW}],\quad E_W(\pi)\coloneqq\sum_{k=1}^{L(\pi)-1} W_{\pi_k,\pi_{k+1}}
\end{equation}
When $\beta$ is large, the probability is concentrated around the shortest path (according to $E$). Instead when $\beta\to0$ the measure is uniform over all SAWs. To sample from this measure, we design a Metropolis based sampler. Let $\pi$ be the current path in the MCMC. We first propose a new path $\tilde \pi$ by sampling it from a proposal probability $Q(\tilde \pi|s,t)$ and then we accept the move with probability $p_{acc}=\frac{P(\tilde\pi|s,t)Q(\pi|s,t)}{P(\pi|s,t)Q(\tilde \pi|s,t)}$. Notice that differently from the usual Metropolis algorithm, the proposed path is independent from the current path.
With use the notation $\pi_{:k}$ and $\pi_{k:}$ we indicate respectively the first $k$ nodes of path $\pi$, and the nodes from $\pi_k$ (included) to the end of $\pi$. As a proposal probability we use an autoregressive process of the following form
\begin{equation}
\label{eq:proposal_saw_joint}
    Q(\tilde\pi|s,t)=\prod_{k=1}^{L(\pi)-1}q(\tilde\pi_{k+1}|\tilde\pi_{:k},s,t).
\end{equation} Thanks to the autoregressive property, $Q$ is easy to sample: it is sufficient to draw iteratively from the distribution $q(\cdot|\cdot)$ until the node $e$ is reached. Given a graph $G=(V,E)$ and a path (a sequence of nodes connected by edges) $\pi$ in $G$, we denote by $G\backslash\pi$ the graph where we have removed the nodes in $\pi$ and the edges connected to them. 
Define also $V_e(y,\pi)$ to be the cost of the shortest path from $y$ to $e$ in the graph $G\backslash\pi$. 
If such path does not exist, we define $V_e(y,\pi)$ to be $\infty$. we are now ready to define $q(\cdot|\cdot)$.
\begin{equation}
\label{eq:proposal_saw_conditional}
    q(\tilde\pi_{k+1}=y|\tilde\pi_{:k},s,t)=\frac{1}{\mathcal Z(x_{:k},s,t)}\I[y\not \in \tilde\pi_{:k}]\exp\left[-\beta( W_{\tilde\pi_k y}+V_e(y,\tilde\pi_{:k}))\right].
\end{equation}
In words, supposing we have sampled the first $k$ elements of the path and we must pick the $k+1$, for each neighbor $y$ of $\tilde\pi_k$, $W_{\tilde\pi_k y}+V_e(y,\tilde\pi_{:k})$ represents the cost of going from $x_k$ to $y$ and from there continuing along the shortest SAW from $y$ to $e$. As explained in section \ref{sec:high_beta_expansion}, this can be seen as a high $\beta$ expansion where one only keeps track of the first order term in $\beta$.  The pseudocode of this algorithm is presented in \ref{alg:SAW_sampler}.

\subsection{SAW sampler as high $\beta$ expansion}
\label{sec:high_beta_expansion}
In this section we show how the autoregressive process to propose SAWs described by \eqref{eq:proposal_saw_conditional} corresponds to a high $\beta$ expansion.
Let $P$ be the probability measure on SAWs with constrained ends defined in \eqref{eq:path_prob_fixed_W}.
Consider the following way of writing the conditional probability for $k+1$th node in the path given the first $k$ nodes, under $P$.
\begin{align}
\label{eq:high_beta_exp}
&P(\pi_{k+1}=y|\pi_{:k},s,t)=\\&=\sum_{\pi_{k+2},\dots,\pi_{L(\pi)}} P(\pi_{k+1:}=[y,\pi_{k+2:}]|\pi_{:k},s,t)=\frac{1}{Z(\pi_{:k},s,t)}\sum_{\pi_{k+2},\dots,\pi_{L(\pi)}}\I[\pi \text{ is a SAW}] \,e^{-\beta(W_{\pi_k y}+E_W(\pi_{k+2:}))}\\&=\frac{1}{Z(\pi_{:k},s,t)} e^{-\beta (W_{\pi_k y}+V_e(y,\pi_{:k}))}\sum_{\pi_{k+2},\dots,\pi_{L(\pi)}\in V}\I[\pi \text{ is a SAW}] \, e^{-\beta (E_W(\pi_{k+2:})-V_e(y,\pi_{:k}))}.
\end{align}
$Z(\pi_{:k,s,t})$ is as always a normalizing constant. In the second expression we sum over all the possible continuations of $\pi_{:k+1}$ and $[y,\pi_{k+2:}]$ indicates the path obtained by concatenating $y$ and $\pi_{k+2:}$, similarly in the following steps $\pi$ indicates the concatenation of $\pi_{:k},y,\pi_{k+2:}$. In the last passage we have brought out of the sum the term $V_e(y,\pi_{:k}))$ corresponding to the shortest path. 
Taking the limit $\beta\to\infty$ we obtain
\begin{equation}
    \lim_{\beta\to\infty}\sum_{\pi_{k+2},\dots,\pi_{L(\pi)}\in V}\I[\pi \text{ is a SAW}] \, e^{-\beta (E_W(\pi_{k+2:})-V_e(y,\pi_{:k}))}=\sum_{\pi_{k+2},\dots,\pi_{L(\pi)}\in V}\I[\pi \text{ is a SAW}]\I[E_W(\pi_{k+2:})=V_e(y,\pi_{:k}))]
\end{equation}
This is the number of shortest paths from $y$ to $e$ in the graph $G\backslash \pi_{:k}$. If one assumes $W$ to be a random matrix with continuous entries, then the shortest path is almost surely not degenerate and thus limit is one. Then it is clear that in the limit $\beta\to \infty$, $\frac{1}{\beta}\log P(\pi_{k+1}=y|\pi_{:k},s,t)\to \frac{1}{\beta}\log q(\pi_{k+1}=y|\pi_{:k},s,t)$, and therefore $q(\cdot|\cdot)$ captures the leading order in $\beta$. In some special cases cases there can be a huge degeneracy in the number of shortest paths: consider for example the case of $W$ being the adjacency matrix of a grid.

\begin{algorithm}[H]
\caption{Metropolis SAW sampler for weighted graph} 
\label{alg:SAW_sampler}
\begin{algorithmic}[1]
\State\textbf{Input: } Weight matrix $W\in\mathbb R^{N\times N}_+$, initial path $\pi_0$, inverse temperature $\beta$, $G$ graph over which the path is  sampled, $s,t$ respectively the origin and destination of the path, $t_\text{max}$ number of MCMC steps.
\State\textbf{Output:} a list $S$ of paths sampled from the Gibbs distribution \eqref{eq:path_prob_fixed_W}
\State $S=[\pi_0]$  
\State $\pi \gets \pi_0$
\State $K\gets \log Q(\pi|s,t)$ 
\For{$k=1,\dots,t_\text{max}$}
    \State $\tilde \pi\sim Q(\cdot|s,t)$ \Comment{Sampling proposed path using \ref{alg:proposal_sampler}. $Q$ depends implicitly on $W,G,\beta$.}
    \State $\Tilde K\gets \log Q(\tilde\pi|s,t)$
    \State $p_\text{acc}\gets \exp\left[K-\tilde K-\beta(E_W(\tilde \pi)-E_W(\pi))\right]$
    \State $U\sim\text{Uniform}([0,1])$
    \If{$U<p_\text{acc}$} \Comment{Accept-reject step}
    \State $\pi \gets \tilde\pi$
    \State $K\gets \tilde K$
    \EndIf
    \State $S.\text{append}(\pi)$
\EndFor
\end{algorithmic}
\end{algorithm}

\begin{algorithm}[H]
\caption{Proposal algorithm for SAW sampler} 
\label{alg:proposal_sampler}
\begin{algorithmic}[1]
\State\textbf{Input: } Weight matrix $W\in\mathbb R^{N\times N}_+$, inverse temperature $\beta$, $G$  directed weighted graph over which paths are sampled, $(s,t)$ respectively the origin and destination of the path.
\State\textbf{Output:} a SAW $\pi$ sampled from the proposal distribution \eqref{eq:proposal_saw_joint}
\State $\pi\gets[s]$  
\State $x\gets s$ \Comment{$x$ is the current endpoint of the path}
\While{$x\neq t$} \Comment{loop until the end node is reached}
    \For{$z\in\partial_{in}x$}\Comment{$\partial_{in} x$ is the set of nodes $z$ such that the edge $z\to x$ exists in $G$}
    \State $G$.removeEdge($z,x$) \Comment{Removing all incoming edges into $x$. This guarantees that the path is self avoiding}
    \EndFor 
    \State probnextnode $\gets[\;]$
    \State nextnodes $\gets[\;]$ \Comment{list of possible next nodes}
    \For{$y\in\partial_{out} x$ } \Comment{$\partial_{out} x$ is the set of nodes $z$ such that the edge $x\to z$ exists in $G$}
    \State nextnodes.append($y$)
    \State $V\gets$ shortestPath($G,W,y,t$) 
    \State probnextnode.append($e^{-\beta(V+W_{xy})}$)
    \EndFor
    \State probnextnode $\gets$ probnextnode/sum(probnextnode) \Comment{normalizing the probability}
    \State $i\sim$ probnextnode \Comment{sampling a index according to probnextnode}
    \State $x\gets$ nextnodes[i]
    \State $\pi.$append(x)
\EndWhile
\end{algorithmic}
\end{algorithm}
The function shortestPath($G,W,y,t$) returns $\min_{\pi: \pi_1=y, \pi_{L(\pi)=t}}E_W(\pi)$, where the minimum is taken over all the paths in graph $G$. Numerically the normalization step is tricky, since all the exponentials can be very negative, this might result in a $0/0$ division. To avoid this it is better to compute probabilities as using the formula $e^{-x_i}/\sum_j e^{-x_j}=e^{- (x_i-x_1)}/(1+\sum_{j=2}e^{-(x_j-x_1)})$, where we supposed $x_1=\min_i x_i$. In this way the expression is regularized. Notice that with a small variation, algorithm \ref{alg:proposal_sampler} can return also $Q(\pi|s,t)$, which is needed in the Metropolis step. One initializes $\log Q=0$ and every time a node is added one takes $\log Q=\log Q+\log($probnextnode[i]$)$, where is the index sampled from probnextnode the index of the next node in the path. Notice that in \ref{alg:proposal_sampler} we always assume the graph to be directed. If we are dealing with an undirected graph, then we just transform it into a directed one by replacing each undirected edge with two opposite directed ones.

\section{Derivation of the Simulated annealing algorithm}
\label{app:sim_annealing}
In this appendix we give the pseudocode of the simulated annealing algorithm used to find a solution to ITAP. As annealing schedule we use 
\begin{equation}
\label{eq:annealing_schedule}
    \beta(t)=\begin{cases*}
      \beta_\text{min}\frac{T_\text{ann}}{T_\text{ann}-t} & if $t<T_\text{ann}$ \\
      \infty      & if $t\geq T_\text{ann}$
    \end{cases*}
\end{equation}
In words, we start from $\beta_\text{min}$, then increase $\beta$, until it diverges after $T_\text{ann}$ steps, after that we keep $\beta=\infty$ until the iterations converge. 

\begin{algorithm}[H]
\caption{Simulated annealing for ITAP} 
\label{alg:simulated_annealing}
\begin{algorithmic}[1]
\State\textbf{Input: } Initial paths $\{\pi_0^\mu\}_{\mu\in[M]}$, OD pairs $\{s^\mu,t^\mu\}_{\mu\in[M]}$ inverse temperature schedule $\beta(t)$, $G$ graph over which paths are sampled, $t_\text{singlepath}$ number of Metropolis steps to make when sampling from \eqref{eq:conditional_gibbs}
\State\textbf{Output:} $\{\pi^\mu\}_{\mu\in[M]}$ a set of paths to which the algorithm converges
\State Gflow$\gets$ copy($G$) \Comment{the weight on edge $e$ of Gflow is $I_e$}
\State Gcost $\gets$ copy($G$)  \Comment{the weight on edge $e$ of Gcost is $\phi(I_e)$}
\For{$e \in G.$edges}
\State Gflow.weight($e$)$\gets$0
\EndFor
\For{$\mu=1,\dots,M$}
\State $\pi^\mu\gets \pi_0^\mu$
\For{$k=1,\dots,$ length($\pi^\mu)-1$}
\State $e\gets [\pi^\nu[k],\pi^\nu[k+1]]$
\State Gflow.weight($e$)$\gets$ Gflow.weight($e$)+1 \Comment{populating the flow graph with paths}
\EndFor
\EndFor

\State H$\gets 0$ \Comment{current value of $H(\pmb I)$}
\For{$e \in G.$edges}
\State Gcost.weight($e$)$\gets\phi(\text{Gflow.weight($e$)}+1)-\phi(\text{Gflow.weight($e$)})$
\State H$\gets$ H+Gcost.weight($e$)
\EndFor
\State flagconv $\gets$ False
\While{not flagconv}\Comment{main loop}
\For{$\nu=1,\dots,M$}\Comment{loop over all paths}
\For{$k=1,\dots,\text{length($\pi^\nu$)}-1$}
\State $e\gets [\pi^\nu[k],\pi^\nu[k+1]]$
\State Gflow.weight($e$)$\gets$ Gflow.weight($e$)-1\Comment{removing path $\nu$ from the graph so that one has $I^{\backslash\nu}$}
\State Gcost.weight($e$)$\gets\phi(\text{Gflow.weight($e$)}+1)-\phi(\text{Gflow.weight($e$)})$
\EndFor
\State $W\gets$ weightMatrix(Gcost)
\State$\pi^\nu\gets$ SAWsampler($W,\pi^\nu, \beta(t), G, s^\nu,t^\nu, t_\text{singlepath}$)\Comment{Run algorithm \ref{alg:SAW_sampler} for $t_\text{singlepath}$ steps to sample from the conditional \eqref{eq:conditional_gibbs}}
\For{$k=1,\dots,\text{length($\pi^\nu$)}-1$}
\State $e\gets [\pi^\nu[k],\pi^\nu[k+1]]$
\State Gflow.weight($e$)$\gets$ Gflow.weight($e$)+1\Comment{adding back path $\nu$ to the graph}
\State Gcost.weight($e$)$\gets\phi(\text{Gflow.weight($e$)}+1)-\phi(\text{Gflow.weight($e$)})$
\EndFor
\EndFor
\State newH$\gets 0$ \Comment{measuring energy after all paths are updated}
\For{$e \in G.$edges}
\State newH$\gets$newH+Gcost.weight($e$)
\EndFor
\If{H == newH \textbf{and} $\beta==\infty$}
\State flagconv$\gets$True \Comment{If $\beta=\infty$ and  $H(\pmb I)$ does not change once all the paths are updated, then the algorithm has converged}
\EndIf
\State H $\gets$ newH
\EndWhile
\end{algorithmic}
\end{algorithm}
The function 'weightMatrix(Gcost)' returns an $N\times N$ (with $N$ being the number of vertices) matrix $W$ where $W_{ij}=$Gcost.weight($[i,j]$) if edge $i\to j$ exists in $G$ and $W_{ij}=\infty$ otherwise. 
Notice that the pseudocode of this algorithm is almost identical to that of the greedy algorithm \ref{alg:greedy_algorithm}. The only difference is in lines 28, 29. Erasing line 28 and replacing 29 with '$\pi^\nu\gets$  shortestPath(Gcost, $s^\nu,t^\nu$)' (line 28 in \ref{alg:greedy_algorithm}) one obtains again the greedy algorithm. This relation sheds some light onto the relation between the two algorithms. While the greedy algorithm always picks the shortest path (weighted with $\Delta H$), the simulated annealing is noisy and samples a path from $\propto e^{-\beta\Delta H}$.
\section{RITAP details}
 The RITAP algorithm \ref{alg:relaxed_ITAP_solver} is composed of two parts:
 \begin{itemize}
     \item An algorithm named $\verb|solveTAP|(G,D)$ that outputs a solution to TAP in the form of a collection $\PM=\{\PM_{xy}\}$ with $\PM_{xy}$ defined in \eqref{eq:collection_TAP}.
     \item A projection algorithm that takes as input a TAP solution and outputs an ITAP solution, with the same $\PM^\text{ITAP}$.
 \end{itemize}
 The next two sections focus respectively on the implementation of $\verb|solveTAP|(G,D)$, and on the integer projection step.
 \subsection{Frank-Wolfe as TAP solver}
 \label{sec:FW4TAP}
 To solve TAP we use Frank-Wolfe (FW) algorithm. FW is an iterative algorithm that produces a sequence of edge flows $\{\pmb I(t)\}$ that approaches the optimal edge flows $\pmb I^\star$.
 Each step of FW consists of two sub steps:
 \begin{enumerate}
     \item \textbf{Linearizing $H(\pmb{I}(t))$ near the current point and minimizing the linearized objective}. Denoting with $\Tilde H$ this linear approximation we have $\Tilde H(\pmb I;\pmb I(t))=H(\pmb I(t))+\sum_{e\in E}\phi'(I_e(t))(I_e-I_e(t))$. $\Tilde H(\pmb I;\pmb I(t))$ is minimized with respect to $\pmb I$ under the TAP constraints. Since $\Tilde H$ is linear in $\pmb I$, at its minimum for each pair of nodes $x,y$, all the flow from $x$ to $y$ is routed through a shortest path in the graph whose edges are weighted with $\phi'(I_e(t))$. The new flow $\Tilde{\pmb I}(t)$ obtained from this minimization is called the all-or-nothing flow, since between every pair of nodes all the flow is routed through a single path. 
     \item \textbf{Step towards the all-or-nothing solution}:
     We move from $\pmb I(t)$ towards the all-or-nothing solution with a step size $\eta(t)$.
      $\pmb I(t+1)=(1-\eta(t))\pmb I(t)+\eta(t)\pmb{\Tilde{\pmb I}}(t)$.
      The step size is taken to be decreasing in $t$ with an appropriate schedule that guarantees convergence.
 \end{enumerate}

When $H(\pmb I)$ is convex, FW gives, at each step, a lower bound $H_\text{LB}(t)$ to the optimal energy. This allows to assess convergence of the algorithm by measuring $H(\pmb{I}(t))-H_\text{LB}(t)$. 

The following is a one line proof of the fact that $H_\text{LB}(t)=\min_{\Tilde{\pmb I}} \Tilde{H}(\pmb I,\pmb I(t))$ \footnote{where the minimization over $\Tilde {\pmb I}$ is carried out under the TAP constraints} is a lower bound to $H(\pmb I^\star)$ 
\begin{equation}
    H(\pmb I^\star)\geq H(\pmb I(t))+\sum_e \phi'(I_e)(I_e^\star-I_e(t))\geq \min_{ \Tilde{\pmb I}} H(\pmb I(t))+\sum_e \phi'(I_e)(\Tilde I_e-I_e(t))=\min_{\Tilde{\pmb I}} \Tilde{H}(\pmb I,\pmb I(t)),
\end{equation}
where the minimization over $\Tilde {\pmb I}$ is carried out under the TAP constraints, and the first inequality holds by convexity. Notice that $\Tilde{H}(\pmb I,\pmb I(t))$ is minimized at every time step, therefore the bound is readily available. 

In our experiments we run FW with the following specifications:
\begin{itemize}
    \item The edge flows at initialization $\pmb I(t=0)$ are obtained by computing the shortest paths (in the topological distance on $G$) between every pair of nodes $x,y$ with positive demand. Denoting with $\pi_{xy}$ the a topological shortest path between $x$ and $y$, we have $ I(t=0)_e=\sum_{x,y\in (V\times V)} D_{xy} \mathbb I[e\in\pi_{xy}]$.
    \item Supposing we run FW for $t_\text{max}$ steps, our step size schedule is $ \eta(t)=t^{-\alpha (t)}$ with 
    \begin{equation}
    \label{eq:step_size_schedule}
        \alpha(t)=\begin{cases}
    \frac{1}{2}(1+\frac{t}{t_\text{switch}}) & \text{if  $t\leq t_\text{switch}$} \\
    1.35 &\text{if  $t>t_\text{switch}$},
  \end{cases}
    \end{equation}
    with $t_\text{switch}=0.9\, t_\text{max}$. We picked this schedule since we observed it works well in practice. The idea behind it is to start with a step size $\eta(t)=1/\sqrt{t}$ and gradually move towards a rate $\eta(t)=1/t$. The last 10\% of iterations we decrease to $t^{-1.35}$, this has the effect of obtaining a better lower bound to the TAP optimal energy. Without worrying about the lower bound the schedule $\alpha(t)=\frac{1}{2}(1+\frac{t}{t_\text{max}})$ is found to work very well and leads to $H(\pmb I(t))/H(\pmb I^\star)-1\approx 1/t^3$.
    \item The classic FW algorithm outputs only the optimal edge flows, and does not return the optimal paths. We modify FW to keep track, throughout the iterations, of the paths on which flow is allocated. The paths together with the corresponding amount of flow routed over  them are returned.
\end{itemize}
Notice that the FW algorithm does not depend explicitly on $M$, the number of OD pairs. $M$ is instead reflected in the magnitude of the entries of $D$. Thanks to this independence FW is particularly efficient for very large $M$.
\subsection{ITAP projection}
Once we have a TAP solution we must project it onto a set of path flows satisfying the integrality constraint. RITAP as presented in \ref{alg:relaxed_ITAP_solver} implements one such procedure outputting the paths $\{\pi^\mu\}_{\mu\in[M]}$. Crucially this procedure complexity is linear in $M$. Below in \ref{alg:relaxed_ITAP_solver_indep_M} we provide a different algorithm that achieves the same result, but whose dependence on the number of paths goes like $\min(M,N^2)$.

\begin{algorithm}[H]
\caption{Relaxed ITAP solver (RITAP)} 
\label{alg:relaxed_ITAP_solver_indep_M}
\begin{algorithmic}[1]
\State\textbf{Input:} $G$ graph over which paths are sampled, $D$ demand matrix. 
\State\textbf{Output:} $\PM^\text{ITAP}$ collection of sets in the form \ref{eq:collection_TAP} containing the paths flows satisfying ITAP constraints.
\State $\PM \gets \verb|solveTAP|(G,D)$. \Comment{Solving TAP}
\State $\PM^\text{ITAP}\gets \text{copy}(\PM)$
\State $\PM^\text{tmp}\gets \text{copy}(\PM)$

\For{$x\in G.$nodes}
\For{$y\in G.$nodes}
\If{$D_{xy}>0$}
\State allocatedFlow$\gets 0$
\For{$a\in [1,2,\dots, R_{xy}]$}
\State $\PM^\text{ITAP}_{xy}.h[a]\gets \lfloor\PM^\text{ITAP}_{xy}.h[a]\rfloor$\Comment{Keeping only the integer part of the flow on each path}
\State allocatedFlow$\gets$ allocatedFlow $+\PM^\text{ITAP}_{xy}.h[a]$
\State $\PM^\text{tmp}_{xy}.h[a]\gets \PM^\text{tmp}_{xy}.h[a]$ mod 1 \Comment{Keeping only the fractional part}
\EndFor
\State sortedFracFlows, permIdx $\gets$ sortTogether($\PM^\text{tmp}_{xy}.h, [1,2,\dots, R_{xy}]$) \Comment{Sort by fractional part}
\For{$k\in[0,1,\dots, D_{xy}-\text{allocatedFlow}]$} \Comment{Allocating the remaining flow in decreasing order of fractional flow}
\State $\PM^\text{ITAP}_{xy}.h[\text{permIdx[$k$]}]\gets \PM^\text{ITAP}_{xy}.h[\text{permIdx[$k$]}]+1$
\EndFor 
\EndIf
\EndFor
\EndFor
\State \textbf{return } $\PM^\text{ITAP}$
\end{algorithmic}
\end{algorithm}
the function sortTogether takes two lists as arguments, it sorts the first list in descending order, and applies the same permutation to the second list. For example SortTogether($[3,4,1,5,2], [a,b,c,d,e]$)=$[5,4,3,2,1],[d,b,a,e,c]$.
Basically this algorithm builds $\PM^\text{ITAP}$ by first keeping the integer part of the TAP path flows and then adding the remaining integer units of flow, on the paths with the highest fractional part.
 
\label{app:ITAP_relaxed}

\section{Additional numerical experiments in the regime $M=\Theta(N^2)$}
\label{app:num_exp_M=N^2}
In this section we present some complementary results to those shown in section \ref{sec:integ_relax_solutions} and \ref{sec:real_data_exp}. We consider the setting of section \ref{sec:integ_relax_solutions}, consisting of undirected RRG with $d=3$, OD pairs sampled uniformly at random and $\phi(x)=x^2$. We run experiments each time changing one of these factors and leaving the others unaltered. Below is a list of the modified settings.
\begin{itemize}
    \item Considering RRG with higher degrees, instead of $d=3$
    \item Considering real world traffic networks instead of RRGs
    \item Considering a realistic Bureau of Public Roads (BPR) nonlinearity instead of $\phi(x)=x^\gamma$
\end{itemize}
\subsection{Varying the degree}
We start by presenting data about RRGs with degree larger than $3$ and fixed size $N=128$. Increasing the degree alters some qualitative characteristics previously observed for $d=3$ in figure \ref{fig:energy_conv_tap_itap}. However crucially the asymptotic behavior is robust to changes in $d$. Panel (a) and (b) show that the gap between the TAP optimal energy and the one outputted by ITAP goes to zero as $\eta$ increases. In panel (c) we observe that the onset of the power law behavior for $1-F_\text{int}$ is displaced when changing $d$, this is potentially because the bound \eqref{eq:bound_Fint_degen} becomes non-vacuous (and hence the $1/\eta$ is expected to arise) when $\eta\geq (K-1)$. Since $K-1$ varies from 0.1 for $d=3$ to 10 for $d=48$, the knee in panel (c) moves accordingly.
Ultimately for large $\eta$ all curves display the  predicted $1/\eta$ rate. Similarly, for all degrees, the average degeneracy $K$ (plotted in (d)) plateaus when $\eta$ increases.
The two-scale (i.e. $\rho,\eta$) behavior also persists when changing $d$. For $\rho=\Theta(1)$ the curves collapse when plotted against $\rho$ and for $\eta\geq 1$ (approximately) one enters in the asymptotic regime. In summary, all of the claims we made about the model still hold when looking as experiments with higher $d$.
\begin{figure}
    \centering
    \includegraphics{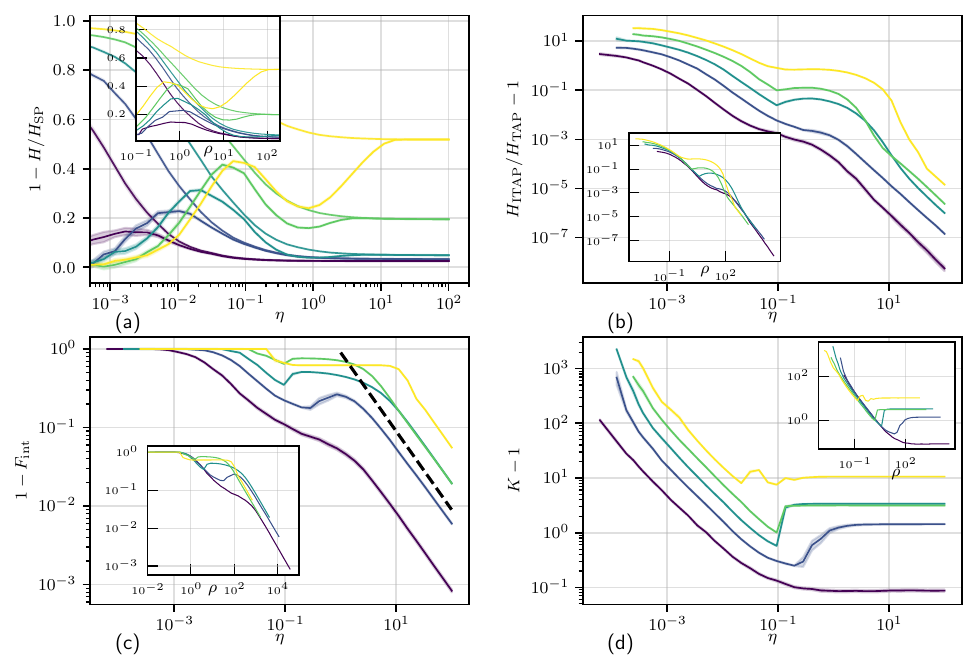}
    \caption{Experiments for $N=128,\,\gamma=2$, on RRG with varying degree. From darker to lighter the degree is $2,6,12,24,48$. Results are averaged over 20 independent realizations of the RRG and of the OD pairs. Shaded areas represent a $\pm1$ standard deviation interval around the mean. Insets show the same quantity plotted in the main panel but as a function of $\rho$. (a) The upper and lower curves correspond respectively to TAP optimal solution and the RITAP solution to ITAP. (b) Ratio of the energies reached respectively by RITAP and TAP. (c) Fraction of integer paths as a function of $\eta$. The dashed line has a slope of $1/\eta$.(d) Average degeneracy as a function of $\eta$}
    \label{fig:TAP_ITAP_conv_N128_varyd}
\end{figure}
\subsection{Varying the nonlinearity}
In traffic the assignment literature the time a user takes to traverse an edge with total traffic $I$ is usually modeled using the Bureau of Public Roads (BPR) functions. 
\begin{equation}
\label{eq:BPR}
    f(I)=t_0\left(1+\alpha\left(\frac{I}{c}\right)^\beta\right),
\end{equation}
where $t_0$ is the crossing time in absence of traffic, $c$ is the capacity of the edge; when the traffic exceeds the capacity the cost quickly rises. In our experiments we take $t_0=1, \alpha=3.8\times 10^{-2}, \beta=4, c=1$.

Since we are interested in minimizing the average cost we set $\phi(x)=x f(x)$\footnote{recall that $\phi$ represents the total cost of one edge. Therefore it is obtained by multiplying the individual cost $f$ times the flow on the edge.}. Results of thee experiments are presented in figure \ref{fig:exp_BPR}. Each line is the result of averaging over 20 independent realizations of the OD pairs and the RRG. We notice an interesting difference with the $\phi(x)=x^\gamma$ case: in panel (a), for small $\eta$ (small $M$), the TAP and RITAP curves are close. This is in contrast with the $\phi(x)=x^\gamma$  case, where the gap between TAP and RITAP was maximum for small $M$. The reason behind this difference lies in the fact that the $xf(x)$ has a linear behavior near zero, therefore when the flow per edge is small, the flow is routed on shortest paths in the topological distance, instead of being split across many paths. In other words, having a nonlinearity that has a nonzero derivative near $I_e=0$ results in TAP support paths that are shortest paths in the topological distance when $M$ is small. Thanks to this fact the degeneracy at small $M$ is quite small compared to the case $\gamma=2$ (see figure \ref{fig:energy_conv_tap_itap} (d)). Apart from this difference in the small $M$ regime, the asymptotic behavior is unchanged. As in the case $\gamma=2$, when $\eta\gg1$, $H_\text{ITAP}/H_\text{TAP}\to 1$, $1-F_\text{int}\to 0$ as $1/\eta$, $K$ approaches a constant value. We conclude that the asymptotic phenomenology is robust to changes in the nonlinearity. One caveat to this statement consists in the choice of the capacity parameter. Our experiments suggest that if $c$ is taken to be large (e.g. a few hundreds), then this introduces an additional scale into the problem. The new scale alters the two regimes ($\rho,\eta$) phenomenology we so far encountered. 
Finally we remark that also in this setting, $\rho$ and $\eta$ represent the relevant parameter to describe the system respectively in the regimes where the flow per edge is $\Theta(1)$ and where $M=\Theta(N^2)$. This is illustrated by the fact that curves collapse when plotted as a function or $\rho$, when $\rho=\Theta(1)$, and as a function of $\eta$ when $\eta\geq1$.

\begin{figure}
    \centering
\includegraphics{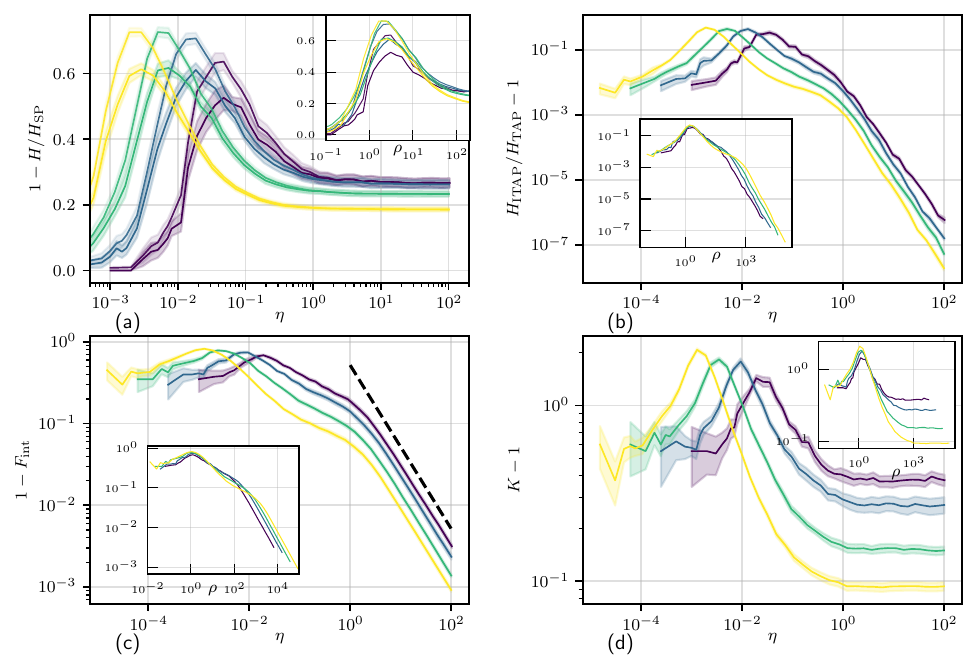}
    \caption{Experiments on RRG with $d=3$, OD pairs picked uniformly at random, and nonlinearity of the form \eqref{eq:BPR} with $\alpha=0.038,\, \beta=4,\, t_0=1,\, c =1$. (a) Relative energy gain with respect to the shortest path routing. The upper curves  correspond to TAP optimality, the lower ones to the RITAP solution. (b) Ratio between the energy reached by RITAP and the TAP optimal energy. (c) Fraction of integer paths as a function of $\eta$. The dashed line has a slope of $1/\eta$.(d) average degeneracy of support paths as a function of $\eta$}
    \label{fig:exp_BPR}
\end{figure}

\section{Numerical experiments details}
\label{app:num_exp_details} 
In this appendix we describe the details of the numerical experiments.
\subsection{$\rho$ approximates the average flow per edge}
We start by presenting data that confirm that $\rho=2M\log N/(N d \log d)$ is a good approximation to the flow per edge in the random model ($G\sim$ RRG, OD pairs sampled uniformly at random). In principle since every algorithm outputs a different set of paths, we would need to prove that this approximation holds for every algorithm considered. In practice however, there is little difference between the algorithms, therefore we only show data for RITAP. Given the RITAP edge flows $\pmb I_\text{RITAP}$ we define the average flow per edge to be $\langle I\rangle_\text{RITAP}\coloneqq \frac{1}{|E|}\sum_{e\in E} I^\star_e$.
Figures \ref{fig:rho_approx_avg_flow_per_edge} and \ref{fig:rho_approx_avg_flow_per_edge_gamma05} illustrate the agreement between the two quantities by plotting $\langle I\rangle_\text{RITAP}/\rho$ as a function of $\rho$ respectively for $\gamma=2$ and $\gamma=1/2$. The left panels consider the case of fixed $d=3$ and changing $N$, while in the right panels we fix $N=128$ and change $d$. The deviation of $\langle I\rangle_\text{RITAP}/\rho$ from one is always smaller than 30\%, for $\gamma=2$ and smaller than $3$ for $\gamma=1/2$. While a factor three in the discrepancy between may seem a lot, we remark that $\rho$ has a quite simple analytical expression that captures the dependence of $\langle I\rangle_\text{RITAP}$ on $M,N,d$ for a wide range of these variables.
\begin{figure}
    \centering
\includegraphics{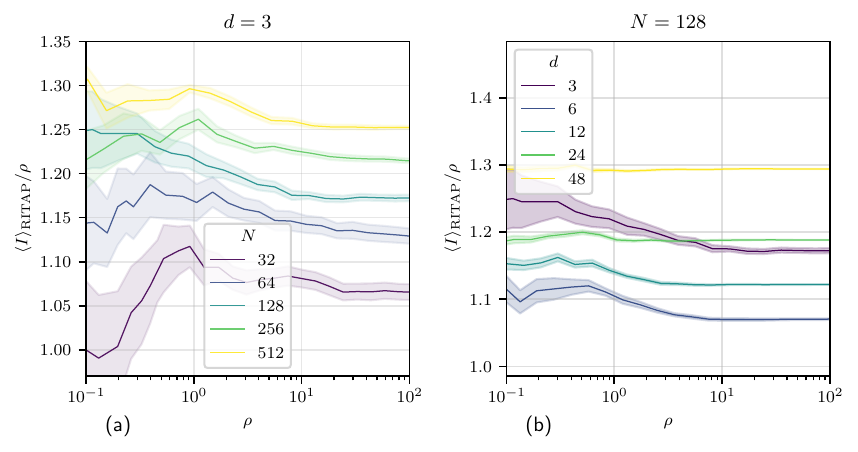}
    \caption{ $\langle I\rangle_\text{RITAP}/\rho$ as a function of $\rho$ for $\gamma=2$. Experiments are done on RRGs with OD pairs picked uniformly at random, . Results are averages over 10 instances for $N=512$ and over 20 instances for other $N$s. Both when varying the degree and the the size, $\rho$ is within 30\% of the average flow per edge.
    (a) degree fixed to 3 and varying $N$. (b) $N$ fixed to $128$ and varying degree. }
    \label{fig:rho_approx_avg_flow_per_edge}
\end{figure}
\begin{figure}
    \centering
\includegraphics{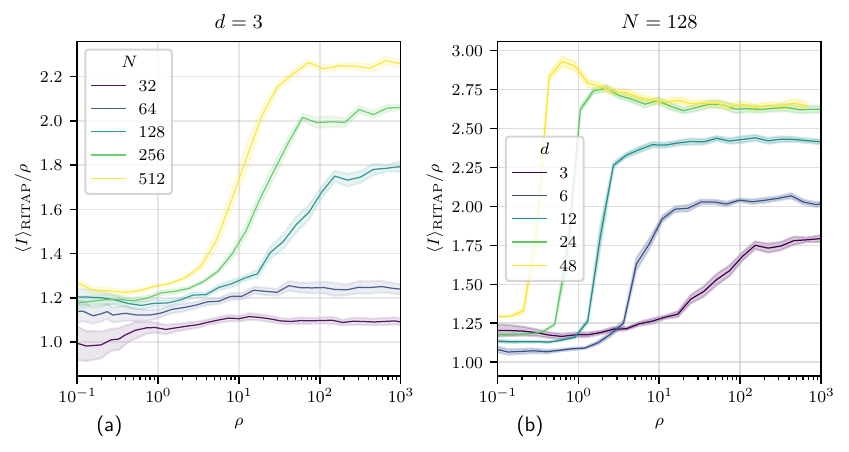}
    \caption{ $\langle I\rangle_\text{RITAP}/\rho$ as a function of $\rho$ for $\gamma=1/2$. Experiments are done on RRGs with OD pairs picked uniformly at random. Results are averages over 10 instances for $N=512$ and over 20 instances for other $N$s. Both when varying the degree and the the size, $\rho$ is within three times the average flow per edge.
    (a) degree fixed to 3 and varying $N$. (b) $N$ fixed to $128$ and varying degree. }
    \label{fig:rho_approx_avg_flow_per_edge_gamma05}
\end{figure}

\subsection{Experiments in section \ref{sec:integ_relax_solutions} with $M=\Theta (N^2)$}
The results shown in section \ref{sec:integ_relax_solutions} are obtained by averaging over 10 independent instances of graphs and demand matrices, for $N=512$ and over 20 independent instances for other sizes. A common concern, when running the Frank-Wolfe algorithm, is the proximity to the global optimum. Denote with $H_\text{LB}$ the lower bound to $H(\pmb I^\star)$ obtained for $t=t_\text{max}$, and with $H_\text{FW}$ the final energy reached by the FW algorithm. The quantity $1-H_\mathrm{LB}/H_\mathrm{FW}$, is an upper bound to $1-H(\pmb I^\star)/H_\mathrm{FW}$ which in turn represents the relative energy gap between the FW solution and the global optimum. Figure \ref{fig:rel:opt_gap_FW} plots this quantity as a function of $\eta$. Notice that in our experiments we used a number of steps dependent of $M, N,d$, therefore these curves do not to show how $M,N,d$ influence convergence. Instead they indicate that for basically all experiments shown in the main text $(d=3)$ the relative error in the optimal energy is less than $10^{-7}$.
\begin{figure}
    \centering
    \includegraphics{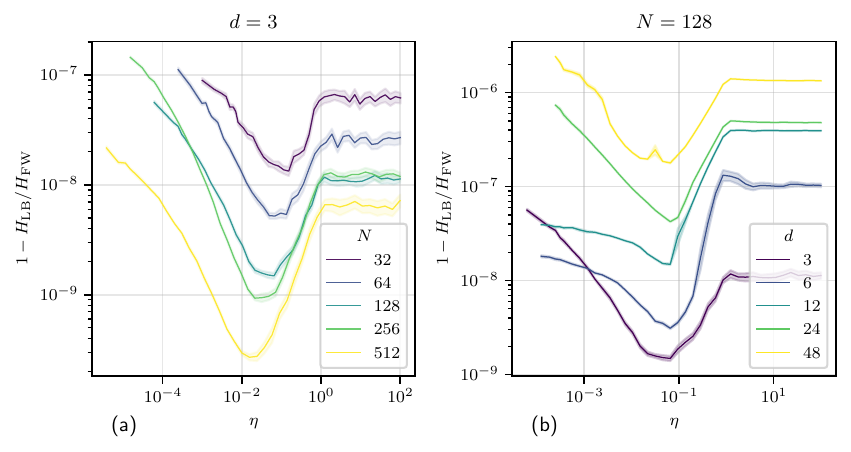}
    \caption{Upper bound to the relative error in the optimal TAP energy. The error originates from the imperfect convergence of the Frank-Wolfe algorithm to the TAP global minimum. All curves are averages over 20 independent realizations of the RRG and the demand matrix (where OD pairs are picked uniformly at random). 
    (a) Fixing $d=3$ and varying the size of the graph. (b) Fixing $N=128$ and changing the degree.}
    \label{fig:rel:opt_gap_FW}
\end{figure}
\subsection{Computation of $F_\text{int}$ and $K$}
To compute the fraction of integer paths in a TAP solution, we need to determine wen the flow on a path in $\PM$ (see \ref{eq:collection_TAP} for notation) has integer flow. To do so we check if the flow is within an integer number within a tolerance of $10^{-4}$. We remark that the fraction of integer paths is quite insensitive to this threshold as long as it is kept between $10^{-6}$ and $10^{-3}$.

To compute $K$ we compute the set of support paths using Wardrop's second principle \ref{prop:wardrop_2_principle}. To compute $\mathcal{S}_{xy}$ we use the following procedure. We set a relative tolerance threshold $\epsilon$ and consider the graph with the edges weighted with $\phi'( I^\text{FW}_e)$, where $\pmb I^\text{FW}$ is the flow on edge $e$ after running Frank-Wolfe. Let $c$ be the cost of the shortest path between $x$ and $y$ in such graph. We use Yen's algorithm \cite{yen1971finding} to compute all the deviations from the shortest path in order of increasing cost, stopping when a path of cost $(1+\epsilon)c$ is generated. All paths generated are considered support paths. In practice we use $\epsilon=10^{-5}$. We find that our estimate of $K$ is quite insensitive to $\epsilon$ in the range $[10^{-6}, 10^{-3}]$.

\subsection{Experiments in section \ref{section_results_comparison_algos_small_M}}
In this appendix we specify the hyperparameters used in the experiments with $\rho=O(1)$.
\begin{enumerate}
    \item \textbf{Greedy algorithm:} looking at the pseudocode of the algorithm \ref{alg:greedy_algorithm}, we see that we must only specify the paths at initialization. We choose these to be the shortest paths in the topological distance on the graph. 
    \item \textbf{Simulated annealing} requires the choice of a value of the starting inverse temperature $\beta_{min}$, to be plugged in the annealing schedule \eqref{eq:annealing_schedule}. For a very high value of this parameter simulated annealing would give the same results as the greedy algorithm. Conversely, a very low value (very high temperature) for $\beta_{min}$ would be responsible for the algorithm is very slow as it spends a lot of time at high temperature. We have observed numerically that a value around $\beta_{min}=20$ is suitable for the sizes and degrees of graphs considered.  
    The number of annealing steps $T_\text{ann}$ (see  \eqref{eq:annealing_schedule}) has been set to 30. After the annealing schedule has terminated, the value of $\beta$ is set to $\infty$ and 20 more steps (in each one updating all the paths) are carried out.
    When running the simulated annealing one must sample SAWs with fixed endpoints. To do so, algorithm \ref{alg:SAW_sampler} is used. This algorithm is itself an MCMC and the number of steps to take every time we sample a SAW is a hyperparameter which is set to 2.
    \item \textbf{RITAP:} The RITAP algorithm has several hyperparameters. The maximum number of iterations is set to $t_{max}=10^4$.
    
    The simulation is stopped before if a satisfactory convergence is reached before $t_\text{max}$. We set the relative tolerance to $\text{rtol}=10^{-9}$. The algorithm halts if $H(t)-H(t+1)/H(t)<\text{rtol}\, \eta(t)$, where $\eta(t)$ is the step size at time $t$.
    In the case $\gamma=2$ the step size schedule introduced in \eqref{eq:step_size_schedule}.
    In the case $\gamma=1/2$ instead we used a different step size schedule where all the flow is shifted at every step onto the all-or-nothing solution (see \ref{sec:FW4TAP}). This corresponds to setting $\eta(t)=1$ for all $t$. This second schedule is guaranteed to converge to a local minimum of TAP in the concave case (see \ref{sec:TAP=ITAP_attractive}). Since the minimum is not unique, in principle in the concave case the step size schedule can affect the depth of the minimum reached by RITAP. To test this possibility we compare the energy gains of the two schedules (\eqref{eq:step_size_schedule} and $\eta=1$) in figure \ref{fig:compare_schedules_FW}. The first schedule corresponds to  \eqref{eq:step_size_schedule}, while the second schedule to $\eta=1$. The fact that the two reach very similar energies indicates that the schedule does not influence much the final energy. Our choice to use $\eta=1$ in the concave case is then due to its simplicity.
    \item \textbf{CBP:} The CBP algorithm takes as a hyperparameter the maximum number of iterations $MaxIter$ that has been fixed to $MaxIter=20000 N$ where $N$ is the number of nodes in the graph. For instance, for an $N=200$ graph and $\gamma=2$, the number of iterations grows linearly and only reaches a value of $2.5\cdot10^6$ for the highest value of $\rho$ considered in our plots (while $MaxIter=4\cdot10^6$). Another hyperparameter is the number of iterations in which the message don't change that we wait before declaring convergence. It corresponds to the quantity $EqTime$ in \ref{alg:CBP_pseudocode} and it is set to $EqTime=100 N$. Finally, the convergence threshold $\epsilon$ (\ref{alg:CBP_pseudocode}) is fixed to $10^{-10}$.
\end{enumerate}

\begin{figure}[!ht]
    \centering
    \includegraphics[scale=1]{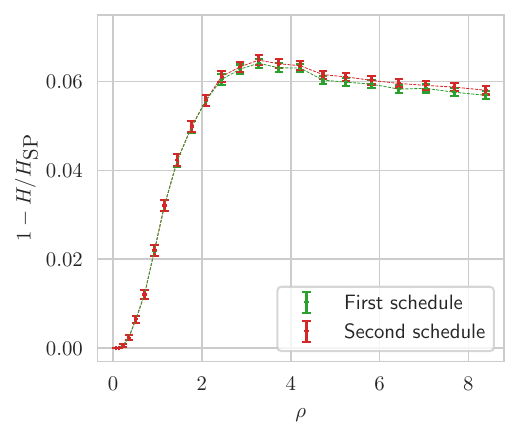}
    \caption{Relative energy difference between the paths using each of the schedules described for RITAP. The higher the better. Results are averages over 200 instances of ITAP, with $d=3$, $N=200$, $\gamma=1/2$. First schedule corresponds to \eqref{eq:step_size_schedule}, second schedule to $\eta(t)=1$, for all $t$.}
    \label{fig:compare_schedules_FW}
\end{figure}

\section{Proofs}
\label{app:proofs}
\subsection{Wardrop's second principle}
The following is a proof of Wardrop's second principle \ref{prop:wardrop_2_principle}.

\begin{proposition}[Wardrop's second principle]
\label{prop:wardrop_2_principle_app}
Let $G,D$ be a TAP instance and $\{h_{xy}^{\star a}\}_{(x,y)\in V\times V, a\in [|\Pi_{xy}|]}$ be any path flows at optimality. Also let $\pmb{I^\star}$ be the edge flows at optimality. \textbf{Then} for every $(x,y)\in V\times V$, for every $a\in[|\Pi_{xy}|]$, a necessary condition for $h_{xy}^{\star a}$ to be strictly positive is that
\begin{equation}
    \sum_{e\in \pi_{xy}^a} \phi'(I_e^\star)=\min_{\pi\in\Pi_{xy}}  \sum_{e\in \pi} \phi'(I_e^\star)
\end{equation}
\end{proposition}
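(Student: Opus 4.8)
The plan is to prove this as the first-order optimality condition of the convex program TAP, via a flow-rerouting perturbation. Fix a pair $(x,y)$ with $D_{xy}>0$ and suppose $h_{xy}^{\star a}>0$ for the path $\pi^a\coloneqq\pi_{xy}^a$; let $\pi$ be an arbitrary path in $\Pi_{xy}$. For $\delta\in[0,h_{xy}^{\star a}]$ I would consider the path flows obtained from $\{h_{xy}^{\star b}\}$ by decreasing the flow on $\pi^a$ by $\delta$ and increasing the flow on $\pi$ by $\delta$, leaving all other path flows unchanged. First I would check this is feasible: the demand constraint $\sum_b h_{xy}^b=D_{xy}$ is preserved, and nonnegativity holds precisely because $h_{xy}^{\star a}-\delta\ge 0$ on the chosen interval. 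The induced edge flow is $I_e(\delta)=I_e^\star+\delta(\I[e\in\pi]-\I[e\in\pi^a])$, which is affine in $\delta$; note also that every edge $e\in\pi^a$ carries $I_e^\star\ge h_{xy}^{\star a}>0$, so $\phi'(I_e^\star)$ is always evaluated at a strictly positive argument — this is exactly where the hypothesis $h_{xy}^{\star a}>0$ enters.

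Next I would set $g(\delta)\coloneqq H(\pmb I(\delta))=\sum_{e\in E}\phi\big(I_e^\star+\delta(\I[e\in\pi]-\I[e\in\pi^a])\big)$. Since $\pmb I(\delta)$ is a feasible edge flow for every $\delta$ in the interval and $\pmb I^\star=\pmb I(0)$ is a global minimizer of $H$ over feasible flows (convexity of $\phi$ makes TAP a convex program, and ``at optimality'' means global minimum), we have $g(\delta)\ge g(0)$ for all such $\delta$. On the other hand $g$ is differentiable at $0$ with
\[
g'(0)=\sum_{e\in E}\phi'(I_e^\star)\big(\I[e\in\pi]-\I[e\in\pi^a]\big)=\sum_{e\in\pi}\phi'(I_e^\star)-\sum_{e\in\pi^a}\phi'(I_e^\star),
\]
the contributions of edges traversed by both $\pi$ and $\pi^a$ cancelling. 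If this quantity were strictly negative, then $g(\delta)<g(0)$ for all sufficiently small $\delta>0$, contradicting optimality; hence $\sum_{e\in\pi^a}\phi'(I_e^\star)\le\sum_{e\in\pi}\phi'(I_e^\star)$. If one prefers to avoid invoking differentiability, the same conclusion follows from convexity of $g$ (a sum of convex functions precomposed with affine maps) together with nonnegativity of its right derivative at $0$.

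Finally, since $\pi\in\Pi_{xy}$ was arbitrary, $\sum_{e\in\pi^a}\phi'(I_e^\star)\le\min_{\pi\in\Pi_{xy}}\sum_{e\in\pi}\phi'(I_e^\star)$, and since $\pi^a$ itself lies in $\Pi_{xy}$ the reverse inequality is immediate, giving equality, which is the claim. I do not expect a genuine obstacle here: the argument is the standard derivation of Wardrop's characterization from the Beckmann-potential minimization. The only points that warrant care are the feasibility of the rerouted flow (handled by restricting $\delta\le h_{xy}^{\star a}$) and the observation that $\phi'$ is evaluated at positive arguments along $\pi^a$, so no issue arises at the boundary of the domain of $\phi$.
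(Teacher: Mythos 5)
Your proof is correct, but it takes a genuinely different route from the paper's. The paper proves this proposition by writing the TAP Lagrangian, deriving the full set of KKT conditions (stationarity, primal/dual feasibility, complementary slackness), using complementary slackness to kill the multipliers $\mu_{xy}^\pi$ on support paths, and concluding that all support paths share the common cost $-\lambda_{xy}$; it then only \emph{asserts}, with a passing reference to second-order optimality conditions, that this common value is minimal over $\Pi_{xy}$. Your rerouting argument --- shift $\delta$ units of flow from a support path $\pi^a$ to an arbitrary competitor $\pi$, observe that the energy restricted to this feasible segment is convex in $\delta$ with a nonnegative right derivative at $0$, and read off $\sum_{e\in\pi^a}\phi'(I_e^\star)\le\sum_{e\in\pi}\phi'(I_e^\star)$ --- is more elementary (no multipliers, no constraint qualification) and is actually more complete on the one point the paper leaves informal, since it delivers the minimality directly rather than just the equality of costs across support paths. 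It also localizes correctly: you only need $\pmb I^\star$ to be optimal along the one-dimensional feasible segment, and your observations about restricting $\delta\le h_{xy}^{\star a}$ and about $\phi'$ being evaluated at positive flow on $\pi^a$ are exactly the right points of care. What the paper's KKT route buys instead is the explicit multipliers $\lambda_{xy}$, which it reuses later (e.g., in the scaling argument of Lemma \ref{lemma:double_D}); your argument does not produce these, but the proposition as stated does not need them.
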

\begin{proof}

We proceed by writing down the KKT conditions for TAP \ref{eq:TAP_ITAP_formal}.  
The TAP Lagrangian written in terms of path flows reads
\begin{align}
\label{eq:lagrangian_TAP}
    \mathcal L\left(\pmb h,\pmb \lambda,\pmb \mu\right)=\sum_{e\in E}\phi\left( I_e\right)+\sum_{x,y}\lambda_{xy}\left(\sum_{\pi\in\Pi_{xy}}h_{xy}^\pi-D_{xy}\right)+\sum_{x,y}\sum_{\pi\in\Pi_{xy}}\mu_{xy}^\pi h_{xy}^\pi,
\end{align}
where in the first term we used the identity $I_e=\sum_{\pi\in\Pi_{xy}} h_{xy}^\pi\mathbb I[e\in \pi]$.
The second term corresponds to the constraint that the path flows satisfy the demand for every pair of nodes, and the third term enforces the positivity of all flows.

The stationarity with respect to $\pmb h$ imposes
\begin{equation}
    \sum_{e\in \pi}\phi'(I_e)+\lambda_{xy}+\mu_{xy}^\pi=0 \quad \forall (x,y),\,\forall \pi\in \Pi_{xy}.
\end{equation}
The primal feasibility imposes 
\begin{align}
    &\sum_{\pi\in\Pi_{xy}} h_{xy}^\pi=D_{xy},\quad \forall (x,y)\\
    & h_{xy}^\pi\geq0, \quad \forall (x,y), \,\forall \pi\in\Pi_{xy}.
\end{align}
The dual feasibility imposes $\mu_{xy}^\pi\geq 0, $ $\forall (x,y),\,\forall \pi\in\Pi_{xy}$.
And finally the complementary slackness imposes $\mu_{xy}^\pi h_{xy}^\pi=0,$ $\forall (x,y),\,\forall \pi\in \Pi_{xy}$.

Fix an arbitrary pair of nodes $(x,y)$, and consider the paths that carry positive flow $(h_{xy}^\pi>0)$ at optimality: these are the support paths $\mathcal S_{xy}$. The complementary slackness imposes that $\mu_{xy}^\pi=0$ for all $\pi\in\mathcal S_{xy}$.
Plugging this back into the stationarity condition we get that all paths carrying a positive flow satisfy 
\begin{equation}
\label{eq:same_traveltime_kkt_wardrop2}
    \sum_{e\in \pi}\phi'(I_e)=-\lambda_{xy} \quad \forall \pi\in \mathcal S_{xy}.
\end{equation}
Equation \ref{eq:same_traveltime_kkt_wardrop2} basically tells us that all support paths in $\mathcal S_{xy}$ must indeed have the same $\sum_{e\in \pi}\phi'(I_e)$. Using second order optimality conditions it can also be proved that $\sum_{e\in \pi}\phi'(I_e)$ is minimal on support paths. 
\end{proof}
\subsection{Using $K$ to bound $F_\text{int}$}
In section \ref{eq:bound_Fint_degen} we stated that $  F_\text{int}\geq1-\frac{N(N-1)}{M}\left(K-1\right)$.
To prove it, we shall now prove the following result, which is slightly stronger:
\begin{proposition}
\label{prop:bound_Fint_degen}
    Let $F_\text{int}\coloneqq\frac{1}{M}\sum_{x,y \in (V\times V)} \sum_{a\in \mathcal [|S_{xy}|]} \lfloor h_{xy}^a\rfloor$. And suppose $\sum_{a\in \mathcal [|S_{xy}|]}  h_{xy}^a=D_{xy}$ with $D_{xy}\in\N$. \textbf{Then}
    \begin{equation}
\label{eq:bound_Fint_degen_app}
     F_\text{int}\geq \frac{1}{M}\sum_{x,y\in V\times V}\max(0,D_{xy}-(|\mathcal S_{xy}|-1)),
\end{equation}
\end{proposition}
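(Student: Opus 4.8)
The plan is to prove the bound one origin-destination pair at a time, since both sides of \eqref{eq:bound_Fint_degen_app} are sums over pairs $(x,y)$ divided by $M$. So I would fix a pair $(x,y)$ with $D_{xy}>0$ and show that
\begin{equation}
\sum_{a\in[|\mathcal S_{xy}|]}\lfloor h_{xy}^a\rfloor \;\geq\; \max\bigl(0,\,D_{xy}-(|\mathcal S_{xy}|-1)\bigr).
\end{equation}
Write $n\coloneqq|\mathcal S_{xy}|$ for the number of support paths, and abbreviate $h_a\coloneqq h_{xy}^a$ for $a=1,\dots,n$. We are given $\sum_{a=1}^n h_a=D_{xy}\in\N$ and all $h_a\geq 0$.

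The key elementary fact is that for any real number $h\geq 0$ one has $\lfloor h\rfloor > h-1$, hence $\lfloor h\rfloor \geq h-1$; more usefully, since $\lfloor h\rfloor$ is a nonnegative integer, $\lfloor h\rfloor\geq \lceil h-1\rceil\geq h-1$ and also $\lfloor h\rfloor\geq 0$. First I would sum the inequality $\lfloor h_a\rfloor \geq h_a - 1$ over all $a$, giving
\begin{equation}
\sum_{a=1}^n \lfloor h_a\rfloor \;\geq\; \sum_{a=1}^n h_a - n \;=\; D_{xy} - n.
\end{equation}
This is almost the claim but off by one: I want $D_{xy}-(n-1)$. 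To recover the extra unit, note that the fractional parts $\{h_a\}\coloneqq h_a-\lfloor h_a\rfloor$ satisfy $\sum_a \{h_a\} = D_{xy}-\sum_a\lfloor h_a\rfloor$, which is an integer in $[0,n)$ because each $\{h_a\}\in[0,1)$; call it $m$, so $0\le m\le n-1$. Then $\sum_a\lfloor h_a\rfloor = D_{xy}-m \geq D_{xy}-(n-1)$. Combined with the trivial bound $\sum_a\lfloor h_a\rfloor\geq 0$ (each term is a nonnegative integer), this yields $\sum_a\lfloor h_a\rfloor\geq\max(0,D_{xy}-(n-1))$, which is exactly what is needed for this pair.

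Finally I would sum over all pairs $(x,y)\in V\times V$ and divide by $M$; pairs with $D_{xy}=0$ contribute $0$ to both sides (there $\mathcal S_{xy}$ is irrelevant and the max is $0$), so they can be harmlessly included. This gives $F_\text{int}\geq \frac{1}{M}\sum_{x,y}\max(0,D_{xy}-(|\mathcal S_{xy}|-1))$, establishing Proposition \ref{prop:bound_Fint_degen}. To then deduce \eqref{eq:bound_Fint_degen} as stated in the main text, drop the $\max$ with $0$ (weakening the bound), use $|\mathcal S_{xy}|-1 = K_{xy}-1$ and $\sum_{x,y}(K_{xy}-1)\le Q\cdot(K-1)\le N(N-1)(K-1)$ together with $\sum_{x,y}D_{xy}=M$ and $\eta = M/(N(N-1))$, giving $F_\text{int}\geq 1-\frac{1}{\eta}(K-1)$. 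There is no real obstacle here; the only subtlety is the "off-by-one improvement," i.e. observing that the sum of fractional parts is an integer bounded by $n-1$ rather than merely by $n$, which is what makes the sharper constant $n-1$ (instead of $n$) come out.
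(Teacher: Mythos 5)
Your proof is correct and follows essentially the same route as the paper: the paper isolates the key step as Lemma \ref{lemma:max_noninteger_flow} (the sum of fractional parts is an integer strictly less than $|\mathcal S_{xy}|$, hence at most $|\mathcal S_{xy}|-1$, and also at most $D_{xy}$), which is exactly your ``off-by-one improvement,'' merely stated in terms of fractional parts rather than floors. Your per-pair decomposition, the handling of $D_{xy}=0$ pairs, and the subsequent derivation of \eqref{eq:bound_Fint_degen} all match the paper's argument.
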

\begin{proof}
where $\mathcal S_{xy}$ is the set of support paths from $x$ to $y$. Let us first recall the definition of $F_\text{int}=\frac{1}{M}\sum_{x,y \in (V\times V)} \sum_{a\in [\left|\mathcal S_{xy}\right|]} \lfloor h_{xy}^a\rfloor$. \footnote{notice we exchanged $R_{xy}$ with $|\mathcal S_{xy}|$. This makes no difference.}
Fix a pair of nodes $x,y$ and suppose we must route $D_{xy}$ units of flow from $x$ to $y$. We call $(F_\text{int})_{xy}$ the fraction of integer flow going from $x$ to $y$, i.e., $(F_\text{int})_{xy}\coloneqq \frac{1}{D_{xy}}\sum_{a\in [\left|\mathcal S_{xy}\right|]} \lfloor h_{xy}^a\rfloor$. Using lemma \ref{lemma:max_noninteger_flow} we have 
\begin{align}
    (F_\text{int})_{xy}&\geq 1-\min(D_{xy},\mathcal{S}_{xy}-1)/D_{xy}=1-(D_{xy}+\min(0,\mathcal{S}_{xy}-1-D_{xy}))/D_{xy}=\\&=-\min(0,\mathcal{S}_{xy}-1-D_{xy})/D_{xy}=\max(0,D_{xy}-(\mathcal{S}_{xy}-1))/D_{xy}
\end{align}
Using the expression $F_\text{int}=\frac{1}{M}\sum_{x,y} D_{xy} (F_\text{int})_{xy}$, we complete the proof.
\end{proof}
From Proposition \ref{prop:bound_Fint_degen} we obtain the original bound by relaxing the maximum $\max(0,D_{xy}-(|\mathcal S_{xy}|-1))\geq D_{xy}-(|\mathcal S_{xy}|-1)$. By then taking the sum over $x,y; \, x\neq y$ and using the definition of $K$,  we obtain \eqref{eq:bound_Fint_degen}.
We now prove the auxiliary lemma we used in the proof.

\begin{lemma}
\label{lemma:max_noninteger_flow}
Let $S,D$ be two positive integers, and consider the following optimization problem
\begin{equation}
\text{\textbf{max fractional flow}}=\max_{\pmb h: \sum_{a=1}^S h_a=D}  \sum_{a=1}^S\left(h_a-\lfloor h_a\rfloor\right).
\end{equation}
Then $\text{\textbf{max fractional flow}}=\min (D,S-1)$.
\end{lemma}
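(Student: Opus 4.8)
The plan is to prove the two inequalities $\text{\textbf{max fractional flow}}\le \min(D,S-1)$ and $\text{\textbf{max fractional flow}}\ge \min(D,S-1)$ separately: the first by an integrality/counting argument, the second by exhibiting an explicit maximizer. Throughout I write $\{x\}:=x-\lfloor x\rfloor\in[0,1)$ for the fractional part, and I use that the $h_a$ are nonnegative, since they represent flows.

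For the upper bound, fix any feasible $\pmb h$. Because $\sum_{a}h_a=D\in\mathbb{Z}$, the quantity $\sum_a\{h_a\}=D-\sum_a\lfloor h_a\rfloor$ is an integer. On one hand it lies in $[0,S)$, being a sum of $S$ numbers each in $[0,1)$; an integer in $[0,S)$ is at most $S-1$. On the other hand, nonnegativity gives $\{h_a\}\le h_a$, hence $\sum_a\{h_a\}\le\sum_a h_a=D$. Combining the two, $\sum_a\{h_a\}\le\min(D,S-1)$ for every feasible $\pmb h$, so the optimal value is at most $\min(D,S-1)$.

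For the lower bound I exhibit a feasible point attaining this value. Set $m:=\min(D,S-1)$, so that $0\le m\le S-1$, giving $m/S\in[0,1)$, and $D-m\ge 0$. Define $h_1:=m/S+(D-m)$ and $h_a:=m/S$ for $2\le a\le S$. Then all $h_a\ge 0$, and $\sum_a h_a=(D-m)+S\cdot(m/S)=D$, so $\pmb h$ is feasible. Since $m/S<1$ while $D-m$ is a nonnegative integer, $\lfloor h_a\rfloor$ absorbs exactly the integer part and $\{h_a\}=m/S$ for every $a$. Hence $\sum_a\{h_a\}=S\cdot(m/S)=m$, which matches the upper bound. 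Therefore $\text{\textbf{max fractional flow}}=\min(D,S-1)$, as claimed.

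The argument is elementary; the one point requiring a moment's care is the observation that $\sum_a\{h_a\}$ is forced to be an \emph{integer} (because $D$ and the $\lfloor h_a\rfloor$ are integers), which is precisely what upgrades the trivial bound $\sum_a\{h_a\}<S$ to $\sum_a\{h_a\}\le S-1$; the nonnegativity of the flows supplies the complementary bound by $D$, and the uniform construction $h_a=m/S$ (with the surplus $D-m$ dumped onto a single coordinate) shows both bounds are simultaneously tight.
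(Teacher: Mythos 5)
Your proof is correct, and the upper bound is obtained exactly as in the paper: nonnegativity of the flows gives $\sum_a\{h_a\}\le D$, while the observation that $\sum_a\{h_a\}=D-\sum_a\lfloor h_a\rfloor$ is an integer strictly below $S$ upgrades the trivial bound to $S-1$. Where you genuinely improve on the paper is the attainment step. The paper only exhibits the configuration $h_a=(S-1)/S$ for all $a$, which satisfies the constraint $\sum_a h_a=D$ only in the special case $D=S-1$; it neither adjusts for the surplus when $D>S-1$ nor gives a maximizer achieving the value $D$ when $D<S-1$. Your construction with $m=\min(D,S-1)$, $h_a=m/S$ for $a\ge 2$ and $h_1=m/S+(D-m)$, handles all cases uniformly and verifies feasibility and the value $m$ explicitly, so your argument is the more complete of the two. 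One small point worth keeping explicit (as you did): the statement as written only imposes $\sum_a h_a=D$, and nonnegativity of the $h_a$ is essential — without it the optimum would be $S-1$ even when $D<S-1$ — so flagging that the flows are nonnegative is not a cosmetic remark but a needed hypothesis, used implicitly by the paper as well.
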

\begin{proof}
    Since  $\sum_{a=1}^S h_a=D$ it is also true that $\sum_{a=1}^S\left(h_a-\lfloor h_a\rfloor\right)\leq D$.
    Now we shall prove that $\sum_{a=1}^S\left(h_a-\lfloor h_a\rfloor\right)\leq S-1$.
    To do so, notice that $\sum_{a=1}^S\left(h_a-\lfloor h_a\rfloor\right)<S$, since every term in the sum is strictly smaller than one. Moreover $\sum_{a=1}^S\left(h_a-\lfloor h_a\rfloor\right)$ must be integer. This is because both $\sum_{a=1}^S h_a=D$ and $\sum_{a=1}^S\lfloor h_a\rfloor$ are integer. This implies that $\sum_{a=1}^S\left(h_a-\lfloor h_a\rfloor\right)\leq S-1$. To show that ti can be equal to $S-1$, consider Let $h_a=(S-1)/S$ for all $a$. For this $\pmb h$ the sum of fractional parts is indeed $S-1$.
\end{proof}
\subsection{Proof that $F_\text{int}\to 1$ implies $H_\text{ITAP}/H_\text{TAP}\to 1$}

In this section we establish that if the traction of integer paths approaches one, then the ratio $H_\text{ITAP}/H_\text{TAP}$ also does.

\begin{proposition}
    Let $G$ be an undirected graph with $N$ nodes and $|E|$ edges, and let $D$ a demand matrix. Let $\phi(x)=x^p$, with $p\geq 1$ be the nonlinearity in TAP. 
    Consider $H_\text{TAP},H_\text{ITAP}$ the energies respectively reached by TAP and RITAP when run on $G,D$. Furthermore let $F_\text{int}$ be the fraction of integer paths in TAP as defined in \ref{sec:integ_relax_solutions}.\textbf{Then}
    \begin{equation}
    \label{eq:bound_H_ITAP_TAP_Fint}
        \frac{H_\text{ITAP}}{H_\text{TAP}}\leq \left[1+2N|E|^{1-1/p}(1-F_\text{int})\right]^p
    \end{equation}
    
\end{proposition}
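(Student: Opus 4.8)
The plan is to argue entirely at the level of edge flows. Write $\pmb I^{\text{TAP}}$ and $\pmb I^{\text{ITAP}}$ for the vectors of edge flows (indexed by $e\in E$, with nonnegative entries) produced respectively by \verb|solveTAP| and by the full RITAP procedure of Algorithm~\ref{alg:relaxed_ITAP_solver}, so that $H_\text{TAP}=\sum_{e\in E}(I^{\text{TAP}}_e)^p=\|\pmb I^{\text{TAP}}\|_p^p$ and likewise $H_\text{ITAP}=\|\pmb I^{\text{ITAP}}\|_p^p$. The engine of the proof is an $\ell^1$ estimate on the perturbation caused by the rounding step,
\begin{equation}
\|\pmb I^{\text{ITAP}}-\pmb I^{\text{TAP}}\|_1=\sum_{e\in E}\bigl|I^{\text{ITAP}}_e-I^{\text{TAP}}_e\bigr|\;\le\;2N\,M\,(1-F_\text{int}),
\end{equation}
after which everything reduces to standard norm inequalities.

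To prove this estimate, recall that RITAP first replaces every path flow $h_{xy}^a$ by $\lfloor h_{xy}^a\rfloor$ and then, for each $(x,y)$, adds one extra unit of flow to $D_{xy}-\sum_a\lfloor h_{xy}^a\rfloor$ of the support paths. The \emph{total} flow removed in the first stage is $\sum_{x,y}\sum_a\bigl(h_{xy}^a-\lfloor h_{xy}^a\rfloor\bigr)=\sum_{x,y}D_{xy}-\sum_{x,y}\sum_a\lfloor h_{xy}^a\rfloor=M-MF_\text{int}=M(1-F_\text{int})$, using $\sum_{x,y}D_{xy}=M$ and the definition of $F_\text{int}$; the total flow added back in the second stage equals the same integer $M(1-F_\text{int})$. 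Every path carrying flow may be assumed simple — removing a cycle never increases $H$ since $\phi$ is nondecreasing, and in any case the Frank--Wolfe based \verb|solveTAP| only routes along shortest paths — hence has at most $N$ edges, so removing or adding one unit of flow on such a path changes the $\ell^1$ norm of the edge-flow vector by at most $N$. Since $|I^{\text{ITAP}}_e-I^{\text{TAP}}_e|$ is bounded by the flow removed on $e$ plus the flow added on $e$, summing over $e$ gives $N\,M(1-F_\text{int})$ from removals plus $N\,M(1-F_\text{int})$ from additions, which is the displayed bound.

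With the $\ell^1$ estimate in hand, apply the Minkowski inequality for $\ell^p$ (valid since $p\ge1$) together with $\|\pmb v\|_p\le\|\pmb v\|_1$ to obtain
\begin{equation}
H_\text{ITAP}^{1/p}=\|\pmb I^{\text{ITAP}}\|_p\le\|\pmb I^{\text{TAP}}\|_p+\|\pmb I^{\text{ITAP}}-\pmb I^{\text{TAP}}\|_p\le H_\text{TAP}^{1/p}+2NM(1-F_\text{int}).
\end{equation}
To convert the bare $M$ into a multiple of $H_\text{TAP}^{1/p}$, note that the total flow satisfies $\sum_{e\in E}I^{\text{TAP}}_e=\sum_{x,y}\sum_a h_{xy}^a\,L(\pi_{xy}^a)\ge\sum_{x,y}D_{xy}=M$, since every OD pair has distinct endpoints and so every flow-carrying path uses at least one edge; and by Hölder's inequality $\sum_{e\in E}I^{\text{TAP}}_e\le|E|^{1-1/p}\bigl(\sum_{e\in E}(I^{\text{TAP}}_e)^p\bigr)^{1/p}=|E|^{1-1/p}H_\text{TAP}^{1/p}$. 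Hence $M\le|E|^{1-1/p}H_\text{TAP}^{1/p}$, and substituting gives $H_\text{ITAP}^{1/p}\le H_\text{TAP}^{1/p}\bigl(1+2N|E|^{1-1/p}(1-F_\text{int})\bigr)$; raising to the power $p$ yields \eqref{eq:bound_H_ITAP_TAP_Fint}.

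The only genuinely delicate step is the $\ell^1$ estimate of the second paragraph: one must identify the total reassigned flow with $M(1-F_\text{int})$ through the definition of $F_\text{int}$, and must justify the uniform length bound $L(\pi)\le N$ on the paths that carry flow. Everything downstream — Minkowski, $\|\cdot\|_p\le\|\cdot\|_1$, Hölder, and the trivial lower bound $\sum_{e\in E} I^{\text{TAP}}_e\ge M$ — is routine.
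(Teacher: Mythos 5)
Your proof is correct and follows essentially the same route as the paper's: an $\ell^1$ bound $\|\pmb I^{\text{ITAP}}-\pmb I^{\text{TAP}}\|_1\le 2NM(1-F_\text{int})$ (the paper obtains it by explicitly introducing the truncated flow $\pmb I^{\text{INT}}$ and applying the triangle inequality, which is exactly your removed-flow/added-flow decomposition), followed by Minkowski, $\|\cdot\|_p\le\|\cdot\|_1$, and the lower bound $H_\text{TAP}^{1/p}\ge M/|E|^{1-1/p}$ (which the paper derives via Jensen rather than your equivalent Hölder argument). No substantive differences.
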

\begin{proof}
    Let $\pmb I^\text{TAP},\pmb I^\text{ITAP}$ be respectively the TAP edge flows at optimality and the ITAP flows outputted by RITAP.
   First we introduce the integer TAP flow $\pmb I^\text{INT}$, given by truncating the decimal part of all path flows in TAP. Mathematically
   $I^\text{INT}_e=\sum_{x,y}\sum_{a\in \PM^\text{TAP}_{xy}} \lfloor h_{xy}^a \rfloor \mathbb I[e\in \pi_{xy}^a]$. 
   $\pmb I^\text{INT}$ can be thought of as a middle point between  $\pmb I^\text{TAP}$ and  $\pmb I^\text{ITAP}$, it fact RITAP proceeds by keeping the flow $\pmb I^\text{INT}$ and reassigning the rest (the fractional part).
   
   Our goal is to upper bound $\norm{\pmb I^\text{ITAP}-\pmb I^\text{TAP}}$. Using the triangular inequality we have
   \begin{equation}
   \label{eq:TAP_ITAP_triang_ineq}
       \norm{\pmb I^\text{ITAP}-\pmb I^\text{TAP}}=\norm{\pmb I^\text{ITAP}-\pmb I^\text{INT}+\pmb I^\text{INT}-\pmb I^\text{TAP}}\leq \norm{\pmb I^\text{ITAP}-\pmb I^\text{INT}}+\norm{\pmb I^\text{TAP}-\pmb I^\text{INT}}
   \end{equation}
   We shall now bound each of the two terms in $\ell_1$ norm:
   
\begin{align}
\label{eq:bound_l1_norm_I_diff}
    &\norm{\pmb I^\text{TAP}-\pmb I^\text{INT}}_1=\sum_{e\in E}\sum_{x,y}\sum_{a\in \PM^\text{TAP}_{xy}}\left( h_{xy}^a-\lfloor h_{xy}^a \rfloor\right)\mathbb I[e\in\pi_{xy}^a]=\sum_{x,y}\sum_{a\in \PM^\text{TAP}_{xy}}\left( h_{xy}^a-\lfloor h_{xy}^a \rfloor\right) \text{length}(\pi_{xy}^a)\\&\leq N\sum_{x,y}\sum_{a\in \PM^\text{TAP}_{xy}}\left( h_{xy}^a-\lfloor h_{xy}^a \rfloor\right)=NM(1-F_\text{int})
\end{align}
In the inequality we simply used $\text{length}(\pi)\leq N$ for all paths.
By replacing $\pmb I_\text{TAP}$ with $\pmb I_\text{ITAP}$ the series of inequalities is still valid, so we also have $\norm{\pmb I^\text{ITAP}-\pmb I^\text{INT}}_1\leq NM(1-F_\text{int})$. And finally from \ref{eq:TAP_ITAP_triang_ineq} we obtain $\norm{\pmb I^\text{ITAP}-\pmb I^\text{TAP}}\leq 2 NM(1-F_\text{int})$.

Now write 
$H(\pmb I^\text{ITAP})-H(\pmb I^\text{TAP})=\norm{\pmb I^\text{ITAP}}_p^p-\norm{\pmb I^\text{TAP}}_p^p$. Our goal is to find an upper bound to this difference in terms of $\norm{\pmb I^\text{ITAP}-\pmb I^\text{TAP}}$.

Once again we employ the triangular inequality applied to the triangle with sides $\pmb I^\text{ITAP},\pmb I^\text{TAP},\pmb I^\text{ITAP}-\pmb I^\text{TAP}$.  This gives $\norm{\pmb I^\text{ITAP}}_p\leq \norm{\pmb I^\text{TAP}}_p+\norm{\pmb I^\text{ITAP}-\pmb I^\text{TAP}}_p$. 
Elevating to the power $p$ we get 
\begin{equation}
    \norm{\pmb I^\text{ITAP}}_p^p\leq \left[\norm{\pmb I^\text{TAP}}_p+\norm{\pmb I^\text{ITAP}-\pmb I^\text{TAP}}_p\right]^p
\end{equation}

And therefore the ratio between ITAP and TAP energy can be bounded as
\begin{align}
   \frac{ \norm{\pmb I^\text{ITAP}}_p^p}{\norm{\pmb I^\text{TAP}}_p^p}
   \leq \frac{1}{\norm{\pmb I^\text{TAP}}_p^p}\left[\norm{\pmb I^\text{TAP}}_p+\norm{\pmb I^\text{ITAP}-\pmb I^\text{TAP}}_p\right]^p=\left[1+ \frac{\norm{\pmb I^\text{ITAP}-\pmb I^\text{TAP}}_p}{\norm{\pmb I^\text{TAP}}_p}\right]^p\leq \left[1+ \frac{\norm{\pmb I^\text{ITAP}-\pmb I^\text{TAP}}_1}{\norm{\pmb I^\text{TAP}}_p}\right]^p,
\end{align}
where in the second inequality we used that $\norm{v}_p\leq \norm{v}_1$, if $p\geq1$.
As a last ingredient we use lemma \ref{lemma:lower_bound_H_TAP} to \textit{lower} bound $\norm{I^\text{TAP}}_p^p\geq |E|(M/|E|)^p$.
This way the bound becomes
\begin{equation}
    \frac{ \norm{\pmb I^\text{ITAP}}_p^p}{\norm{\pmb I^\text{TAP}}_p^p}
   \leq \left[1+ \frac{2NM(1-F_\text{int})}{(M/|E|^{1-1/p})}\right]^p= \left[1+ 2N|E|^{1-1/p}(1-F_\text{int})\right]^p
\end{equation}
\end{proof}
Let us comment briefly on the sources of looseness in the bound. First there is the fact that we are upper bounding the length of a path with $N$ in \eqref{eq:bound_l1_norm_I_diff}. In an RRG for example the typical length would be $\log N$. Second comes the use of the $\ell_1$ norm to upper bound the $\ell_p$ norm. Third, in lemma \ref{lemma:lower_bound_H_TAP} we lower bound the length of a path with one. Again, for example in an RRG a typical value would be $\log N$. 

In an RRG, one can for example hope to tighten the bound by taking into account the first and third points above: this would result in eliminating $N$ on the right hand side of \eqref{eq:bound_H_ITAP_TAP_Fint}.
\begin{lemma}
\label{lemma:lower_bound_H_TAP}
    Let $G,D$ be a TAP instance with $\phi$ convex, and indicate with $H_\text{TAP}$ the TAP optimal energy. \textbf{Then}
    \begin{equation}
        H_\text{TAP}\geq |E|\phi\left(\frac{M}{|E|}\right)
    \end{equation}
\end{lemma}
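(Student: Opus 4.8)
The plan is to combine a feasibility argument with a single application of Jensen's inequality. First I would observe that for any TAP-feasible edge flow $\pmb I$, summing the edge flows over all edges counts, for each origin–destination pair, the amount of flow sent along each path weighted by the path length. Concretely, $\sum_{e\in E} I_e = \sum_{x,y}\sum_{a} h^a_{xy}\,\text{length}(\pi^a_{xy})$. Since every path uses at least one edge, $\text{length}(\pi^a_{xy})\ge 1$, and the total path flow equals $\sum_{x,y}D_{xy}=M$, this gives the key inequality $\sum_{e\in E} I_e \ge M$. This holds in particular at the TAP optimum.

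Next I would apply convexity of $\phi$ via Jensen's inequality to the uniform average over the $|E|$ edges:
\begin{equation}
\frac{1}{|E|}\sum_{e\in E}\phi(I_e)\;\ge\;\phi\!\left(\frac{1}{|E|}\sum_{e\in E} I_e\right).
\end{equation}
Since $\phi$ is increasing (this is assumed globally in the paper) and $\frac{1}{|E|}\sum_e I_e\ge M/|E|$, the right-hand side is at least $\phi(M/|E|)$. Multiplying through by $|E|$ yields $H(\pmb I)=\sum_{e\in E}\phi(I_e)\ge |E|\,\phi(M/|E|)$, and evaluating at the optimal flow gives $H_\text{TAP}\ge |E|\,\phi(M/|E|)$, as claimed.

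There is essentially no obstacle here: the only points to be slightly careful about are (i) checking that the feasibility bound $\sum_e I_e\ge M$ uses only that each path has length $\ge 1$ and that the demand constraint forces total path flow $M$, and (ii) noting that we need both convexity (for Jensen) and monotonicity (to replace the averaged flow by its lower bound $M/|E|$) of $\phi$ — both of which are available by hypothesis. The bound is of course loose, as remarked in the surrounding text, precisely because $\text{length}(\pi)\ge 1$ is wasteful on sparse graphs where typical path lengths are $\Theta(\log N)$; tightening it would require replacing the length lower bound by a graph-dependent quantity, but that is outside the scope of this lemma.
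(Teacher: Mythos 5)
Your proof is correct and follows essentially the same route as the paper: summing the edge-flow constraints to get $\sum_{e}I_e=\sum_{x,y,a}h^a_{xy}\,\mathrm{length}(\pi^a_{xy})\ge M$ via $\mathrm{length}\ge 1$, then Jensen's inequality plus monotonicity of $\phi$. Your explicit remark that monotonicity is needed for the final step is a small but welcome clarification that the paper leaves implicit.
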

\begin{proof}
Let $\pmb I$ be the TAP flows at optimality, and $M\coloneqq \sum_{x,y} D_{xy}$.
To obtain the bound we start from the observation that $\sum_{e\in E} I_e\geq M$. Consider the constraints $I_e=\sum_{(x,y)\in(V\times V)} \sum_{a\in[|\Pi_{xy}|]} h^a_{xy} \mathbb I[e\in\pi^a_{xy}] \quad \forall e\in E$ in \ref{eq:TAP_ITAP_formal}. Taking the sum over $e\in E$ of these constraint gives the identity
 \begin{equation}
        \sum_{e\in E} I_e=\sum_{e\in E}\sum_{(x,y)\in(V\times V)}\sum_{a\in[|\Pi_{xy}|]} h^a_{xy} \I[e\in\pi^a_{xy}]=\sum_{(x,y)\in(V\times V)}\sum_{a\in[|\Pi_{xy}|]} h^a_{xy} \;\text{length}(\pi_{xy}^a)
    \end{equation}
Since $\text{length}(\pi_{xy}^a)\geq1$ we obtain 
$\sum_{e\in E} I_e\geq \sum_{(x,y)\in(V\times V)}\sum_{a\in[|\Pi_{xy}|]} h^a_{xy}=M$.
Using Jensen's inequality we get
\begin{equation}
    H_\text{TAP}/|E|= \frac{1}{|E|}\sum_e \phi(I_e)\geq \phi\left(\frac{\sum_{e\in |} I_e}{|E|}\right)\geq\phi\left(\frac{M}{|E|}\right),
\end{equation}
    which concludes the proof.
\end{proof}

\subsection{Proof that doubling the demand matrix does not change the support paths}
Take a TAP instance $G,D$, with $\phi(x)=x^\gamma$. Suppose we know a set of optimal paths and path flows for $G,D$. Consider now the TAP instance $G,\alpha D$, where we multiplied the demand matrix by $\alpha\in (0,\infty)$. Then if we keep the paths of the $G,D$ instance and multiply the path flows by $\alpha$ we obtain an optimal solution to $G,\alpha D$.

Formally we prove the following
\begin{lemma}
\label{lemma:double_D}
    Let $(G,D)$ be a TAP instance with $\phi(x)=x^\gamma$  with $\gamma>1$, Suppose $\PM=\{\PM_{xy}\}_{x,y\in V\times V}$, with 
    \begin{equation}
        \PM_{xy}=\left\{(\pi^a_{xy},h^a_{xy}),\;(x,y)\in(V\times V),\, a\in\{1,\dots,R_{xy}\} \right\}
    \end{equation}
    is an optimal set of paths and path flows for $(G,D)$.
    Let $\alpha\in (0,\infty)$ and consider the TAP instance $(G,\alpha D)$.
    \textbf{Then}
    $\PM^\alpha=\{\PM_{xy}^\alpha\}_{x,y\in V\times V}$, with 
    \begin{equation}
        \PM_{xy}^\alpha=\left\{(\pi^a_{xy},\alpha h^a_{xy}),\;(x,y)\in(V\times V),\, a\in\{1,\dots,R_{xy}\} \right\},
    \end{equation}
    is optimal for the instance $(G,\alpha D)$.
\end{lemma}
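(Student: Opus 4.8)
The plan is to exploit the fact that both the feasible region and the objective of TAP are positively homogeneous in the path flows, so that rescaling the demand matrix by $\alpha>0$ merely rescales every feasible configuration together with its cost, and hence cannot change which configuration is optimal.

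Write $\mathcal F_D$ for the set of path flows $\pmb h=\{h_{xy}^a\}$ feasible for $(G,D)$, i.e.\ those satisfying the demand equalities $\sum_a h_{xy}^a=D_{xy}$ and the nonnegativity constraints $h_{xy}^a\ge 0$ in \eqref{eq:TAP_ITAP_formal}; recall that the candidate path sets $\Pi_{xy}$ depend only on $G$, and that the edge flow $I_e(\pmb h)=\sum_{x,y}\sum_a h_{xy}^a\,\mathbb I[e\in\pi_{xy}^a]$ is linear in $\pmb h$. First I would observe that, since $\alpha>0$, the map $\pmb h\mapsto\alpha\pmb h$ is a bijection of $\mathcal F_D$ onto $\mathcal F_{\alpha D}$: the demand equalities rescale from $D_{xy}$ to $\alpha D_{xy}$ and nonnegativity is preserved. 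Next I would record how the objective transforms: by linearity $I_e(\alpha\pmb h)=\alpha\,I_e(\pmb h)$, hence with $\phi(x)=x^\gamma$ one gets $H(\alpha\pmb h)=\sum_e\big(\alpha I_e(\pmb h)\big)^\gamma=\alpha^\gamma H(\pmb h)$. Combining the two, for any $\pmb g\in\mathcal F_{\alpha D}$ we may write $\pmb g=\alpha\pmb h$ with $\pmb h=\alpha^{-1}\pmb g\in\mathcal F_D$, so $H(\pmb g)=\alpha^\gamma H(\pmb h)\ge\alpha^\gamma H(\pmb h^\star)=H(\alpha\pmb h^\star)$ by optimality of $\pmb h^\star\coloneqq\{h_{xy}^a\}$ for $(G,D)$. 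Thus $\alpha\pmb h^\star$ — which is precisely the path-flow configuration recorded in $\PM^\alpha$ — minimizes $H$ over $\mathcal F_{\alpha D}$, i.e.\ it is optimal for $(G,\alpha D)$.

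I do not expect a genuine obstacle; the argument is a one-line homogeneity computation. The only points needing care are: (i) reading ``optimal path flows'' as ``a minimiser of $H$ over the path-flow feasible set'' rather than merely a configuration inducing the optimal edge flows, so that the bijection argument applies verbatim; and (ii) checking that the TAP constraints are homogeneous of degree one in $\pmb h$, which is immediate from \eqref{eq:TAP_ITAP_formal}. It is worth noting that convexity ($\gamma>1$) is not used in the argument itself; it only enters if, as in the surrounding discussion, one additionally wants the optimal \emph{edge} flows to be unique — these then also scale by $\alpha$, and since $\phi'(\alpha t)=\alpha^{\gamma-1}\phi'(t)$ the Wardrop weights $\phi'(I_e^\star)$ all rescale by the common positive factor $\alpha^{\gamma-1}$, leaving the support paths $\mathcal S_{xy}(\pmb I^\star)=\argmin_{\pi\in\Pi_{xy}}\sum_{e\in\pi}\phi'(I_e^\star)$ unchanged.
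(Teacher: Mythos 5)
Your proof is correct, but it takes a genuinely different route from the paper. The paper proceeds via the KKT conditions: it takes the multipliers $\lambda_{xy},\mu_{xy}^\pi$ certifying optimality of $\pmb h^\star$ for $(G,D)$, rescales them by $\alpha^{\gamma-1}$ (using $\phi'(\alpha x)=\alpha^{\gamma-1}\phi'(x)$), and verifies that the rescaled flows $\alpha\pmb h^\star$ together with the rescaled multipliers satisfy stationarity, primal and dual feasibility, and complementary slackness for $(G,\alpha D)$; this implicitly uses convexity ($\gamma>1$) so that the KKT conditions are \emph{sufficient} for global optimality. Your argument instead observes that $\pmb h\mapsto\alpha\pmb h$ is a bijection between the feasible sets $\mathcal F_D$ and $\mathcal F_{\alpha D}$ and that $H$ is positively homogeneous of degree $\gamma$, so minimizers map to minimizers. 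This is more elementary and strictly more general: it never invokes convexity and would apply to any $\gamma>0$ (in particular to the concave, attractive case), whereas the paper's KKT verification is tied to the convex regime. Your closing remark correctly identifies where convexity actually matters for the surrounding discussion, namely uniqueness of the optimal edge flows and the invariance of the Wardrop support paths under the common rescaling $\phi'(I_e^\star)\mapsto\alpha^{\gamma-1}\phi'(I_e^\star)$. Both proofs ultimately rest on the same homogeneity of $\phi(x)=x^\gamma$; yours exposes it at the level of the optimization problem itself, the paper's at the level of the optimality certificate.
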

\begin{proof}
Consider the KKT conditions for the instance $G,D$. We derived these conditions in the proof of \ref{prop:wardrop_2_principle_app}.
From the Lagrangian \eqref{eq:lagrangian_TAP} we have the following

The stationarity condition imposes
\begin{equation}
\label{eq:stationarity_doubleD}
    \sum_{e\in \pi}\phi'(I_e)+\lambda_{xy}+\mu_{xy}^\pi=0 \quad \forall (x,y),\,\forall \pi\in \Pi_{xy}.
\end{equation}
The primal feasibility imposes 
\begin{align}
\label{eq:primal_feas_doubleD}
    &\sum_{\pi\in\Pi_{xy}} h_{xy}^\pi=D_{xy},\quad \forall (x,y)\\
    & h_{xy}^\pi\geq0, \quad \forall (x,y), \,\forall \pi\in\Pi_{xy}.
\end{align}
The dual feasibility imposes $\mu_{xy}^\pi\geq 0, $ $\forall (x,y),\,\forall \pi\in\Pi_{xy}$.
And finally the complementary slackness imposes $\mu_{xy}^\pi h_{xy}^\pi=0,$ $\forall (x,y),\,\forall \pi\in \Pi_{xy}$.
Suppose the path flows $\{h_{xy}\}_{x,y\in(V\times V), \pi\in \Pi}$ together with the multipliers $\{\lambda_{xy}\}_{x,y \in (V \times V)},\{\mu_{xy}^\pi\}_{(x,y)\in (V\times V), \pi\in \Pi}$ satisfy the KKT equations.

Consider the following path flows $\{\alpha h_{xy}\}_{x,y\in(V\times V), \pi\in \Pi}$ together with the multipliers $\{\alpha^{\gamma-1}\lambda_{xy}\}_{x,y \in (V \times V)}$, \\$\{\alpha^{\gamma-1}\mu_{xy}^\pi\}_{(x,y)\in (V\times V), \pi\in \Pi}$. We now prove that these satisfy the KKT equations for the instance $(G,\alpha D)$.
\begin{itemize}
    \item Let's start from the stationarity condition. The fact that all the path flows are rescaled by $\alpha$ implies that the same happens to the edge flows. Therefore $\phi'(I_e)$ in \eqref{eq:stationarity_doubleD} now becomes $\phi'(\alpha I_e)$. Using $\phi'(x)=\gamma x^{\gamma-1}$ we have that $\phi'(\alpha x)=\alpha^{\gamma-1}\phi'(x)$. Plugging in the new values of the multipliers the equation reads
    \begin{equation}
         \sum_{e\in \pi}\alpha^{\gamma-1}\phi'(\alpha I_e)+\alpha^{\gamma-1}\lambda_{xy}+\alpha^{\gamma-1}\mu_{xy}^\pi=0 \quad \forall (x,y),\,\forall \pi\in \Pi_{xy}.
    \end{equation}
    which is just \eqref{eq:stationarity_doubleD} multiplied by $\alpha^{\gamma-1}$ and therefore is satisfied.
    \item The primal feasibility imposes 
\begin{align}
    &\sum_{\pi\in\Pi_{xy}} \alpha h_{xy}^\pi=\alpha D_{xy},\quad \forall (x,y)\\
    & h_{xy}^\pi\geq0, \quad \forall (x,y), \,\forall \pi\in\Pi_{xy}.
\end{align}
The first equation is just \eqref{eq:primal_feas_doubleD} multiplied by $\alpha$ and therefore holds. The second equation is also satisfied, since if $h_{xy}^\pi>0$ then also $\alpha h_{xy}^\pi>0$.
\item The complementary slackness condition is also satisfied by the new variables. Finally the condition $\alpha^{\gamma-1} \mu_{xy}^\pi\geq0$ is also satisfied since $\alpha>0$ and from the hypotheses $\mu_{xy}^\pi\geq0$.
\end{itemize}

In short, the homogeneity of the nonlinearity is the key property that allows just rescale the path flows to obtain a solution to the problem with rescaled demand matrix.
\end{proof}

\end{document}